\title{Process Algebra with Conditionals\\ in the Presence of Epsilon}
\author{J.A. Bergstra \and C.A. Middelburg}
\institute{Section Theory of Computation, Informatics Institute,
           University of Amsterdam \\
           Science Park~904, 1098~XH Amsterdam, the Netherlands \\
           \email{J.A.Bergstra@uva.nl, C.A.Middelburg@uva.nl}
          }
\begin{document}
\maketitle

\begin{abstract}
% 77 %
In a previous paper, we presented several extensions of \ACP\ with
conditional expressions, including one with a retrospection operator on
conditions to allow for looking back on conditions under which preceding
actions have been performed.
In this paper, we add a constant for a process that is only capable of
terminating successfully to those extensions of \ACP, which can be very
useful in applications.
It happens that in all cases the addition of this constant is
unproblematic.
\\[1.5ex]
{\sl Keywords:}
% 7 %
empty process, retrospective conditions, condition evaluation,
state operators, signal emission, splitting bisimulation,
process algebra.
\\[1.5ex]
{\sl 1998 CR Categories:}
D.1.3, D.2.1, D.2.4, F.1.2, F.3.1, F.3.2.
\end{abstract}

\section{Introduction}
\label{sect-intro}

In~\cite{BM05a}, we presented several extensions of
\ACP~\cite{BK84b,BW90} with conditional expressions.
The main extensions of \ACP\ presented in~\cite{BM05a} are \ACPc,
an extension of \ACP\ with conditional expressions of the form
$\zeta \gc p$ in which the conditions are taken from a free Boolean
algebra over a set of generators, \ACPcs, an extension \ACPc\ with a
signal emission operator on processes, and \ACPcr, an extension of
\ACPc\ with a retrospection operator on conditions.
Signal emission is usable for a special kind of condition evaluation.
Retrospection allows for looking back on conditions under which
preceding actions have been performed.
We also extended \ACPc\ and \ACPcr\ with operators devised for condition
evaluation and we outlined an application of \ACPcr\ in which it allows
for using conditions which express that a certain number of steps ago a
certain action must have been performed.

% We also presented the main models of \ACPc, \ACPcs\ and \ACPcr,
% which are based on labelled transition systems of which the labels
% consist of a condition and an action, called conditional transition
% systems, and a variant of bisimilarity in which a transition of one
% of the related transition systems may be simulated by a set of
% transitions of the other transition system, called splitting
% bisimilarity.

In this paper, a constant for a process that is only capable of
terminating successfully is added to the different extensions of \ACP\
presented in~\cite{BM05a}.
This constant is often referred to as the empty process constant.
In the past, the addition of the empty process constant to \ACP\ has
been treated in several ways.
The treatment in~\cite{KV85a} yields a non-associative parallel
composition operator.
The first treatment that yields an associative parallel composition
operator~\cite{Vra97a} is from 1986, but was not published until 1997.
The addition of the empty process constant to different extensions of
\ACP\ in this paper is based on~\cite{BG87c}.

It is clear from early work~\cite{KV85a,Vra97a} that the addition of the
empty process constant to \ACP\ was rather problematic.
Its addition to the different extensions of \ACP\ with conditional
expressions presented in~\cite{BM05a} turns out to present no additional
complications.
For that reason, we look upon this paper in its current form primarily
as supplementary material to~\cite{BM05a}.

The structure of this paper is as follows.
First of all, we introduce \ACPec, the extension of \ACPc\ with the
empty process constant (Section~\ref{sect-ACPec}).
After that, we introduce conditional transition systems and splitting
bisimilarity of conditional transition systems
(Section~\ref{sect-ts-bisim}) and the full splitting bisimulation
models of \ACPec, the main models of \ACPec\
(Section~\ref{sect-full-bisim-ACPec}).
Following this, we have a closer look at splitting bisimilarity based
on structural operational semantics
(Section~\ref{sect-sos-based-bisim}).
Next, we extend \ACPec\ with guarded recursion
(Section~\ref{sect-recursion}).
Thereupon, we extend \ACPec\ with condition evaluation operators
(Section~\ref{sect-cond-eval}), with state operators
(Section~\ref{sect-state-operator}) and with a signal emission operator
(Section~\ref{sect-signal-emission}); and analyse how those operators
are related.
We also adapt the full splitting bisimulation models of \ACPec\
to the full signal-observing splitting bisimulation models of \ACPecs,
the extension of \ACPec\ with signal emission
(Section~\ref{sect-full-bisim-ACPecs}).
After that, we extend \ACPec\ with a retrospection operator
(Section~\ref{sect-ACPecr}) and adapt the full splitting bisimulation
models of \ACPec\ to the full retrospective splitting bisimulation
models of \ACPecr, the extension of \ACPec\ with retrospection
(Section~\ref{sect-full-bisim-ACPecr}).
Thereupon, we extend \ACPecr\ with condition evaluation operators as
well (Section~\ref{sect-cond-eval-retro}).
We also outline an interesting application of \ACPecr\
(Section~\ref{sect-appl-retro}).
Finally, we make some concluding remarks
(Section~\ref{sect-conclusions}).

Some familiarity with Boolean algebras is desirable.
The definitions of all notions concerning Boolean algebras that are used
can be found in~\cite{MB89a}.

We thank Jan van Eijck.
He communicated an application of \ACPc\ to us which involves a register
update mechanism that cannot be dealt with in full generality without
the empty process constant.
This forms the greater part of our motivation to work out the addition
of the empty process constant to \ACPc.

\section{\ACPe\ with Conditions}
\label{sect-ACPec}

In this section, we present \ACPec, an extension of
\ACPe~\cite{BG87c,BW90} with conditional expressions of the form
$\zeta \gc p$.
\ACPec\ can be regarded as an extension of \ACPc~\cite{BM05a} with the
empty process constant too.
In \ACPec, as in \ACPe, it is assumed that a fixed but arbitrary finite
set of \emph{actions} $\Act$, with $\dead,\ep \not\in \Act$, and a fixed
but arbitrary commutative and associative \emph{communication} function
$\funct{\commm}{\Actd \x \Actd}{\Actd}$, such that
$\dead \commm a = \dead$ for all $a \in \Actd$, have been given.
The function $\commm$ is regarded to give the result of synchronously
performing any two actions for which this is possible, and to be $\dead$
otherwise.
Moreover, it is assumed that a fixed but arbitrary set of
\emph{atomic conditions} $\ACond$ has been given.

Let $\kappa$ be an infinite cardinal.
Then $\CondAlg{\kappa}$ is the free $\kappa$-complete Boolean algebra
over $\ACond$.%
\footnote
{For a definition of free $\kappa$-complete Boolean algebras, see
 e.g.~\cite{MB89a}.}
As usual, we identify Boolean algebras with their domain.
Thus, we also write $\CondAlg{\kappa}$ for the domain of
$\CondAlg{\kappa}$.
If $\kappa$ is regular,%
\footnote
{For a definition of regular cardinals, see e.g.~\cite{Sho67a,CK90a}.
 They include $\aleph_0$, $\aleph_1$, $\aleph_2$, \ldots\,.}
then $\CondAlg{\kappa}$ is isomorphic to the Boolean algebra of
equivalence classes with respect to logical equivalence of the set of
all propositions with elements of $\ACond$ as propositional variables
and with conjunctions and disjunctions of less than $\kappa$
propositions (see e.g.~\cite{MB89a}).
In \ACPec, conditions are taken from $\CondAlg{\aleph_0}$.
If $\ACond$ is a finite set, then
$\CondAlg{\kappa} = \CondAlg{\aleph_0}$ for all cardinals
$\kappa > \aleph_0$.
We are also interested in $\CondAlg{\kappa}$ for cardinals
$\kappa > \aleph_0$ because it permits us to consider infinitely
branching processes in the case where $\ACond$ is an infinite set.
Henceforth, we write $\FinCondAlg$ for $\CondAlg{\aleph_0}$.

The algebraic theory \ACPec\ has two sorts:
\begin{iteml}
\item
the sort $\Proc$ of \emph{processes};
\item
the sort $\Cond$ of \emph{conditions}.
\end{iteml}
The algebraic theory \ACPec\ has the following constants and
operators to  build terms of sort $\Proc$:
\begin{iteml}
\item
the \emph{deadlock} constant $\const{\dead}{\Proc}$;
\item
the \emph{empty process} constant $\const{\ep}{\Proc}$;
\item
for each $a \in \Act$, the \emph{action} constant $\const{a}{\Proc}$;
\item
the binary \emph{alternative composition} operator
$\funct{\altc}{\Proc \x \Proc}{\Proc}$;
\item
the binary \emph{sequential composition} operator
$\funct{\seqc}{\Proc \x \Proc}{\Proc}$;
\item
the binary \emph{guarded command} operator
$\funct{\gc}{\Cond \x \Proc}{\Proc}$;
\item
the binary \emph{parallel composition} operator
$\funct{\parc}{\Proc \x \Proc}{\Proc}$;
\item
the binary \emph{left merge} operator
$\funct{\leftm}{\Proc \x \Proc}{\Proc}$;
\item
the binary \emph{communication merge} operator
$\funct{\commm}{\Proc \x \Proc}{\Proc}$;
\item
for each $H \subseteq \Act$,
the unary \emph{encapsulation} operator
$\funct{\encap{H}}{\Proc}{\Proc}$.
\end{iteml}
The algebraic theory \ACPec\ has the following constants and
operators to build terms of sort $\Cond$:
\begin{iteml}
\item
the \emph{bottom} constant $\const{\bot}{\Cond}$;
\item
the \emph{top} constant $\const{\top}{\Cond}$;
\item
for each $\eta \in \ACond$, the \emph{atomic condition} constant
$\const{\eta}{\Cond}$;
\item
the unary \emph{complement} operator $\funct{-}{\Cond}{\Cond}$;
\item
the binary \emph{join} operator $\funct{\join}{\Cond \x \Cond}{\Cond}$;
\item
the binary \emph{meet} operator $\funct{\meet}{\Cond \x \Cond}{\Cond}$.
\end{iteml}
We use infix notation for the binary operators.
The following precedence conventions are used to reduce the need for
parentheses.
The operators to build terms of sort $\Cond$ bind stronger than the
operators to build terms of sort $\Proc$.
The operator ${} \seqc {}$ binds stronger than all other binary
operators to build terms of sort $\Proc$ and the operator ${} \altc {}$
binds weaker than all other binary operators to build terms of sort
$\Proc$.

The constants and operators of \ACPec\ to build terms of sort $\Proc$
are the constants and operators of \ACPe\ and additionally the guarded
command operator.
Let $p$ and $q$ be closed terms of sort $\Proc$ and $\zeta$ and $\xi$
be closed terms of sort $\Cond$, $a \in \Act$, $H \subseteq \Act$, and
$\eta \in \ACond$.
Then, intuitively, the constants and operators to build terms of sort
$\Proc$ can be explained as follows:
\begin{iteml}
\item
$\dead$ can neither perform an action nor terminate successfully;
\item
$\ep$ terminates successfully, unconditionally;
\item
$a$ first performs action $a$ and then terminates successfully, both
unconditionally;
\item
$p \altc q$ behaves either as $p$ or as $q$, but not both;
\item
$p \seqc q$ first behaves as $p$, but when $p$ terminates successfully
it continues by behaving as $q$;
\item
$\zeta \gc p$ behaves as $p$ under condition $\zeta$;
\item
$p \parc q$ behaves as the process that proceeds with $p$ and $q$ in
parallel;
\item
$p \leftm q$ behaves the same as $p \parc q$, except that it starts
with performing an action of $p$;
\item
$p \commm q$ behaves the same as $p \parc q$, except that it starts
with performing an action of $p$ and an action of $q$ synchronously;
\item
$\encap{H}(p)$ behaves the same as $p$, except that actions from $H$ are
blocked.
\end{iteml}
Intuitively, the constants and operators to build terms of sort $\Cond$
can be explained as follows:
\begin{itemize}
\item
$\eta$ is an atomic condition;
\item
$\bot$ is a condition that never holds;
\item
$\top$ is a condition that always holds;
\item
$\bcompl \zeta$ is the opposite of $\zeta$;
\item
$\zeta \join \xi$ is either $\zeta$ or $\xi$;
\item
$\zeta \meet \xi$ is both $\zeta$ and $\xi$.
\end{itemize}

Some earlier extensions of \ACP\ include conditional expressions of
the form $\cond{p}{\zeta}{q}$; see e.g.~\cite{BB92c}.
Just as in~\cite{BM05a}, we treat conditional expressions of the form
$\cond{p}{\zeta}{q}$, where $p$ and $q$ are terms of sort $\Proc$ and
$\zeta$ is a term of sort $\Cond$, as abbreviations.
That is, we write $\cond{p}{\zeta}{q}$ for
$\zeta \gc p \altc \bcompl{\zeta} \gc q$.

The axioms of \ACPec\ are given in Table~\ref{axioms-ACPec}.%
\begin{table}[!t]
\lcaption{Axioms of \ACPec}{$a,b,c \in \Actd$}
\label{axioms-ACPec}
\begin{eqntbl}
\begin{axcol}
x \altc y = y \altc x                                   & \axiom{A1}\\
(x \altc y) \altc z = x \altc (y \altc z)               & \axiom{A2}\\
x \altc x = x                                           & \axiom{A3}\\
(x \altc y) \seqc z = x \seqc z \altc y \seqc z         & \axiom{A4}\\
(x \seqc y) \seqc z = x \seqc (y \seqc z)               & \axiom{A5}\\
x \altc \dead = x                                       & \axiom{A6}\\
\dead \seqc x = \dead                                   & \axiom{A7}\\
x \seqc \ep = x                                         & \axiom{A8}\\
\ep \seqc x = x                                         & \axiom{A9}\\
{}                                                                  \\
x \parc y = x \leftm y \altc y \leftm x \altc x \commm y \altc {}
   \;\; \\ \hfill \encap{\Act}(x) \seqc \encap{\Act}(y) & \axiom{CM1T}\\
\ep \leftm x = \dead                                    & \axiom{TM2} \\
a \seqc x \leftm y = a \seqc (x \parc y)                & \axiom{CM3} \\
(x \altc y) \leftm z = x \leftm z \altc y \leftm z      & \axiom{CM4} \\
\ep \commm x = \dead                                    & \axiom{TM5} \\
x \commm \ep = \dead                                    & \axiom{TM6} \\
a \seqc x \commm b \seqc y =
                         (a \commm b) \seqc (x \parc y) & \axiom{CM7} \\
(x \altc y) \commm z = x \commm z \altc y \commm z      & \axiom{CM8} \\
x \commm (y \altc z) = x \commm y \altc x \commm z      & \axiom{CM9} \\
{}                                                                \\
a \commm b = b \commm a                                 & \axiom{C1} \\
(a \commm b) \commm c = a \commm (b \commm c)           & \axiom{C2} \\
\dead \commm a = \dead                                  & \axiom{C3}
\end{axcol}
\quad\;
\begin{axcol}
\encap{H}(\ep) = \ep                                    & \axiom{D0} \\
\encap{H}(a) = a                \hfill \mif a \not\in H & \axiom{D1} \\
\encap{H}(a) = \dead                \hfill \mif a \in H & \axiom{D2} \\
\encap{H}(x \altc y) = \encap{H}(x) \altc \encap{H}(y)  & \axiom{D3} \\
\encap{H}(x \seqc y) = \encap{H}(x) \seqc \encap{H}(y)  & \axiom{D4} \\
{}                                                                   \\
\top \gc x = x                                          & \axiom{GC1} \\
\bot \gc x = \dead                                      & \axiom{GC2} \\
\phi \gc \dead = \dead                                  & \axiom{GC3} \\
\phi \gc (x \altc y) = \phi \gc x \altc \phi \gc y      & \axiom{GC4} \\
\phi \gc x \seqc y = (\phi \gc x) \seqc y               & \axiom{GC5} \\
\phi \gc (\psi \gc x) = (\phi \meet \psi) \gc x
                                                        & \axiom{GC6} \\
(\phi \join \psi) \gc x = \phi \gc x \altc \psi \gc x
                                                        & \axiom{GC7} \\
(\phi \gc x) \leftm y = \phi \gc (x \leftm y)           & \axiom{GC8} \\
(\phi \gc x) \commm y = \phi \gc (x \commm y)           & \axiom{GC9} \\
x \commm (\phi \gc y) = \phi \gc (x \commm y)           & \axiom{GC10} \\
\encap{H}(\phi \gc x) = \phi \gc \encap{H}(x)           & \axiom{GC11} \\
{}                                                               \\
\phi \join \bot = \phi                                  & \axiom{BA1} \\
\phi \join \bcompl{\phi} = \top                         & \axiom{BA2} \\
\phi \join \psi = \psi \join \phi                       & \axiom{BA3} \\
\phi \join (\psi \meet \chi) =
              (\phi \join \psi) \meet (\phi \join \chi) & \axiom{BA4} \\
\phi \meet \top = \phi                                  & \axiom{BA5} \\
\phi \meet \bcompl{\phi} = \bot                         & \axiom{BA6} \\
\phi \meet \psi = \psi \meet \phi                       & \axiom{BA7} \\
\phi \meet (\psi \join \chi) =
              (\phi \meet \psi) \join (\phi \meet \chi) & \axiom{BA8}
\end{axcol}
\end{eqntbl}
\end{table}
CM3, CM7, C1--C3 and D1--D2 are actually axiom schemas in which $a$, $b$
and $c$ stand for arbitrary constants of \ACPec\ that differ from $\ep$
(i.e.\ $a,b,c \in \Actd$).
In D0--D4, $H$ stands for an arbitrary subset of $\Act$.
So, D0, D3 and D4 are axiom schemas as well.
Axioms A1--A9, CM1T, TM2, CM3, CM4, TM5, TM6, CM7--CM9, C1--C3 and
D0--D4 are the axioms of \ACPe.
Axioms BA1--BA8 are the axioms of Boolean Algebras (BA).
So \ACPec\ imports the (equational) axioms of both \ACPe\ and BA.
The axioms of BA have been taken from~\cite{Hal63a}.
Several alternatives for this axiomatization can be found in the
literature.
Axioms GC1--GC11 have been taken from~\cite{BB92c}, but the axiom
$\cond{x \seqc z}{\phi}{y \seqc z} = (\cond{x}{\phi}{y}) \seqc z$ (CO5)
is replaced by the simpler axiom
$\phi \gc x \seqc y = (\phi \gc x) \seqc y$ (GC5)
and similarly for axioms GC8--GC11.

The terms of sort $\Cond$ are interpreted in $\FinCondAlg$ as usual.

We proceed to the presentation of the structural operational semantics
of \ACPec.
The following relations on closed terms of sort $\Proc$ from the
language of \ACPec\ are used:
\begin{iteml}
\item
for each $\alpha \in \FinCondAlg \diff \set{\bot}$,
a unary relation ${\sterm{\alpha}}$;
\item
for each $\ell \in (\FinCondAlg \diff \set{\bot}) \x \Act$,
a binary relation ${\step{\ell}}$.
\end{iteml}
We write $\isterm{p}{\alpha}$ instead of
$p \in {\sterm{\alpha}}$ and
$\astep{p}{\gact{\alpha}{a}}{q}$ instead of
$\tup{p,q} \in {\step{\tup{\alpha,a}}}$.
The relations ${\sterm{\alpha}}$ and ${\step{\ell}}$ can be explained as
follows:
\begin{iteml}
\item
$\isterm{p}{\alpha}$:
$p$ is capable of terminating successfully under condition $\alpha$;
\item
$\astep{p}{\gact{\alpha}{a}}{q}$:
$p$ is capable of performing action $a$ under condition $\alpha$ and
then proceeding as $q$.
\end{iteml}
The structural operational semantics of \ACPec\ is
described by the transition rules given in Table~\ref{sos-ACPec}.%
\begin{table}[!t]
\caption{Transition rules for \ACPec}
\label{sos-ACPec}
\begin{ruletbl}
\Rule
{\phantom{\isterm{\top}{\ep}}}
{\isterm{\ep}{\top}}
\qquad
\Rule
{\phantom{\astep{a}{\gact{\top}{a}}{\ep}}}
{\astep{a}{\gact{\top}{a}}{\ep}}
\\
\Rule
{\isterm{x}{\phi}}
{\isterm{x \altc y}{\phi}}
\quad
\Rule
{\isterm{y}{\phi}}
{\isterm{x \altc y}{\phi}}
\quad
\Rule
{\astep{x}{\gact{\phi}{a}}{x'}}
{\astep{x \altc y}{\gact{\phi}{a}}{x'}}
\quad
\Rule
{\astep{y}{\gact{\phi}{a}}{y'}}
{\astep{x \altc y}{\gact{\phi}{a}}{y'}}
\\
\Rule
{\isterm{x}{\phi},\; \isterm{y}{\psi}}
{\isterm{x \seqc y}{\phi \meet \psi}}
\quad
\RuleC
{\isterm{x}{\phi},\; \astep{y}{\gact{\psi}{a}}{y'}}
{\astep{x \seqc y}{\gact{\phi \meet \psi}{a}}{y'}}
{\phi \meet \psi \neq \bot}
\quad
\Rule
{\astep{x}{\gact{\phi}{a}}{x'}}
{\astep{x \seqc y}{\gact{\phi}{a}}{x' \seqc y}}
\\
\RuleC
{\isterm{x}{\phi}}
{\isterm{\psi \gc x}{\phi \meet \psi}}
{\phi \meet \psi \neq \bot}
\quad
\RuleC
{\astep{x}{\gact{\phi}{a}}{x'}}
{\astep{\psi \gc x}{\gact{\phi \meet \psi}{a}}{x'}}
{\phi \meet \psi \neq \bot}
\\
\RuleC
{\isterm{x}{\phi},\; \isterm{y}{\psi}}
{\isterm{x \parc y}{\phi \meet \psi}}
{\phi \meet \psi \neq \bot}
\quad
\Rule
{\astep{x}{\gact{\phi}{a}}{x'}}
{\astep{x \parc y}{\gact{\phi}{a}}{x' \parc y}}
\quad
\Rule
{\astep{y}{\gact{\phi}{a}}{y'}}
{\astep{x \parc y}{\gact{\phi}{a}}{x \parc y'}}
\\
\RuleC
{\astep{x}{\gact{\phi}{a}}{x'},\; \astep{y}{\gact{\psi}{b}}{y'}}
{\astep{x \parc y}{\gact{\phi \meet \psi}{c}}{x' \parc y'}}
{a \commm b = c,\; \phi \meet \psi \neq \bot}
\\
\Rule
{\astep{x}{\gact{\phi}{a}}{x'}}
{\astep{x \leftm y}{\gact{\phi}{a}}{x' \parc y}}
\quad
\RuleC
{\astep{x}{\gact{\phi}{a}}{x'},\; \astep{y}{\gact{\psi}{b}}{y'}}
{\astep{x \commm y}{\gact{\phi \meet \psi}{c}}{x' \parc y'}}
{a \commm b = c,\; \phi \meet \psi \neq \bot}
\\
\Rule
{\isterm{x}{\phi}}
{\isterm{\encap{H}(x)}{\phi}}
\quad
\RuleC
{\astep{x}{\gact{\phi}{a}}{x'}}
{\astep{\encap{H}(x)}{\gact{\phi}{a}}{\encap{H}(x')}}
{a \not\in H}
\end{ruletbl}
\end{table}

\section{Transition Systems and Splitting Bisimilarity for \ACPec}
\label{sect-ts-bisim}

In this section, we adapt the definitions of conditional transition
systems and splitting bisimilarity of conditional transition systems
from~\cite{BM05a} to the presence of a process that is only capable of
terminating successfully.
In Section~\ref{sect-full-bisim-ACPec}, we will make use of conditional
transition systems and splitting bisimilarity of conditional transition
systems as defined in this section to construct models of \ACPec.

The transitions of conditional transition systems have labels that
consist of a condition different from $\bot$ and an action.
Labels of this kind are sometimes called \emph{guarded actions}.
Henceforth, we write $\CondAlgm{\kappa}$ for
$\CondAlg{\kappa} \diff \set{\bot}$.

Let $\kappa$ be an infinite cardinal.
Then a $\kappa$-\emph{conditional transition system} $T$ consists of
the following:
\begin{iteml}
\item
a set $S$ of \emph{states};
\item
a set ${\step{\ell}} \subseteq S \x S$,
for each $\ell \in \CondAlgm{\kappa} \x \Act$;
\item
a set ${\sterm{\alpha}} \subseteq S$,
for each $\alpha \in \CondAlgm{\kappa}$;
\item
an \emph{initial state} $s^0 \in S$.
\end{iteml}
If $\tup{s,s'} \in {\step{\ell}}$ for some
$\ell \in \CondAlgm{\kappa} \x \Act$, then we say that there
is a \emph{transition} from $s$ to $s'$.
We usually write $\astep{s}{\gact{\alpha}{a}}{s'}$ instead of
$\tup{s,s'} \in {\step{\tup{\alpha,a}}}$ and
$\isterm{s}{\alpha}$ instead of
$s \in {\sterm{\alpha}}$.
Furthermore, we write ${\stepf}$ for the family of sets
$(\step{\ell})_{\ell \in \CondAlgm{\kappa} \x \Act}$ and
${\stermf}$ for the family of sets
$(\sterm{\alpha})_{\alpha \in \CondAlgm{\kappa}}$.

The relations ${\sterm{\alpha}}$ and ${\step{\ell}}$ can be explained as
follows:
\begin{iteml}
\item
$\isterm{s}{\alpha}$:
in state $s$, it is possible to terminate successfully under condition
$\alpha$;
\item
$\astep{s}{\gact{\alpha}{a}}{s'}$:
in state $s$, it is possible to perform action $a$ under condition
$\alpha$, and by doing so to make a transition to state $s'$.
\end{iteml}

A conditional transition system may have states that are not reachable
from its initial state by a sequence of transitions.
Unreachable states, and the transitions between them, are not relevant
to the behaviour represented by the transition system.

Let $T = \tup{S,{\stepf},{\stermf},s^0}$ be a $\kappa$-conditional
transition system (for an infinite cardinal $\kappa$).
Then the \emph{reachability} relation of $T$ is the
smallest relation ${\reach{}{}} \subseteq S \x S$ such that:
\begin{iteml}
\item
$\reach{s}{s}$;
\item
if $\astep{s}{\ell}{s'}$ and $\reach{s'}{s''}$, then $\reach{s}{s''}$.
\end{iteml}
We write $\rs(T)$ for $\set{s \in S \where \reach{s^0}{s}}$.
$T$ is called a \emph{connected} $\kappa$-conditional transition system
if $S = \rs(T)$.

Henceforth, we will only consider connected conditional transition
systems.
However, this often calls for extraction of the connected part of a
conditional transition system resulting from composition of connected
conditional transition systems.

Let $T = \tup{S,{\stepf},{\stermf},s^0}$ be a $\kappa$-conditional
transition system (for an infinite cardinal $\kappa$) that is not
necessarily connected.
Then the \emph{connected part} of $T$, written $\conn(T)$, is defined
as follows:
\begin{ldispl}
\conn(T) = \tup{S',{\stepf}',{\stermf}',s^0}\;,
\end{ldispl}
where
\begin{ldispl}
\begin{aeqns}
S' & = & \rs(T)\;,
\end{aeqns}
\end{ldispl}
and for every $\ell \in \CondAlgm{\kappa} \x \Act$
and $\alpha \in \CondAlgm{\kappa}$:
\begin{ldispl}
\begin{aeqns}
{\stepp{\ell}} & = & {\step{\ell}} \inter (S' \x S')\;,
\\
{\stermp{\alpha}} & = & {\sterm{\alpha}} \inter S'\;.
\end{aeqns}
\end{ldispl}

It is assumed that for each infinite cardinal $\kappa$ a fixed but
arbitrary set $\States{\kappa}$ with the following properties has been
given:
\begin{iteml}
\item
the cardinality of $\States{\kappa}$ is greater than or equal to
$\kappa$;
\item
if   $S_1,S_2 \subseteq \States{\kappa}$,
then $S_1 \dunion S_2 \subseteq \States{\kappa}$
and  $S_1 \x S_2 \subseteq \States{\kappa}$.%
\footnote
{We write $A \dunion B$ for the disjoint union of sets $A$ and $B$,
 i.e.\ $A \dunion B = (A \x \set{\emptyset}) \union
                      (B \x \set{\set{\emptyset}})$.
 We write $\inj{1}$ and $\inj{2}$ for the associated injections
 $\funct{\inj{1}}{A}{A \dunion B}$ and
 $\funct{\inj{2}}{B}{A \dunion B}$, defined by
 $\inj{1}(a) = \tup{a,\emptyset}$ and
 $\inj{2}(b) = \tup{b,\set{\emptyset}}$.}
\end{iteml}

Let $\kappa$ be an infinite cardinal.
Then $\CTSe_\kappa$ is the set of all connected $\kappa$-conditional
transition systems $T = \tup{S,{\stepf},{\stermf},s^0}$ such that
$S \subset \States{\kappa}$ and the branching degree of $T$ is less
than $\kappa$, i.e.\ for all $s \in S$, the cardinality of the set
$\set{\tup{\ell,s'} \in (\CondAlgm{\kappa} \x \Act) \x S \where
      \tup{s,s'} \in {\step{\ell}}} \union
 \set{\alpha \in \CondAlgm{\kappa} \where s \in {\sterm{\alpha}}}$
is less than $\kappa$.

The condition $S \subset \States{\kappa}$ guarantees that $\CTSe_\kappa$
is indeed a set.

A conditional transition system is said to be \emph{finitely branching}
if its branching degree is less than $\aleph_0$.
Otherwise, it is said to be \emph{infinitely branching}.

The identity of the states of a conditional transition system is not
relevant to the behaviour represented by it.
Conditional transition system that differ only with respect to the
identity of the states are isomorphic.

Let $T_1 = \tup{S_1,{\stepf}_1,{\stermf}_1,s^0_1}$
and $T_2 = \tup{S_2,{\stepf}_2,{\stermf}_2,s^0_2}$
be $\kappa$-conditional transition systems
(for an infinite cardinal $\kappa$).
Then $T_1$ and $T_2$ are \emph{isomorphic}, written $T_1 \isom T_2$, if
there exists a bijective function $\funct{b}{S_1}{S_2}$ such that:
\begin{iteml}
\item
$b(s^0_1) = s^0_2$;
\item
$\astepi{s_1}{\ell}{s_1'}$ iff $\astepii{b(s_1)}{\ell}{b(s_1')}$;
\item
$\istermi{s}{\alpha}$ iff $\istermii{b(s)}{\alpha}$.
\end{iteml}
Henceforth, we will always consider two conditional transition systems
essentially the same if they are isomorphic.
\begin{remark}
\label{remark-indep-S}
The set $\CTSe_\kappa$ is independent of $\States{\kappa}$.
By that we mean the following.
Let $\CTSe_\kappa$ and $\CTSe_\kappa{}'$ result from different choices
for $\States{\kappa}$.
Then there exists a bijection $\funct{b}{\CTSe_\kappa}{\CTSe_\kappa{}'}$
such that for all $T \in \CTSe_\kappa$, $T \isom b(T)$.
\end{remark}

Bisimilarity has to be adapted to the setting with guarded actions.
In the definition given below, we use two well-known notions from the
field of Boolean algebras: a partial order relation $\beloweq$ on
$\CondAlg{\kappa}$ and a unary operation $\infjoin$ on the set of all
subsets of $\CondAlg{\kappa}$ of cardinality less than $\kappa$
(for each infinite cardinal $\kappa$).
The relation $\beloweq$ and the operation $\infjoin$ are defined by
\begin{ldispl}
\alpha \beloweq \beta\;\; \mathrm{iff}\;\; \alpha \join \beta = \beta
\qquad \mathrm{and} \qquad
\infjoin C \mbox{ is the supremum of } C \;\mathrm{in}\;
\tup{\CondAlg{\kappa},\beloweq}\;,
\end{ldispl}
respectively.
The operation $\infjoin$ is defined for all subsets of
$\CondAlg{\kappa}$ of cardinality less than $\kappa$ because
$\CondAlg{\kappa}$ is $\kappa$\,-complete.

Let $T_1 = \tup{S_1,{\stepf}_1,{\stermf}_1,s^0_1} \in \CTSe_\kappa$
and $T_2 = \tup{S_2,{\stepf}_2,{\stermf}_2,s^0_2} \in \CTSe_\kappa$
(for an infinite cardinal $\kappa$).
Then a \emph{splitting bisimulation} $B$ between $T_1$ and $T_2$ is a
binary relation $B \subseteq S_1 \x S_2$ such that $B(s^0_1,s^0_2)$ and
for all $s_1,s_2$ such that $B(s_1,s_2)$:
\begin{iteml}
\item
if $\astepi{s_1}{\gact{\alpha}{a}}{s_1'}$, then there is a set
$CS_2' \subseteq \CondAlgm{\kappa} \x S_2$ of cardinality less than
$\kappa$ such that
$\alpha \beloweq \infjoin \dom(CS_2')$ and
for all $\tup{\alpha',s_2'} \in CS_2'$,
$\astepii{s_2}{\gact{\alpha'}{a}}{s_2'}$ and $B(s_1',s_2')$;
\item
if $\astepii{s_2}{\gact{\alpha}{a}}{s_2'}$, then there is a set
$CS_1' \subseteq \CondAlgm{\kappa} \x S_1$ of cardinality less than
$\kappa$ such that
$\alpha \beloweq \infjoin \dom(CS_1')$ and
for all $\tup{\alpha',s_1'} \in CS_1'$,
$\astepi{s_1}{\gact{\alpha'}{a}}{s_1'}$ and $B(s_1',s_2')$;
\item
if $\istermi{s_1}{\alpha}$, then there is a set
$C' \subseteq \CondAlgm{\kappa}$ of cardinality less than $\kappa$ such
that $\alpha \beloweq \infjoin C'$ and for all $\alpha' \in C'$,
$\istermii{s_2}{\alpha'}$;
\item
if $\istermii{s_2}{\alpha}$, then there is a set
$C' \subseteq \CondAlgm{\kappa}$ of cardinality less than $\kappa$ such
that $\alpha \beloweq \infjoin C'$ and for all $\alpha' \in C'$,
$\istermi{s_1}{\alpha'}$.
\end{iteml}
Two conditional transition systems $T_1,T_2 \in \CTSe_\kappa$ are
\emph{splitting bisimilar}, written $T_1 \sbisim T_2$, if there exists a
splitting bisimulation $B$ between $T_1$ and $T_2$.
Let $B$ be a splitting bisimulation between $T_1$ and $T_2$.
Then we say that $B$ is a splitting bisimulation \emph{witnessing}
$T_1 \sbisim T_2$.

The name splitting bisimulation is used because a transition of one of
the related transition systems may be simulated by a set of transitions
of the other transition system.

It is easy to see that ${\sbisim}$ is an equivalence on $\CTSe_\kappa$.
Let $T \in \CTSe_\kappa$.
Then we write $\sbeqvc{T}$ for
$\set{T' \in \CTSe_\kappa \where T \sbisim T'}$, i.e.\ the
${\sbisim}$\,-equivalence class of $T$.
We write $\CTSe_\kappa / {\sbisim}$ for the set of equivalence classes
$\set{\sbeqvc{T} \where T \in \CTSe_\kappa}$.

In Section~\ref{sect-full-bisim-ACPec}, we will use $\CTSe_\kappa$ as
the domain of a structure that is a model of \ACPec.
As the domain of a structure, $\CTSe_\kappa / {\sbisim}$ must be a set.
That is the case because $\CTSe_\kappa$ is a set.
The latter is guaranteed by considering only conditional transition
systems of which the set of states is a subset of $\States{\kappa}$.
\begin{remark}
\label{remark-largeness}
The question arises whether $\States{\kappa}$ is large enough if its
cardinality is greater than or equal to $\kappa$.
This question can be answered in the affirmative.
Let $T = \tup{S,{\stepf},{\stermf},s^0}$ be a connected
$\kappa$-conditional transition system of which the branching degree is
less than $\kappa$.
Then there exists a connected $\kappa$-conditional transition system
$T' = \tup{S',{\stepf}',{\stermf}',s^0{}'}$ of which the branching
degree is less than $\kappa$ such that $T \sbisim T'$ and the
cardinality of $S'$ is less than $\kappa$.
\end{remark}

It is easy to see that, if we would consider conditional transition
systems with unreachable states as well, each conditional transition
system would be splitting bisimilar to its connected part.
It is also easy to see that isomorphic conditional transition systems
are splitting bisimilar.

\section{Full Splitting Bisimulation Models of \ACPec}
\label{sect-full-bisim-ACPec}

In this section, we introduce the full splitting bisimulation models of
\ACPec.
They are models of which the domain consists of equivalence classes of
conditional transition systems modulo splitting bisimilarity.
The qualification ``full'' expresses that there exist other splitting
bisimulation models, but each of them is isomorphically embedded in a
full splitting bisimulation model.

The models of \ACPec\ are structures that consist of the following:
\begin{iteml}
\item
a non-empty set $\cD$, called the \emph{domain} of the model;
\item
for each constant of \ACPec, an element of $\cD$;
\item
for each $n$-ary operator of \ACPec, an $n$-ary operation on $\cD$.
\end{iteml}
In the full splitting bisimulation models of \ACPec\ that are introduced
in this section, the domain is $\CTSe_\kappa / {\sbisim}$ for some
infinite cardinal $\kappa$.
We obtain the models concerned by associating certain elements of
$\CTSe_\kappa / {\sbisim}$ with the constants of \ACPec\ and certain
operations on $\CTSe_\kappa / {\sbisim}$ with the operators of \ACPec.
We begin by associating elements of $\CTSe_\kappa$ and operations on
$\CTSe_\kappa$ with the constants and operators.
The result of this is subsequently lifted to $\CTSe_\kappa / {\sbisim}$.

It is assumed that for each infinite cardinal $\kappa$ a fixed but
arbitrary function
$\funct{\cf{\kappa}}
       {(\setof{(\States{\kappa})} \diff \emptyset)}{\States{\kappa}}$
such that for all $S \in \setof{(\States{\kappa})} \diff \emptyset$,
$\cf{\kappa}(S) \in S$ has been given.

We associate with each constant $c$ of \ACPec\ an element
$\what{c}$ of $\CTSe_\kappa$ and with each operator $f$ of \ACPec\
an operation $\what{f}$ on $\CTSe_\kappa$ as follows.
\begin{iteml}
\item
\begin{ldispl}
\tsdead = \tup{\set{s^0},\emptyset,\emptyset,s^0}\;,
\end{ldispl}
where
\begin{ldispl}
\begin{aeqns}
s^0 & = & \cf{\kappa}(\States{\kappa})\;.
\end{aeqns}
\end{ldispl}
\item
\begin{ldispl}
\tsep = \tup{\set{s^0},\emptyset,{\stermf},s^0}\;,
\end{ldispl}
where
\begin{ldispl}
\begin{aeqns}
s^0 & = & \cf{\kappa}(\States{\kappa})\;,
\\
{\sterm{\top}} & = & \set{s^0}\;,
\end{aeqns}
\end{ldispl}
and for every $\alpha \in \CondAlgm{\kappa} \diff \set{\top}$:
\begin{ldispl}
{\sterm{\alpha}} = \emptyset\;.
\end{ldispl}
\item
\begin{ldispl}
\tsact{a} = \tup{\set{s^0,\stterm},{\stepf},{\stermf},s^0}\;,
\end{ldispl}
where
\begin{ldispl}
\begin{aeqns}
s^0 & = & \cf{\kappa}(\States{\kappa})\;,
\\
\stterm & = & \cf{\kappa}(\States{\kappa} \diff \set{s^0})\;,
\\
{\step{\gact{\top}{a}}} & = & \set{\tup{s^0,\stterm}}\;,
\\
{\sterm{\top}} & = & \set{\stterm}\;,
\end{aeqns}
\end{ldispl}
and for every
$\tup{\alpha',a'} \in
 (\CondAlgm{\kappa} \x \Act) \diff \set{\tup{\top,a}}$ and
$\alpha'' \in \CondAlgm{\kappa} \diff \set{\top}$:
\begin{ldispl}
\begin{aeqns}
{\step{\gact{\alpha'}{a'}}} & = &  \emptyset\;,
\\
{\sterm{\alpha''}} & = & \emptyset\;.
\end{aeqns}
\end{ldispl}
\item
Let $T_i = \tup{S_i,{\stepf}_i,{\stermf}_i,s^0_i} \in \CTSe_\kappa$ for
$i = 1,2$.
Then
\begin{ldispl}
T_1 \tsaltc T_2 = \conn(S,{\stepf},{\stermf},s^0)\;,
\end{ldispl}
where
\begin{ldispl}
\begin{aeqns}
s^0 & = & \cf{\kappa}(\States{\kappa} \diff (S_1 \dunion S_2))\;,
\\
S   & = & \set{s^0} \union (S_1 \dunion S_2)\;,
\end{aeqns}
\end{ldispl}
and for every
$\tup{\alpha,a} \in \CondAlgm{\kappa} \x \Act$ and
$\alpha' \in \CondAlgm{\kappa}$:
\begin{ldispl}
\begin{aeqns}
{\step{\tup{\alpha,a}}} & = &
\set{\tup{s^0,\inj{1}(s)} \where \astepi{s^0_1}{\gact{\alpha}{a}}{s}}
\\ & \union &
\set{\tup{s^0,\inj{2}(s)} \where \astepii{s^0_2}{\gact{\alpha}{a}}{s}}
\\ & \union &
\set{\tup{\inj{1}(s),\inj{1}(s')} \where
     \astepi{s}{\gact{\alpha}{a}}{s'}}
\\ & \union &
\set{\tup{\inj{2}(s),\inj{2}(s')} \where
     \astepii{s}{\gact{\alpha}{a}}{s'}}\;,
\\
{\sterm{\alpha'}} & = &
\set{s^0 \where \istermi{s^0_1}{\alpha'}}
\\ & \union &
\set{s^0 \where \istermii{s^0_2}{\alpha'}}
\\ & \union &
\set{\inj{1}(s)  \where \istermi{s}{\alpha'}}
\\ & \union &
\set{\inj{2}(s) \where \istermii{s}{\alpha'}}\;.
\end{aeqns}
\end{ldispl}
\item
Let $T_i = \tup{S_i,{\stepf}_i,{\stermf}_i,s^0_i} \in \CTSe_\kappa$ for
$i = 1,2$.
Then
\begin{ldispl}
T_1 \tsseqc T_2 = \conn(S,{\stepf},{\stermf},s^0_1)\;,
\end{ldispl}
where
\begin{ldispl}
\begin{aeqns}
S & = & S_1 \dunion S_2\;,
\end{aeqns}
\end{ldispl}
and for every
$\tup{\alpha,a} \in \CondAlgm{\kappa} \x \Act$ and
$\alpha' \in \CondAlgm{\kappa}$:
\begin{ldispl}
\begin{aeqns}
{\step{\tup{\alpha,a}}} & = &
\set{\tup{\inj{1}(s),\inj{1}(s')} \where
     \astepi{s}{\gact{\alpha}{a}}{s'} \And
     \Not \Exists{\beta}{\istermi{s'}{\beta}}}
\\ & \union &
\set{\tup{\inj{1}(s),\inj{2}(s^0_2)} \where
     \Exists{s',\beta}{\astepi{s}{\gact{\alpha}{a}}{s'} \And
                       \istermi{s'}{\beta}}}
\\ & \union &
\set{\tup{\inj{2}(s^0_2),\inj{2}(s')} \where {}
\\ & & \hfill
     \Exists{s,\beta,\beta'}
      {\istermi{s}{\beta} \And
       \astepii{s^0_2}{\gact{\beta'}{a}}{s'} \And
       \alpha = \beta \meet \beta'}}
\\ & \union &
\set{\tup{\inj{2}(s),\inj{2}(s')} \where
     \astepii{s}{\gact{\alpha}{a}}{s'} \And s \neq s^0_2}\;,
\\
{\sterm{\alpha'}} & = &
\set{\inj{2}(s^0_2) \where
     \Exists{s,\beta,\beta'}
      {\istermi{s}{\beta} \And
       \istermii{s^0_2}{\beta'} \And
       \alpha' = \beta \meet \beta'}}
\\ & \union &
\set{\inj{2}(s) \where \istermii{s}{\alpha'} \And s \neq s^0_2}\;.
\end{aeqns}
\end{ldispl}
\item
Let $T = \tup{S,{\stepf},{\stermf},s^0} \in \CTSe_\kappa$.
Then
\begin{ldispl}
\alpha \tsgc T = \conn(S,{\stepf}',{\stermf}',s^0)\;,
\end{ldispl}
where for every
$\tup{\alpha',a} \in \CondAlgm{\kappa} \x \Act$ and
$\alpha'' \in \CondAlgm{\kappa}$:
\begin{ldispl}
\begin{aeqns}
{\step{\tup{\alpha',a}}}{}' & = &
\set{\tup{s^0,s'} \where
     \Exists{\beta}{\astep{s^0}{\gact{\beta}{a}}{s'} \And
                                \alpha' = \alpha \meet \beta}} \\
& \union &
\set{\tup{s,s'} \where
     \astep{s}{\gact{\alpha'}{a}}{s'} \And s \neq s^0}\;,
\\
{\stermp{\alpha''}} & = &
\set{s^0 \where
     \Exists{\beta}{\isterm{s^0}{\beta} \And
                               \alpha'' = \alpha \meet \beta}} \\
& \union &
\set{s \where \isterm{s}{\alpha''} \And s \neq s^0}\;.
\end{aeqns}
\end{ldispl}
\item
Let $T_i = \tup{S_i,{\stepf}_i,{\stermf}_i,s^0_i} \in \CTSe_\kappa$ for
$i = 1,2$.
Then
\begin{ldispl}
T_1 \tsparc T_2 = \tup{S,{\stepf},{\stermf},s^0}\;,
\end{ldispl}
where
\begin{ldispl}
\begin{aeqns}
s^0 & = & \tup{s^0_1,s^0_2}\;,
\eqnsep
S & = & S_1  \x S_2\;,
\end{aeqns}
\end{ldispl}
and for every
$\tup{\alpha,a} \in \CondAlgm{\kappa} \x \Act$ and
$\alpha'' \in \CondAlgm{\kappa}$:
\begin{ldispl}
\begin{aeqns}
{\step{\tup{\alpha,a}}} & = &
\set{\tup{\tup{s_1,s_2},\tup{s_1',s_2}} \where
     \astepi{s_1}{\gact{\alpha}{a}}{s_1'} \And
     s_2 \in S_2}
\\ & \union &
\set{\tup{\tup{s_1,s_2},\tup{s_1,s_2'}} \where
     s_1 \in S_1 \And
     \astepii{s_2}{\gact{\alpha}{a}}{s_2'}}
\\ & \union &
\set{\tup{\tup{s_1,s_2},\tup{s_1',s_2'}} \where
\\ & & \phantom{\{\,}
\smash{\OR{\alpha',\beta' \in \CondAlgm{\kappa},\, a',b' \in \Act}}
        (\astepi{s_1}{\gact{\alpha'}{a'}}{s_1'} \And
         \astepii{s_2}{\gact{\beta'}{b'}}{s_2'} \And\;\;
\\ & & \hfill
         \alpha' \meet \beta' = \alpha \And a' \commm b' = a)}\;,
\eqnsep
{\sterm{\alpha''}} & = &
\set{\tup{s_1,s_2} \where
     \OR{\alpha',\beta' \in \CondAlgm{\kappa}}
       (\istermi{s_1}{\alpha'} \And \istermii{s_2}{\beta'} \And
        \alpha' \meet \beta' = \alpha'')}\;.
\end{aeqns}
\end{ldispl}
\item
Let $T_i = \tup{S_i,{\stepf}_i,{\stermf}_i,s^0_i} \in \CTSe_\kappa$ for
$i = 1,2$.
Suppose that $T_1 \tsparc T_2 = \tup{S,{\stepf},{\stermf},s^0}$.
Then
\begin{ldispl}
T_1 \tsleftm T_2 = \conn(S',{\stepf}{}',{\stermf},s^0{}')\;,
\end{ldispl}
where
\begin{ldispl}
\begin{aeqns}
s^0{}' & = & \cf{\kappa}(\States{\kappa} \diff S)\;,
\eqnsep
S' & = & \set{s^0{}'} \union S\;,
\end{aeqns}
\end{ldispl}
and for every
$\tup{\alpha,a} \in \CondAlgm{\kappa} \x \Act$:
\begin{ldispl}
\begin{aeqns}
{\step{\tup{\alpha,a}}}{}' & = &
\set{\tup{s^0{}',\tup{s,s^0_2}} \where
     \astepi{s^0_1}{\gact{\alpha}{a}}{s}} \union
{\step{\tup{\alpha,a}}}\;.
\end{aeqns}
\end{ldispl}
\item
Let $T_i = \tup{S_i,{\stepf}_i,{\stermf}_i,s^0_i} \in \CTSe_\kappa$ for
$i = 1,2$.
Suppose that $T_1 \tsparc T_2 = \tup{S,{\stepf},{\stermf},s^0}$.
Then
\begin{ldispl}
T_1 \tscommm T_2 = \conn(S',{\stepf}',{\stermf},s^0{}')\;,
\end{ldispl}
where
\begin{ldispl}
\begin{aeqns}
s^0{}' & = & \cf{\kappa}(\States{\kappa} \diff S)\;,
\eqnsep
S' & = & \set{s^0{}'} \union S\;,
\end{aeqns}
\end{ldispl}
and for every
$\tup{\alpha,a} \in \CondAlgm{\kappa} \x \Act$:
\begin{ldispl}
\begin{aeqns}
{\step{\tup{\alpha,a}}}{}' & = &
\set{\tup{s^0{}',\tup{s_1,s_2}} \where
\\ & & \phantom{\{\,}
\smash{\OR{\alpha',\beta' \in \CondAlgm{\kappa},\, a',b' \in \Act}}
        (\astepi{s^0_1}{\gact{\alpha'}{a'}}{s_1} \And
         \astepii{s^0_2}{\gact{\beta'}{b'}}{s_2} \And {}\phantom{)\;.}
\\ & & \hfill
         \alpha' \meet \beta' = \alpha \And a' \commm b' = a)}\;.
\end{aeqns}
\end{ldispl}
\item
Let $T = \tup{S,{\stepf},{\stermf},s^0} \in \CTSe_\kappa$.
Then
\begin{ldispl}
\tsencap{H}(T) = \conn(S,{\stepf}',{\stermf},s^0)\;,
\end{ldispl}
where for every
$\tup{\alpha,a} \in \CondAlgm{\kappa} \x (\Act \diff H)$:
\begin{ldispl}
\begin{aeqns}
{\step{\tup{\alpha,a}}}{}' & = & {\step{\tup{\alpha,a}}}\;,
\end{aeqns}
\end{ldispl}
and for every
$\tup{\alpha,a} \in \CondAlgm{\kappa} \x H$:
\begin{ldispl}
\begin{aeqns}
{\step{\tup{\alpha,a}}}{}' & = & \emptyset\;.
\end{aeqns}
\end{ldispl}
\end{iteml}
In the definition of alternative composition on $\CTSe_\kappa$, a new
initial state is introduced because, in $T_1$ and/or $T_2$, there may
exist a transition back to the initial state.
The connected part of the resulting conditional transition system is
extracted because the initial states of $T_1$ and $T_2$ may be
unreachable from the new initial state.

\begin{remark}
\label{remark-indep-Ch}
The elements of $\CTSe_\kappa$ and the operations on $\CTSe_\kappa$
defined above are independent of $\cf{\kappa}$.
Different choices for $\cf{\kappa}$ lead for each constant of \ACPec\
to isomorphic elements of $\CTSe_\kappa$ and lead for each operator
\ACPec\ to operations on $\CTSe_\kappa$ with isomorphic results.
\end{remark}

We can show that splitting bisimilarity is a congruence with respect to
the operations on $\CTSe_\kappa$ associated with the operators of
\ACPec.
\begin{proposition}[Congruence]
\label{prop-congruence-ACPec}
Let $\kappa$ be an infinite cardinal.
Then for all $T_1,T_2,T_1',T_2' \in \CTSe_\kappa$ and
$\alpha \in \CondAlg{\kappa}$,
$T_1 \sbisim T_1'$ and $T_2 \sbisim T_2'$ imply
$T_1 \tsaltc T_2 \sbisim T_1' \tsaltc T_2'$,
$T_1 \tsseqc T_2 \sbisim T_1' \tsseqc T_2'$,
$\alpha \tsgc T_1 \sbisim \alpha \tsgc T_1'$,
$T_1 \tsparc T_2 \sbisim T_1' \tsparc T_2'$,
$T_1 \tsleftm T_2 \sbisim T_1' \tsleftm T_2'$,
$T_1 \tscommm T_2 \sbisim T_1' \tscommm T_2'$ and
$\tsencap{H}(T_1) \sbisim \tsencap{H}(T_1')$.
\end{proposition}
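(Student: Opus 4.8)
The plan is to treat each of the seven operations in turn and, for each, to exhibit an explicit splitting bisimulation between the two composite systems, built from fixed witnesses $B_1$ for $T_1 \sbisim T_1'$ and $B_2$ for $T_2 \sbisim T_2'$. Two preliminary reductions make this manageable. First, each operation returns the connected part of a composite conditional transition system (parallel composition returns a composite that is already connected, since every product state is reachable by moving the two components in succession), and a conditional transition system is splitting bisimilar to its connected part; hence, by transitivity of $\sbisim$, it suffices to establish bisimilarity of the un-connected composites. Second, by Remark~\ref{remark-indep-Ch} the composites depend on the choices made by $\cf{\kappa}$ only up to isomorphism, and isomorphic systems are splitting bisimilar, so I may take convenient representatives and, in particular, identify the freshly chosen initial states of the two composites.

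For the state-preserving operations the witnesses transfer almost verbatim. Guarded command $\alpha \tsgc (\cdot)$ and encapsulation $\tsencap{H}(\cdot)$ leave the state space unchanged and only rewrite or delete labels, so I would verify that $B_1$ itself is a splitting bisimulation between $\alpha \tsgc T_1$ and $\alpha \tsgc T_1'$, and between $\tsencap{H}(T_1)$ and $\tsencap{H}(T_1')$. For the other operations I take the relation induced by the state construction. For $\tsaltc$ and $\tsseqc$, whose states form the disjoint union $S_1 \dunion S_2$, I relate $\inj{1}(s)$ to $\inj{1}(s')$ whenever $B_1(s,s')$ and $\inj{2}(s)$ to $\inj{2}(s')$ whenever $B_2(s,s')$, adding the pair of new initial states in the case of $\tsaltc$. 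For $\tsparc$, whose states form $S_1 \x S_2$, I take the product relation, relating $\tup{s_1,s_2}$ to $\tup{s_1',s_2'}$ exactly when $B_1(s_1,s_1')$ and $B_2(s_2,s_2')$. For $\tsleftm$ and $\tscommm$ I take the product relation on the $\tsparc$-part together with the pair of added initial states. In each case I check the four clauses of a splitting bisimulation for one direction only, the other following by symmetry.

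The routine part of each verification lifts a matching set furnished by $B_1$ or $B_2$ through the injections or the projections. The delicate transitions are those whose label condition is a meet $\alpha' \meet \beta'$ of conditions originating in the two components: the synchronization steps and joint terminations of $\tsparc$, $\tsleftm$ and $\tscommm$, and the junction steps and terminations of $\tsseqc$. To match such a label I would take the covering sets $C$ and $C'$, each of cardinality less than $\kappa$, that $B_1$ and $B_2$ supply for $\alpha'$ and $\beta'$, and form the set of pairwise meets $\set{\gamma \meet \gamma' \where \gamma \in C,\, \gamma' \in C'}$. This set again has cardinality less than $\kappa$, and the needed covering $\alpha' \meet \beta' \beloweq \infjoin \set{\gamma \meet \gamma' \where \gamma \in C,\, \gamma' \in C'}$ follows from $\alpha' \beloweq \infjoin C$ and $\beta' \beloweq \infjoin C'$ by the distributive law $\delta \meet \infjoin X = \infjoin \set{\delta \meet \xi \where \xi \in X}$, valid in the $\kappa$\,-complete Boolean algebra $\CondAlg{\kappa}$ for index sets of cardinality less than $\kappa$, applied twice together with the monotonicity of $\meet$.

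I expect sequential composition to be the principal obstacle. Its successor state $\inj{2}(s^0_2)$ is entered only after $T_1$ reaches a terminating state, and both its outgoing steps and its termination are defined by an existential quantification over the termination conditions of the reachable states of $T_1$. Matching these cannot be done by applying $B_1$ at a single related pair, so I would first prove the auxiliary fact that in a splitting bisimulation between connected systems every reachable state of $T_1$ is $B_1$-related to at least one (necessarily reachable) state of $T_1'$; this uses that a matching set covering a condition different from $\bot$ is nonempty. Applying the termination clause of $B_1$ at such a related pair covers each termination condition occurring in the existential by termination conditions of reachable states of $T_1'$, and combining this with the initial covering supplied by $B_2$ through the meet computation of the previous paragraph yields the required match. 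Throughout, the standing side-condition to watch is cardinality: every covering set I build must stay below $\kappa$, which is exactly why I only ever combine two covers by a single finite meet and never iterate unboundedly.
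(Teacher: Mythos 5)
Your overall architecture is the one the paper uses --- explicit witnessing relations built operation by operation, and your product relation for $\tsparc$ is literally the paper's $R_{\tsparc}$ --- and your two supporting facts (the covering-by-pairwise-meets computation and the lemma that every state of a connected system is related to at least one state by a witnessing bisimulation) are the right ones. The genuine gap is the guarded-command case: $B_1$ itself is in general \emph{not} a splitting bisimulation between $\alpha \tsgc T_1$ and $\alpha \tsgc T_1'$. The operation $\alpha \tsgc {}$ modifies only the labels of transitions leaving the initial state, but a splitting bisimulation may relate a non-initial state of $T_1$ to the initial state of $T_1'$; this happens as soon as transitions lead back into an initial state. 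For such a pair, the outgoing labels on one side have been met with $\alpha$ while those on the other side have not, and the covering clause fails. Concretely, take $\eta \in \ACond$, let $T_1$ have states $s_0,s_1$ with $\astep{s_0}{\gact{\top}{a}}{s_1}$ and $\astep{s_1}{\gact{\top}{a}}{s_0}$, and let $T_1'$ have the single state $t_0$ with $\astep{t_0}{\gact{\top}{a}}{t_0}$ (no terminations anywhere). Then $T_1 \sbisim T_1'$ via $B_1 = \set{\tup{s_0,t_0},\tup{s_1,t_0}}$; but in $\eta \tsgc T_1$ the transition leaving $s_1$ still carries $\top$, whereas every transition of $\eta \tsgc T_1'$ carries $\eta$, and $\top \beloweq \eta$ does not hold. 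In fact no splitting bisimulation between $\eta \tsgc T_1$ and $\eta \tsgc T_1'$ exists at all, so the verification you defer cannot be completed by choosing a cleverer relation: with the operations exactly as defined, the guarded-command clause fails on this pair.

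The same defect infects the junction state of your sequential-composition relation: if $B_2$ relates a non-initial state $s$ of $T_2$ to the initial state of $T_2'$, then $\inj{2}(s)$ carries unmodified copies of the labels of $s$, while the related state on the other side carries only labels of the form $\beta \meet \beta'$ with $\beta$ a termination condition of $T_1'$, and the covering fails again. What is missing is precisely the device the paper applies to $\tsaltc$, $\tsleftm$ and $\tscommm$ and whose purpose it announces (``a new initial state is introduced because \ldots there may exist a transition back to the initial state''): the specially treated state must be fresh, hence never re-entered, and then your relations pair it only with the fresh state on the other side --- which is exactly why those three cases of your proof, and the $\tsparc$ and $\tsencap{H}$ cases, are sound. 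To close the gap for $\tsgc$ and $\tsseqc$ you would have to first unfold or split the initial state (pass to a representative whose initial state has no incoming transitions), show that this preprocessing preserves the $\sbisim$-class and commutes appropriately with the operations, and only then run your argument; neither your proposal nor, for that matter, the paper's one-line appeal to the proofs in~\cite{BM05a} supplies this step.
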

\begin{proof}
For all operations except $\tsparc$, witnessing splitting bisimulations
are constructed in the same way as in the congruence proofs for the
corresponding operations on $\CTS_\kappa$ given in~\cite{BM05a}.
For $\tsparc$, the construction of a witnessing splitting bisimulation
is easier than in~\cite{BM05a}.%
\footnote
{Because the relation constructed in~\cite{BM05a} is by mistake the same
 as the one constructed in this paper, we should actually say ``in the
 revision of~\cite{BM05a} that can be found at
 \texttt{www.win.tue.nl/\~{}keesm/sbrc.pdf}''.
 }
Let $R_1$ and $R_2$ be splitting bisimulations witnessing
$T_1 \sbisim T_1'$ and $T_2 \sbisim T_2'$, respectively.
Then we construct relations $R_{\tsparc}$ as follows:
\begin{iteml}
\item
$R_{\tsparc} =
 \set{\tup{\tup{s_1,s_2},\tup{s_1',s_2'}} \where
      \tup{s_1,s_1'} \in R_1, \tup{s_2,s_2'} \in R_2}$.
\end{iteml}
Given the definition of parallel composition, it is easy to see that
$R_{\tsparc}$ is a splitting bisimulation witnessing
$T_1 \tsparc T_2 \sbisim T_1' \tsparc T_2'$.
\qed
\end{proof}

The \emph{full splitting bisimulation models} $\CPrce_\kappa$, one for
each infinite cardinal $\kappa$, consist of the following:
\begin{iteml}
\item
a set $\cP$, called the domain of $\CPrce_\kappa$;
\item
for each constant $c$ of \ACPec, an element $\wtilde{c}$ of $\cP$;
\item
for each $n$-ary operator $f$ of \ACPec, an $n$-ary operation
$\wtilde{f}$ on $\cP$;
\end{iteml}
where those ingredients are defined as follows:
\begin{ldispl}
\begin{aeqns}
\cP & = & \CTSe_\kappa / {\sbisim}\;,
\eqnsep
\sdead & = & \sbeqvc{\tsdead}\;,
\eqnsep
\sep & = & \sbeqvc{\tsep}\;,
\eqnsep
\sact{a} & = & \sbeqvc{\tsact{a}}\;,
\eqnsep
\sbeqvc{T_1} \saltc \sbeqvc{T_2} & = & \sbeqvc{T_1 \tsaltc T_2}\;,
\eqnsep
\sbeqvc{T_1} \sseqc \sbeqvc{T_2} & = & \sbeqvc{T_1 \tsseqc T_2}\;,
\end{aeqns}
\qquad\qquad
\begin{aeqns}
{}
\eqnsep
\alpha \sgc \sbeqvc{T_1} & = & \sbeqvc{\alpha \tsgc T_1}\;.
\eqnsep
\sbeqvc{T_1} \sparc \sbeqvc{T_2} & = & \sbeqvc{T_1 \tsparc T_2}\;,
\eqnsep
\sbeqvc{T_1} \sleftm \sbeqvc{T_2} & = & \sbeqvc{T_1 \tsleftm T_2}\;,
\eqnsep
\sbeqvc{T_1} \scommm \sbeqvc{T_2} & = & \sbeqvc{T_1 \tscommm T_2}\;,
\eqnsep
\sencap{H}(\sbeqvc{T_1}) & = & \sbeqvc{\tsencap{H}(T_1)}\;.
\end{aeqns}
\end{ldispl}
The operations on $\CTSe_\kappa / {\sbisim}$ are well-defined because
$\sbisim$ is a congruence with respect to the corresponding operations
on $\CTSe_\kappa$.

The structures $\CPrce_\kappa$ are models of \ACPec.
\begin{theorem}[Soundness of \ACPec]
\label{theorem-soundness-ACPec}
For each infinite cardinal $\kappa$, we have
$\Sat{\CPrce_\kappa}{\ACPec}$.
\end{theorem}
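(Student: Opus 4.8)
The plan is to verify that every axiom in Table~\ref{axioms-ACPec} holds in $\CPrce_\kappa$ under an arbitrary valuation of the variables. Because the operations on $\CTSe_\kappa / {\sbisim}$ are defined on representatives and $\sbisim$ is a congruence with respect to the corresponding operations on $\CTSe_\kappa$ (Proposition~\ref{prop-congruence-ACPec}), validity of a process equation $s = t$ reduces to the following statement: for all $T_1,T_2,\ldots \in \CTSe_\kappa$ assigned to the process variables and all conditions assigned to the condition variables, the conditional transition system obtained by interpreting the left-hand side with the operations $\tsaltc,\tsseqc,\tsgc,\tsparc,\tsleftm,\tscommm,\tsencap{H}$ is splitting bisimilar to the one obtained for the right-hand side. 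So for each axiom I would fix such representatives, unfold both sides according to the definitions of the operations on $\CTSe_\kappa$ from Section~\ref{sect-full-bisim-ACPec}, and exhibit a witnessing splitting bisimulation (in the easiest cases an isomorphism, possibly after taking connected parts).

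The condition axioms BA1--BA8 are immediate: the sort $\Cond$ is interpreted in the Boolean algebra $\CondAlg{\kappa}$ with $\join$, $\meet$ and $\bcompl{\phantom{\phi}}$ as its operations, and BA1--BA8 are by definition its axioms, so both sides denote the same element. For the process axioms I would group them by how the witnessing bisimulation is built. For the axioms that \ACPec\ shares with \ACPc, the constructions of Section~\ref{sect-full-bisim-ACPec} agree on the non-$\ep$ part with those of~\cite{BM05a}, so the witnessing splitting bisimulations given there carry over essentially verbatim; this covers A1--A7, CM3, CM4, CM7--CM9, C1--C3, D1--D4 and GC1--GC11. In several of these the splitting mechanism is genuinely needed: for GC7, $(\phi \join \psi) \tsgc T$ fires an initial step under $(\phi \join \psi) \meet \beta$, which must be matched by the \emph{two} initial steps of $\phi \tsgc T \tsaltc \psi \tsgc T$ under $\phi \meet \beta$ and $\psi \meet \beta$, whose join is exactly $(\phi \join \psi) \meet \beta$; this is precisely what the $\infjoin$-condition in the definition of splitting bisimulation is designed for, and a similar split is used for GC6 and for the termination clauses.

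The genuinely new cases are the axioms mentioning $\ep$, namely A8, A9, TM2, TM5, TM6, D0 and the revised expansion axiom CM1T. For A8 ($x \seqc \ep = x$) and A9 ($\ep \seqc x = x$) I would relate the sequential-composition construction with $\tsep$ to the original system, checking that the glueing of the single $\top$-terminating state of $\tsep$ through the $\inj{2}(s^0_2)$-clauses reproduces exactly the transitions and terminations of $T$; the only subtlety is that $\tsep$ terminates under $\top$ and $\beta \meet \top = \beta$, so no conditions are altered. For TM2, TM5, TM6 and D0 the right-hand sides ($\tsdead$, $\tsep$) have a single state with no outgoing transitions, and one checks that the corresponding left-hand constructions produce, after $\conn$, a system with no transitions and the appropriate termination set, so the relation pairing the two initial states is a splitting bisimulation.

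The main obstacle will be CM1T, the expansion of $\tsparc$. Here one must show that $T_1 \tsparc T_2$ is splitting bisimilar to $(T_1 \tsleftm T_2) \tsaltc (T_2 \tsleftm T_1) \tsaltc (T_1 \tscommm T_2) \tsaltc (\tsencap{\Act}(T_1) \tsseqc \tsencap{\Act}(T_2))$. The difficulty is bookkeeping: $\tsleftm$, $\tscommm$ and $\tsaltc$ each introduce a fresh initial state via $\cf{\kappa}$ and then discard unreachable states via $\conn$, so the right-hand side carries several copies of $S_1 \x S_2$ together with fresh initial states, whereas the left-hand side has the single state space $S_1 \x S_2$. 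I would take the relation mapping $\tup{s_1,s_2}$ of $T_1 \tsparc T_2$ to the corresponding state in whichever summand it became reachable, and mapping $\tup{s^0_1,s^0_2}$ to the common fresh initial states. The verification then splits the first move of $T_1 \tsparc T_2$ into its three kinds---a left step matched in the $T_1 \tsleftm T_2$ summand, a right step matched in $T_2 \tsleftm T_1$, and a communication step matched in $T_1 \tscommm T_2$---while the fourth summand contributes exactly the joint-termination behaviour: since $\tsencap{\Act}$ deletes all transitions, $\tsencap{\Act}(T_1) \tsseqc \tsencap{\Act}(T_2)$ can only terminate, and it does so under $\beta \meet \beta'$ whenever $T_1$ terminates under $\beta$ and $T_2$ under $\beta'$, matching the termination clause $\alpha' \meet \beta' = \alpha''$ of $T_1 \tsparc T_2$. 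Lining up this last summand with the parallel-termination condition, and confirming that the several fresh initial states are all identified with $\tup{s^0_1,s^0_2}$ by the relation, is the part that needs the most care.
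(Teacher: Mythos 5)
Your proposal is correct and follows essentially the same route as the paper: the paper's proof likewise uses the congruence property (Proposition~\ref{prop-congruence-ACPec}) to reduce soundness to an axiom-by-axiom check on representatives, disposes of the axioms shared with \ACPc\ by appealing to the corresponding soundness proofs for $\CPrc_\kappa$ in~\cite{BM05a}, and verifies the remaining axioms directly from the definition of $\CPrce_\kappa$. Your write-up merely makes explicit the case split (BA axioms, shared axioms, and the $\ep$-axioms A8, A9, TM2, TM5, TM6, D0, CM1T) that the paper compresses into ``follows easily from the definition''.
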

\begin{proof}
Because ${\sbisim}$ is a congruence, it is sufficient to show that all
additional axioms are sound.
The soundness of all additional axioms follows easily from the
definition of $\CPrce_\kappa$.
\qed
\end{proof}
For all axioms that are in common with \ACPc, the proof of soundness
with respect to $\CPrce_\kappa$ follows the same line as the proof of
soundness with respect to $\CPrc_\kappa$.

The full splitting bisimulation models are related by isomorphic
embeddings.
\begin{theorem}[Isomorphic Embedding]
\label{theorem-embedding-ts-models}
Let $\kappa$ and $\kappa'$ be infinite cardinals such that
$\kappa < \kappa'$.
Then $\CPrce_\kappa$ is isomorphically embedded in $\CPrce_{\kappa'}$.
\end{theorem}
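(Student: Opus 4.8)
The plan is to build an injective homomorphism from $\CPrce_\kappa$ into $\CPrce_{\kappa'}$ by reindexing conditional transition systems from the $\kappa$-setting to the $\kappa'$-setting. Two preliminary observations set everything up. On the condition side, the inclusion of $\ACond$ into $\CondAlg{\kappa'}$ extends, by the universal property of the free $\kappa$-complete Boolean algebra, to a unique $\kappa$-complete homomorphism $\funct{j}{\CondAlg{\kappa}}{\CondAlg{\kappa'}}$. This $j$ is injective (a standard fact about free complete Boolean algebras; see~\cite{MB89a}), it preserves meets, joins, complements, $\bot$ and $\top$, it preserves suprema of subsets of cardinality less than $\kappa$, and consequently it preserves and reflects the order $\beloweq$. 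On the state side, by Remark~\ref{remark-indep-S} the set $\CTSe_\kappa$ does not depend on the choice of $\States{\kappa}$, and since $\States{\kappa'}$ itself has cardinality at least $\kappa' > \kappa$ and is closed under $\dunion$ and $\x$, we may take $\States{\kappa} = \States{\kappa'}$, so that every $T \in \CTSe_\kappa$ already has its state set inside $\States{\kappa'}$. I then define $\funct{\iota}{\CTSe_\kappa}{\CTSe_{\kappa'}}$ by leaving states and initial state untouched and relabelling each condition through $j$: for $T = \tup{S,{\stepf},{\stermf},s^0}$, put $\iota(T) = \tup{S,{\stepf}',{\stermf}',s^0}$ with $\astep{s}{\gact{j(\alpha)}{a}}{s'}$ in $\iota(T)$ iff $\astep{s}{\gact{\alpha}{a}}{s'}$ in $T$ (and no label outside $j(\CondAlgm{\kappa})$ occurs), and similarly for termination. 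Since $j$ is injective this is well defined; $\iota(T)$ is connected with the same branching degree as $T$, which is less than $\kappa < \kappa'$, so $\iota(T) \in \CTSe_{\kappa'}$.

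Next I would show that $\iota$ both preserves and reflects splitting bisimilarity, so that $\sbeqvc{T} \mapsto \sbeqvc{\iota(T)}$ is a well-defined injection $\funct{\bar\iota}{\CTSe_\kappa/{\sbisim}}{\CTSe_{\kappa'}/{\sbisim}}$. For preservation, a splitting bisimulation $B$ witnessing $T_1 \sbisim T_2$ is also one witnessing $\iota(T_1) \sbisim \iota(T_2)$: every splitting set has cardinality less than $\kappa < \kappa'$, its labels lie in $j(\CondAlgm{\kappa})$, and the required $\beloweq$-inequalities between suprema transfer because $j$ preserves $<\kappa$ suprema and the order. For reflection --- the crux of injectivity --- let $B$ witness $\iota(T_1) \sbisim \iota(T_2)$; I claim $B$ already witnesses $T_1 \sbisim T_2$. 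Given $\astepi{s_1}{\gact{\alpha}{a}}{s_1'}$ in $T_1$, i.e.\ $\astepi{s_1}{\gact{j(\alpha)}{a}}{s_1'}$ in $\iota(T_1)$, the witnessing set $CS_2'$ consists of genuine transitions of $s_2$, of which there are fewer than $\kappa$ by the branching-degree bound; hence $CS_2'$ has cardinality less than $\kappa$ and all its labels lie in $j(\CondAlgm{\kappa})$, and applying $j^{-1}$ while using that $j$ preserves $<\kappa$ suprema and reflects $\beloweq$ yields exactly the condition required in $\CTSe_\kappa$. The termination clauses are handled identically, the branching degree bounding the number of relevant termination conditions. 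Thus $T_1 \sbisim T_2$, and $\bar\iota$ is injective.

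Finally I would verify that $\bar\iota$ respects every constant and operator. For constants, $\iota(\tsdead) \isom \tsdead$, $\iota(\tsep) \isom \tsep$ and $\iota(\tsact{a}) \isom \tsact{a}$ computed in $\CTSe_{\kappa'}$, directly, since $j(\top) = \top$ and the selected initial/termination states are irrelevant up to isomorphism. For operators, the defining clauses in Section~\ref{sect-full-bisim-ACPec} are uniform in $\kappa$ apart from combining conditions by $\meet$ and selecting new states by $\cf{\kappa}$; because $j$ is a Boolean homomorphism ($j(\alpha \meet \beta) = j(\alpha) \meet j(\beta)$, etc.) and, by the argument behind Remark~\ref{remark-indep-Ch}, the particular choice function does not affect the result up to isomorphism, one checks that $\iota(T_1 \tsaltc T_2) \sbisim \iota(T_1) \tsaltc \iota(T_2)$ and likewise for $\tsseqc$, $\tsgc$, $\tsparc$, $\tsleftm$, $\tscommm$ and $\tsencap{H}$ (for $\tsparc$ one even obtains literal equality, as neither a choice function nor $\conn$ is involved). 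Together with the injectivity established above, this gives the required isomorphic embedding.

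The main obstacle is the reflection of bisimilarity, and within it the single nontrivial ingredient is that $j$ is injective and preserves suprema of fewer than $\kappa$ conditions; everything else reduces to the branching-degree bound (which keeps all relevant index sets below $\kappa$) and to the routine homomorphism properties of $j$ in the congruence-style verifications.
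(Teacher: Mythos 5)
Your proof is correct and follows essentially the route the paper itself takes: the paper merely notes that the proof is analogous to the corresponding result for \ACPc\ in~\cite{BM05a}, and that analogous proof is precisely your construction --- embed $\CondAlg{\kappa}$ into $\CondAlg{\kappa'}$ via the canonical injective $\kappa$-complete homomorphism $j$, relabel transition systems along $j$ (after identifying $\States{\kappa}$ with $\States{\kappa'}$), and check that splitting bisimilarity is both preserved and reflected (using the branching-degree bound and that $j$ preserves suprema of fewer than $\kappa$ elements and reflects $\beloweq$) and that all constants and operations are respected up to isomorphism. The one point worth making explicit is the injectivity of $j$, which you cite as standard: it follows by embedding $\CondAlg{\kappa}$ sup-preservingly into its MacNeille completion (a complete, hence $\kappa'$-complete, Boolean algebra) and factoring this embedding through $\CondAlg{\kappa'}$ by the universal property, so that $j$ is a first factor of an injective map.
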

\begin{proof}
The proof is analogous to the proof of the corresponding property for
the full splitting bisimulation models of \ACPc\ given in~\cite{BM05a}.
\qed
\end{proof}

\section{SOS-Based Splitting Bisimilarity for \ACPec}
\label{sect-sos-based-bisim}

It is customary to associate transition systems with closed terms of
the language of an \ACP-like theory about processes by means of
structural operational semantics and to identify closed terms if their
associated transition systems are splitting bisimilar.

The structural operational semantics of \ACPec\ presented in
Section~\ref{sect-ACPec} determines a conditional transition system
for each process that can be denoted by a closed term of sort $\Proc$.
These transition systems are special in the sense that their states are
closed terms of sort $\Proc$.

Let $p$ be a closed term of sort $\Proc$.
Then the transition system of $p$ \emph{induced by} the structural
operational semantics of \ACPec, written $\sym{CTS}(p)$, is the
connected conditional transition system
$\conn(S,{\stepf},{\stermf},s^0)$, where:
\begin{iteml}
\item
$S$ is the set of all closed terms of sort $\Proc$;
\item
the sets ${\step{\tup{\alpha,a}}} \subseteq S \x S$ and
${\sterm{\alpha}} \subseteq S$ for each
$\alpha \in \FinCondAlg \diff \set{\bot}$ and $a \in \Act$ are the
smallest subsets of $S \x S$ and $S$, respectively, for which the
transition rules from Table~\ref{sos-ACPec} hold;
\item
$s^0 \in S$ is the closed term $p$.
\end{iteml}

Let $p$ and $q$ be closed terms of sort $\Proc$.
Then we say that $p$ and $q$ are \emph{splitting bisimilar}, written
$p \sbisim q$, if $\sym{CTS}(p) \sbisim \sym{CTS}(q)$.

Clearly, the structural operational semantics does not give rise to
infinitely branching conditional transition systems.
For each closed term $p$ of sort $\Proc$, there exists a
$T \in \CTSe_{\aleph_0}$ such that $\sym{CTS}(p) \isom T$.
In Section~\ref{sect-full-bisim-ACPec}, it has been shown that it is
possible to consider infinitely branching conditional transition
systems as well.

\section{Guarded Recursion}
\label{sect-recursion}

In order to allow for the description of (potentially) non-terminating
processes, we add guarded recursion to \ACPec.

A \emph{recursive specification} over \ACPec\ is a set of equations
$E = \set{X = t_X \where X \in V}$ where $V$ is a set of variables and
each $t_X$ is a term of sort $\Proc$ that only contains variables from
$V$.
We write $\vars(E)$ for the set of all variables that occur on the
left-hand side of an equation in $E$.
A \emph{solution} of a recursive specification $E$ is a set of processes
(in some model of \ACPec) $\set{P_X \where X \in \vars(E)}$ such that
the equations of $E$ hold if, for all $X \in \vars(E)$, $X$ stands for
$P_X$.

Let $t$ be a term of sort $\Proc$ containing a variable $X$.
We call an occurrence of $X$ in $t$ \emph{guarded} if $t$ has a subterm
of the form $a \seqc t'$ containing this occurrence of $X$.
A recursive specification over \ACPec\ is called a
\emph{guarded} recursive specification if all occurrences of variables
in the right-hand sides of its equations are guarded or it can be
rewritten to such a recursive specification using the axioms of
\ACPec\ and the equations of the recursive specification.
We are only interested in models of \ACPec\ in which guarded recursive
specifications have unique solutions.

For each guarded recursive specification $E$ and each variable
$X \in \vars(E)$, we introduce a constant of sort $\Proc$ standing for
the unique solution of $E$ for $X$.
This constant is denoted by $\rec{X}{E}$.
We often write $X$ for $\rec{X}{E}$ if $E$ is clear from the context.
In such cases, it should also be clear from the context that we use $X$
as a constant.

We will also use the following notation.
Let $t$ be a term of sort $\Proc$ and $E$ be a guarded recursive
specification over \ACPec.
Then we write $\rec{t}{E}$ for $t$ with, for all $X \in \vars(E)$, all
occurrences of $X$ in $t$ replaced by $\rec{X}{E}$.

The additional axioms for recursion are the equations given
in Table~\ref{axioms-ACPec-rec}.%
\begin{table}[!t]
\caption{Axioms for recursion}
\label{axioms-ACPec-rec}
\begin{eqntbl}
\begin{saxcol}
\rec{X}{E} = \rec{t_X}{E} & \mif X = t_X \in E
& \axiom{RDP}
\\
E \Then X = \rec{X}{E}    & \mif X \in \vars(E)
& \axiom{RSP}
\end{saxcol}
\end{eqntbl}
\end{table}
Both RDP and RSP are axiom schemas.
A side condition is added to restrict the variables, terms and guarded
recursive specifications for which $X$, $t_X$ and $E$ stand.
The additional axioms for recursion are known as the recursive
definition principle (RDP) and the recursive specification principle
(RSP).
The equations $\rec{X}{E} = \rec{t_X}{E}$ for a fixed $E$ express that
the constants $\rec{X}{E}$ make up a solution of $E$.
The conditional equations $E \Then X = \rec{X}{E}$ express that this
solution is the only one.

The structural operational semantics for the constants $\rec{X}{E}$ is
described by the transition rules given in Table~\ref{sos-ACPec-rec}.
\begin{table}[!t]
\caption{Transition rules for recursion}
\label{sos-ACPec-rec}
\begin{ruletbl}
\RuleC
{\isterm{\rec{t_X}{E}}{\phi}}
{\isterm{\rec{X}{E}}{\phi}}
{X \!=\! t_X \,\in\, E}
\qquad
\RuleC
{\astep{\rec{t_X}{E}}{\gact{\phi}{a}}{x'}}
{\astep{\rec{X}{E}}{\gact{\phi}{a}}{x'}}
{X \!=\! t_X \,\in\, E}
\end{ruletbl}
\end{table}

In the full splitting bisimulation models of \ACPec, guarded recursive
specifications over \ACPec\ have unique solutions.
\begin{theorem}[Unique solutions in $\CPrce_\kappa$]
\label{theorem-uniqueness}
\sloppy
For each infinite cardinal $\kappa$, guarded recursive specifications
over \ACPec\ have unique solutions in $\CPrce_\kappa$.
\end{theorem}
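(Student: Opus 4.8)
The plan is to establish existence and uniqueness separately, with uniqueness (i.e.\ the validity of RSP) being the substantial part. For existence I would take as the canonical candidate solution the transition systems induced by the structural operational semantics: with the recursion rules of Table~\ref{sos-ACPec-rec} added to those of Section~\ref{sect-sos-based-bisim}, each constant $\rec{X}{E}$ has an induced transition system $\sym{CTS}(\rec{X}{E})$, which is finitely branching and hence lies in $\CTSe_{\aleph_0}$, so by Theorem~\ref{theorem-embedding-ts-models} it determines an element of $\CPrce_\kappa$ for every $\kappa$. The key prerequisite is an \emph{adequacy lemma}: for every closed $\Proc$-term $p$, the transition system $\sym{CTS}(p)$ is splitting bisimilar to the transition system obtained by applying the operations on $\CTSe_\kappa$ associated with the operators of \ACPec\ to the transition systems of the constituent subterms of $p$. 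This is proved by structural induction, reading the rules of Table~\ref{sos-ACPec} off against the definitions of the operations on $\CTSe_\kappa$. Combined with the recursion rules, which give $\sym{CTS}(\rec{X}{E})$ exactly the transitions and termination options of $\sym{CTS}(\rec{t_X}{E})$, the adequacy lemma and Proposition~\ref{prop-congruence-ACPec} yield that $(\sbeqvc{\sym{CTS}(\rec{X}{E})})_{X \in \vars(E)}$ satisfies every equation $X = t_X$ of $E$ in $\CPrce_\kappa$, i.e.\ is a solution.

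For uniqueness I would first use the definition of guardedness to rewrite $E$, with the axioms of \ACPec\ and the equations of $E$ themselves, into a specification in which every right-hand side is a guarded head-normal form, namely an alternative composition of summands of the shape $\phi \gc a \seqc u$, at most one summand $\psi \gc \ep$, and $\dead$, where each $u$ is again a $\Proc$-term over $\vars(E)$ and no variable occurs outside the scope of an action prefix. By Theorem~\ref{theorem-soundness-ACPec} and Proposition~\ref{prop-congruence-ACPec} this rewriting is sound in $\CPrce_\kappa$, so it suffices to prove uniqueness for the rewritten specification. Given two solutions $(P_X)$ and $(Q_X)$, and writing $u(\vec P)$ for the interpretation in $\CPrce_\kappa$ of a $\Proc$-term $u$ over $\vars(E)$ under the valuation $X \mapsto P_X$, the goal is to show $u(\vec P) = u(\vec Q)$ for all such $u$; taking $u = X$ gives $P_X = Q_X$.

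I would prove this by exhibiting a single splitting bisimulation: on chosen representatives, pair a state of $u(\vec P)$ with the corresponding state of $u(\vec Q)$ whenever both are reached by the same sequence of guarded actions, equivalently by relating $u(\vec P)$ with $u(\vec Q)$ for every $\Proc$-term $u$ over $\vars(E)$ and closing under transitions. The role of guardedness is that, after replacing each $X$ by its guarded head-normal form (legitimate because $\vec P$ and $\vec Q$ are solutions), the first-step behaviour of $u(\vec P)$, together with the guarding conditions from $\FinCondAlg$, is read off from the purely syntactic head-normal form of $u$ and so is identical to that of $u(\vec Q)$, while every residual is again of the form $u'(\vec P)$ with $u'$ a $\Proc$-term over $\vars(E)$. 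Hence matched transitions carry identical conditions, so the supremum requirement $\alpha \beloweq \infjoin \dom(\cdot)$ in each of the four clauses is met by a single matching transition, and the relation is closed under the transition and termination clauses; this makes it a splitting bisimulation witnessing $u(\vec P) \sbisim u(\vec Q)$, whence equality in $\CPrce_\kappa$.

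The hard part is making the ``same syntactic first step'' claim precise at the level of the operations on $\CTSe_\kappa$: alternative composition, left merge, communication merge and sequential composition introduce fresh initial states and pass to the connected part $\conn(\cdot)$, so the correspondence between a state of $u(\vec P)$ and the term $u$ labelling it has to be set up and maintained under these constructions. Moreover, because the models admit infinitely branching transition systems, one cannot fall back on a projection plus approximation-induction argument and must produce the bisimulation directly, as above. One also has to check that the guarded rewriting terminates and that guardedness is preserved when a residual is itself unfolded, so that the inductive reading-off of first steps is always available. Once the adequacy lemma and this state-to-term correspondence are in place, the verification of the four clauses, including the side conditions $\phi \meet \psi \neq \bot$ and the suprema over $\dom$, is routine.
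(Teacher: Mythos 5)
Your existence argument is sound and essentially canonical (the paper itself gives no details here, deferring to the corresponding proof in~\cite{BM05a}): the SOS-induced systems $\sym{CTS}(\rec{X}{E})$ are finitely branching, the recursion rules give $\rec{X}{E}$ and $\rec{t_X}{E}$ the same outgoing behaviour, and your adequacy lemma (which must be stated for terms that may contain the constants $\rec{Y}{E}$, not only for closed \ACPec-terms, but that is the same induction), together with Proposition~\ref{prop-congruence-ACPec} and Theorem~\ref{theorem-embedding-ts-models}, turns this into a solution in every $\CPrce_\kappa$. The head-normal-form lemma you invoke for uniqueness is likewise unproblematic.

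The genuine gap is in the uniqueness half, at the sentence ``matched transitions carry identical conditions, so the supremum requirement $\alpha \beloweq \infjoin \dom(\cdot)$ \ldots\ is met by a single matching transition.'' That claim is false for arbitrary solutions, and it is precisely the point where the splitting in splitting bisimulation cannot be avoided. Solution-hood only says that $P_X$ and the interpretation of its head normal form are equal \emph{as elements of} $\CPrce_\kappa$, i.e.\ that an arbitrarily chosen representative $T_{P_X}$ is splitting bisimilar to the head-normal-form denotation built over the representatives; it does not entitle you to assume that $T_{P_X}$ itself carries the syntactic guards. Your relation controls the first level of the head-normal-form denotations, but the states reached after one step are states of the arbitrary representatives, whose outgoing transitions are not syntactically determined, so the closure-under-transitions step breaks at level two. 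A minimal example: for $E = \set{X = a \seqc X}$, take $P_X$ represented by the one-state system $T$ with the single transition $\astep{s}{\gact{\top}{a}}{s}$, and $Q_X$ represented by the one-state system $T'$ with the two transitions $\astep{s'}{\gact{\eta}{a}}{s'}$ and $\astep{s'}{\gact{\bcompl{\eta}}{a}}{s'}$; both are solutions, yet any splitting bisimulation between $T$ and $T'$ must match the $\top$-transition of $T$ against the \emph{set} of both transitions of $T'$, whose conditions join to $\top$ --- no single matching transition with an identical condition exists. Nor can you escape by ``choosing representatives'' whose transitions are syntactic at every level: such a representative is exactly the canonical (SOS) solution, so assuming every solution has one is assuming the theorem. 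The missing idea is to build the witnessing relation coinductively as a composition: at each level, compose the splitting bisimulations supplied by solution-hood (between $T_{P_X}$ and the head-normal-form denotation over $\vec{T}_P$, and symmetrically for $\vec{Q}$) with the syntactic one-to-one correspondence in the middle, and then verify the four clauses with genuine suprema (joins of families of joins, using $\kappa$-completeness), guardedness guaranteeing that the syntactic middle layer is available at every level. With that replacement your outline becomes the standard proof; as written, the identical-conditions step would fail.
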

\begin{proof}
The proof is analogous to the proof of the corresponding property for
the full splitting bisimulation models of \ACPc\ given in~\cite{BM05a}.
\qed
\end{proof}
Thus, the full splitting bisimulation models $\CPrce_\kappa{}'$ of
\ACPec\ with guarded recursion are simply the expansions of the full
splitting bisimulation models $\CPrce_\kappa$ of \ACPec\ obtained by
associating with each constant $\rec{X}{E}$ the unique solution of $E$
for $X$ in the full splitting bisimulation model concerned.

\section{Evaluation of Conditions}
\label{sect-cond-eval}

Guarded commands cannot always be eliminated from closed terms of
sort $\Proc$ because conditions different from both $\bot$ and $\top$
may be involved.
The condition evaluation operators introduced below, can be brought
into action in such cases.
These operators require to fix an infinite cardinal $\lambda$.
By doing so, full splitting bisimulation models with domain
$\CTSe_{\kappa} / {\sbisim}$ for $\kappa > \lambda$ are excluded.

There are unary $\lambda$-\emph{complete condition evaluation} operators
$\funct{\ceval{h}}{\Proc}{\Proc}$ and $\funct{\ceval{h}}{\Cond}{\Cond}$
for each $\lambda$-complete endomorphisms $h$ of $\CondAlg{\lambda}$.%
\footnote
{For a definition of $\kappa$-complete endomorphisms, see
 e.g.~\cite{MB89a}.}

These operators can be explained as follows:
$\ceval{h}(p)$ behaves as $p$ with each condition $\zeta$ occurring in
$p$ replaced according to $h$.
If the image of $\CondAlg{\lambda}$ under $h$ is $\Bool$, i.e.\ the
Boolean algebra with domain $\set{\bot,\top}$, then guarded commands
can be eliminated from $\ceval{h}(p)$.
In the case where the image of $\CondAlg{\lambda}$ under $h$ is not
$\Bool$, $\ceval{h}$ can be regarded to evaluate the conditions only
partially.

Henceforth, we write $\CondHom{\lambda}$ for the set of all
$\lambda$-complete endomorphisms of $\CondAlg{\lambda}$.

The additional axioms for $\ceval{h}$, where $h \in \CondHom{\lambda}$,
are the axioms given in Table~\ref{axioms-ceval}.
\begin{table}[!t]
\lcaption{Axioms for condition evaluation}
  {$a \in \Actd$, $\eta \in \ACond$,
   $\eta' \in \ACond \union \set{\bot,\top}$}
\label{axioms-ceval}
\begin{eqntbl}
\begin{axcol}
\ceval{h}(\ep) = \ep                                   & \axiom{CE1T}\\
\ceval{h}(a \seqc x) = a \seqc \ceval{h}(x)            & \axiom{CE2} \\
\ceval{h}(x \altc y) = \ceval{h}(x) \altc \ceval{h}(y) & \axiom{CE3} \\
\ceval{h}(\phi \gc x) =
                      \ceval{h}(\phi) \gc \ceval{h}(x) & \axiom{CE4} \\
\ceval{h}(\ceval{h'}(x)) = \ceval{h \comp h'}(x)       & \axiom{CE5}
\end{axcol}
\quad\quad
\begin{axcol}
\ceval{h}(\bot) = \bot                                 & \axiom{CE6} \\
\ceval{h}(\top) = \top                                 & \axiom{CE7} \\
\ceval{h}(\eta) = \eta'    \hfill \mif h(\eta) = \eta' & \axiom{CE8} \\
\ceval{h}(\bcompl \phi) = \bcompl \ceval{h}(\phi)      & \axiom{CE9} \\
\ceval{h}(\phi \join \psi) =
                 \ceval{h}(\phi) \join \ceval{h}(\psi) & \axiom{CE10} \\
\ceval{h}(\phi \meet \psi) =
                 \ceval{h}(\phi) \meet \ceval{h}(\psi) & \axiom{CE11}
\end{axcol}
\end{eqntbl}
\end{table}

The structural operational semantics of \ACPec\ extended with
condition evaluation is described by the transition rules for \ACPec\
and the transition rules given in Table~\ref{sos-ceval}.
\begin{table}[!t]
\caption{Transition rules for condition evaluation}
\label{sos-ceval}
\begin{ruletbl}
\RuleC
{\isterm{x}{\phi}}
{\isterm{\ceval{h}(x)}{h(\phi)}}
{h(\phi) \neq \bot}
\qquad
\RuleC
{\astep{x}{\gact{\phi}{a}}{x'}}
{\astep{\ceval{h}(x)}{\gact{h(\phi)}{a}}{\ceval{h}(x')}}
{h(\phi) \neq \bot}
\end{ruletbl}
\end{table}

If $\lambda$ is a regular infinite cardinal, the elements of
$\CondAlg{\lambda}$ can be used to represent equivalence classes with
respect to logical equivalence of the set of all propositions with
elements of $\ACond$ as propositional variables and with conjunctions
and disjunctions of less than $\lambda$ propositions.
We write $\PropAlg{\lambda}$ for this set of propositions.
If $\lambda$ is a regular infinite cardinal, it is likely that there is
a theory $\Phi$ about the atomic conditions in the shape of a set of
propositions.
Let $\Phi \subset \PropAlg{\lambda}$, and
let $h_\Phi \in \CondHom{\lambda}$ be such that for all
$\alpha,\beta \in \CondAlg{\lambda}$:
\begin{equation}
\Der{\Phi}{\repr{h_\Phi(\alpha)} \Iff \repr{\alpha}}
\quad \mathrm{and} \quad
h_\Phi(\alpha) = h_\Phi(\beta)\;\; \mathrm{iff}\;\;
\Der{\Phi}{\repr{\alpha} \Iff \repr{\beta}}
\label{endo-theory}
\end{equation}
where $\repr{\alpha}$ is a representative of the equivalence class of
propositions isomorphic to $\alpha$.
Then we have $h_\Phi(\alpha) = \top$ iff $\repr{\alpha}$ is derivable
from $\Phi$ and $h_\Phi(\alpha) = \bot$ iff $\Not \repr{\alpha}$ is
derivable from $\Phi$.
The image of $\CondAlg{\lambda}$ under $h_\Phi$ is $\Bool$ iff $\Phi$ is
a complete theory.
If $\Phi$ is not a complete theory, then $h_\Phi$ is not uniquely
determined by~(\ref{endo-theory}).
However, the images of $\CondAlg{\lambda}$ under the different
endomorphisms satisfying~(\ref{endo-theory}) are isomorphic subalgebras
of $\CondAlg{\lambda}$.
Moreover, if both $h$ and $h'$ satisfy~(\ref{endo-theory}), then
$\Der{\Phi}{\repr{h(\alpha)} \Iff \repr{h'(\alpha)}}$ for all
$\alpha \in \CondAlg{\lambda}$.

Below, we show that condition evaluation on the basis of a complete
theory can be viewed as substitution on the basis of the theory.
That leads us to the use of the following convention:
for $\alpha \in \FinCondAlg$,\, $\cterm{\alpha}$ stands for an arbitrary
closed term of sort $\Cond$ of which the value in $\FinCondAlg$ is
$\alpha$.
\begin{proposition}[Condition evaluation on the basis of a theory]
\label{prop-eval-theory}
Assume that $\lambda$ is a regular infinite cardinal.
Let $\Phi \subset \PropAlg{\lambda}$ be a complete theory and let $p$ be
a closed term of sort $\Proc$.
Then $\ceval{h_\Phi}(p) = p'$ where $p'$ is $p$ with, for all
$\alpha \in \FinCondAlg$, in all subterms of the form
$\cterm{\alpha} \gc q$,\, $\cterm{\alpha}$ replaced by $\top$ if
$\Der{\Phi}{\repr{\alpha}}$ and $\cterm{\alpha}$
replaced by $\bot$ if $\Der{\Phi}{\Not \repr{\alpha}}$.
\end{proposition}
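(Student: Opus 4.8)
The plan is to prove the provable equation $\ceval{h_\Phi}(p) = p'$ by first reducing to basic terms and then transporting the result back along the elimination property. For a closed term $t$ of sort $\Proc$, let $t'$ denote the result of applying to $t$ the substitution described in the statement, that is, replacing in every subterm $\cterm{\alpha} \gc q$ the guard $\cterm{\alpha}$ by $\top$ or by $\bot$ according to whether $\Der{\Phi}{\repr{\alpha}}$ or $\Der{\Phi}{\Not \repr{\alpha}}$; this is well defined because $\Phi$ is complete, and it agrees with the $p'$ of the statement. The operators $\seqc$ (in general), $\parc$, $\leftm$, $\commm$ and $\encap{H}$ carry no condition-evaluation axiom, so $\ceval{h_\Phi}$ cannot be pushed through them directly; this is the main obstacle, and I would bypass it by working with basic terms, i.e.\ terms built only from $\dead$, $\ep$, the action prefixes $a \seqc {}$, alternative composition and guarded command. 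Two facts then suffice: (i) $\ceval{h_\Phi}(b) = b'$ for every basic term $b$, and (ii) the substitution $t \mapsto t'$ respects provability in \ACPec. Granting these, I conclude as follows: by the elimination property for \ACPec\ (standard, as for \ACPe; cf.\ the analogous treatment for \ACPc\ in~\cite{BM05a}) there is a basic term $b$ with $p = b$ provable, whence $\ceval{h_\Phi}(p) = \ceval{h_\Phi}(b) = b' = p'$, using congruence of $\ceval{h_\Phi}$ for the first step, (i) for the second, and (ii) with $p = b$ for the third.

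The first ingredient rests on the behaviour of $\ceval{h_\Phi}$ on conditions. Since $\Phi$ is complete, $h_\Phi$ maps every $\alpha \in \FinCondAlg$ to $\top$ or to $\bot$, and by the characterisation recalled just before the proposition $h_\Phi(\alpha) = \top$ iff $\Der{\Phi}{\repr{\alpha}}$ and $h_\Phi(\alpha) = \bot$ iff $\Der{\Phi}{\Not \repr{\alpha}}$. Using CE6--CE11 together with the standard fact that two closed terms of sort $\Cond$ with the same value in $\FinCondAlg$ are provably equal, I would first derive $\ceval{h_\Phi}(\cterm{\alpha}) = \cterm{h_\Phi(\alpha)}$, which is thus provably $\top$ or $\bot$ exactly as prescribed. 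Fact (i) then follows by structural induction on $b$: the cases $\ep$, $a \seqc b'$, $b_1 \altc b_2$ and $\cterm{\alpha} \gc b'$ are handled directly by CE1T, CE2, CE3 and CE4 with the induction hypothesis, while $\dead$ is covered by the derived equation $\ceval{h_\Phi}(\dead) = \dead$ (from $\dead = \bot \gc \ep$ via GC2, CE4, CE6, CE1T) and the action constant by $a = a \seqc \ep$ (A8) with CE2 and CE1T. In the guarded-command case the identity $\ceval{h_\Phi}(\cterm{\alpha}) = \cterm{h_\Phi(\alpha)}$ is precisely what turns the guard into $\top$ or $\bot$, matching the definition of $(\cterm{\alpha} \gc b')'$.

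For the second ingredient I would check that $p = q$ provable in \ACPec\ entails $p' = q'$ provable. As $t \mapsto t'$ is homomorphic in the process structure, it is enough to verify that each axiom is preserved; the purely process-oriented axioms are preserved trivially, and for GC1--GC11 the key point is that $h_\Phi$ is a Boolean endomorphism, hence commutes with $\meet$, $\join$ and complement and fixes $\top$ and $\bot$. For guards with values $\alpha,\beta$, the substituted GC6 is again an instance of GC6 because $h_\Phi(\alpha \meet \beta) = h_\Phi(\alpha) \meet h_\Phi(\beta)$, the substituted GC7 is an instance of GC7 because $h_\Phi(\alpha \join \beta) = h_\Phi(\alpha) \join h_\Phi(\beta)$, and GC1, GC2 use $h_\Phi(\top) = \top$ and $h_\Phi(\bot) = \bot$. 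I expect the only non-routine aspects of the whole argument to be this invariance check and the appeal to the elimination property that lets me sidestep the operators lacking a condition-evaluation axiom.
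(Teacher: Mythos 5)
Your proposal is correct, but it follows a genuinely different route from the paper's own (very terse) proof. The paper argues in one line: by the definition of $h_\Phi$ and ``the distributivity of $\ceval{h_\Phi}$ over all operators of \ACPec'', the operator $\ceval{h_\Phi}$ is pushed through the intact term structure of $p$ --- through $\seqc$, $\parc$, $\leftm$, $\commm$ and $\encap{H}$ as well as through $\altc$, the prefixes and the guards --- so each guard is evaluated in place and $p'$ is obtained with the structure of $p$ unchanged; no elimination to basic terms and no transfer lemma is needed. You rightly point out that for general $\seqc$, and for $\parc$, $\leftm$, $\commm$, $\encap{H}$, this distributivity is not among the axioms CE1T--CE11, and you sidestep it: eliminate $p$ to a basic term $b$, prove $\ceval{h_\Phi}(b) = b'$ by induction using only CE1T--CE4 and CE6--CE11 (with your correct auxiliary derivations for $\dead$ and for action constants), and transfer the result back with your lemma~(ii) that the guard substitution preserves derivability. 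The trade-off is this: the paper's route needs no analogue of your lemma~(ii) and delivers $p'$ with its structure intact, but its unstated premise --- closed-term distributivity of $\ceval{h_\Phi}$ over the operators that lack CE axioms --- is itself only a derived fact, whose honest proof would require essentially the same elimination-plus-induction machinery you deploy (no direct structural induction handles, say, $(p_1 \parc p_2) \seqc q$ without first normalizing); your route is self-contained relative to the stated axioms plus the standard elimination property, at the price of lemma~(ii), whose proof-theoretic fine points (restricting proofs to closed instances, and the congruence rule for $\gc$ involving condition equations, which works out because provably equal closed terms of sort $\Cond$ have the same value in $\FinCondAlg$) you sketch only briefly but which do go through.
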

\begin{proof}
This result follows immediately from the definition of $h_\Phi$ and the
distributivity of $\ceval{h_\Phi}$ over all operators of \ACPec.
\qed
\end{proof}
In $\mu$CRL~\cite{GP94a}, an extension of \ACP\ which includes
conditional expressions, we find a formalization of the
substitution-based alternative for $\ceval{h_\Phi}$.

The substitution-based alternative works properly because condition
evaluation by means of a $\lambda$-complete condition evaluation
operator is not dependent on process behaviour.
Hence, the result of condition evaluation is globally valid.
Below, we will generalize the condition evaluation operators introduced
above in such a way that condition evaluation may be dependent on
process behaviour.
In that case, the result of condition evaluation is in general not
globally valid.

\begin{remark}
\label{remark-theory}
Assume that $\lambda$ is a regular infinite cardinal.
Let $h \in \CondHom{\lambda}$.
Then $h$ induces a theory $\Phi \subset \PropAlg{\lambda}$ such that
$h = h_\Phi$, viz.\ the theory $\Phi$ defined by
\begin{ldispl}
\Phi =
 \set{\repr{h(\alpha)} \Iff \repr{\alpha} \where
      \alpha \in \CondAlg{\lambda}} \union
 \set{\repr{\alpha} \Iff \repr{\beta} \where h(\alpha) = h(\beta)}\;.
\end{ldispl}
Consequently, if $\lambda$ is a regular infinite cardinal, condition
evaluation by means of the $\lambda$-complete condition evaluation
operators introduced above is always condition evaluation of which the
result can be determined from a set of propositions.
We will return to this observation in
Section~\ref{sect-signal-emission}.
\end{remark}

We proceed with generalizing the condition evaluation operators
introduced above.
It is assumed that a fixed but arbitrary function
$\funct{\eff}{\Act \x \CondHom{\lambda}}{\CondHom{\lambda}}$ has been
given.

There is a unary
\emph{generalized $\lambda$-complete condition evaluation} operator
$\funct{\gceval{h}}{\Proc}{\Proc}$ for each $h \in \CondHom{\lambda}$;
and there is again the unary operator $\funct{\ceval{h}}{\Cond}{\Cond}$
for each $h \in \CondHom{\lambda}$.

The $\lambda$-complete generalized condition evaluation operator
$\gceval{h}$ allows, given the function $\eff$, to evaluate conditions
dependent of process behaviour.
The function $\eff$ gives, for each action $a$ and $\lambda$-complete
endomorphism $h$, the $\lambda$-complete endomorphism $h'$ that
represents the changed results of condition evaluation due to performing
$a$.
The function $\eff$ is extended to $\Actd$ such that $\eff(\dead,h) = h$
for all $h \in \CondHom{\lambda}$.

The additional axioms for $\gceval{h}$, where $h \in \CondHom{\lambda}$,
are the axioms given in Table~\ref{axioms-gceval} and axioms CE6--CE11
from Table~\ref{axioms-ceval}.
\begin{table}[!t]
\lcaption{Axioms for generalized condition evaluation}{$a \in \Actd$}
\label{axioms-gceval}
\begin{eqntbl}
\begin{axcol}
\gceval{h}(\ep) = \ep                                   & \axiom{GCE1T}\\
\gceval{h}(a \seqc x) = a \seqc \gceval{\eff(a,h)}(x)   & \axiom{GCE2}\\
\gceval{h}(x \altc y) =
                      \gceval{h}(x) \altc \gceval{h}(y) & \axiom{GCE3}\\
\gceval{h}(\phi \gc x) =
                      \ceval{h}(\phi) \gc \gceval{h}(x) & \axiom{GCE4}
\end{axcol}
\end{eqntbl}
\end{table}

The structural operational semantics of \ACPec\ extended with
generalized condition evaluation is described by the transition rules
for \ACPec\ and the transition rules given in Table~\ref{sos-gceval}.
\begin{table}[!t]
\caption{Transition rules for generalized condition evaluation}
\label{sos-gceval}
\begin{ruletbl}
\RuleC
{\isterm{x}{\phi}}
{\isterm{\gceval{h}(x)}{h(\phi)}}
{h(\phi) \neq \bot}
\qquad
\RuleC
{\astep{x}{\gact{\phi}{a}}{x'}}
{\astep{\gceval{h}(x)}{\gact{h(\phi)}{a}}{\gceval{\eff(a,h)}(x')}}
{h(\phi) \neq \bot}
\end{ruletbl}
\end{table}

We can add both the $\lambda$-complete condition evaluation operators
and the generalized $\lambda$-complete condition evaluation operators
to \ACPec.
However, Proposition~\ref{prop-generalization-gceval} stated below makes
it clear that the latter operators supersede the former operators.

The full splitting bisimulation models of \ACPec\ with condition
evaluation and/or generalized condition evaluation are simply the
expansions of the full splitting bisimulation models $\CPrce_\kappa$ of
\ACPec, for infinite cardinals $\kappa \leq \lambda$, obtained by
associating with each operator $\ceval{h}$ and/or $\gceval{h}$ the
corresponding re-labeling operation on conditional transition systems.
As mentioned before, full splitting bisimulation models with domain
$\CTSe_{\kappa} / {\sbisim}$ for $\kappa > \lambda$ are excluded.

The equation $\ceval{h}(\ceval{h'}(x)) = \ceval{h \comp h'}(x)$ is an
axiom, but the equation
$\gceval{h}(\gceval{h'}(x)) = \gceval{h \comp h'}(x)$ is not an axiom.
The reason is that the latter equation is only valid if $\eff$ satisfies
$\eff(a,h \comp h') = \eff(a,h) \comp \eff(a,h')$ for all $a \in \Act$
and $h,h' \in \CondHom{\lambda}$.

As their name suggests, the generalized $\lambda$-complete condition
evaluation operators are generalizations of the $\lambda$-complete
condition evaluation operators.
\begin{proposition}[Generalization]
\label{prop-generalization-gceval}
We can fix the function $\eff$ such that $\gceval{h}(x) = \ceval{h}(x)$
for all $h \in \CondHom{\lambda}$.
\end{proposition}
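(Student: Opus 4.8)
The plan is to fix $\eff$ to be the projection onto its second argument, i.e.\ to put $\eff(a,h) = h$ for every $a \in \Act$ and every $h \in \CondHom{\lambda}$. This is a legitimate function $\Act \x \CondHom{\lambda} \to \CondHom{\lambda}$, and it is consistent with the stipulated extension of $\eff$ to $\Actd$ by $\eff(\dead,h) = h$, so after the extension we simply have $\eff(a,h) = h$ for all $a \in \Actd$. The point to exploit is that the entire difference between the generalized operators $\gceval{h}$ and the plain operators $\ceval{h}$ is concentrated in the occurrence of $\eff(a,h)$ in axiom GCE2 (and, correspondingly, in the target $\gceval{\eff(a,h)}(x')$ of the step rule for $\gceval{h}$ in Table~\ref{sos-gceval}); with the above choice this occurrence collapses to $h$.

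First I would record the effect of the choice on the axioms. With $\eff(a,h) = h$, axiom GCE2 reads $\gceval{h}(a \seqc x) = a \seqc \gceval{h}(x)$, which is exactly CE2 with $\gceval{h}$ in place of $\ceval{h}$; axioms GCE1T, GCE3 and GCE4 already have the same shape as CE1T, CE3 and CE4 (note that GCE4 and CE4 use the identical condition-level term $\ceval{h}(\phi)$), and the condition-level axioms CE6--CE11 are shared. As a consistency check one may also note that CE5 now transfers: since $\eff(a,h \comp h') = h \comp h' = \eff(a,h) \comp \eff(a,h')$, the composition law $\gceval{h}(\gceval{h'}(x)) = \gceval{h \comp h'}(x)$ becomes valid, matching CE5.

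To actually conclude $\gceval{h}(x) = \ceval{h}(x)$, I would argue semantically, which also handles the free variable $x$ uniformly. Recall that in the full splitting bisimulation models both $\ceval{h}$ and $\gceval{h}$ are interpreted as re-labeling operations on conditional transition systems read off from their transition rules. Under $\eff(a,h) = h$ the target $\gceval{\eff(a,h)}(x')$ in the step rule of Table~\ref{sos-gceval} becomes $\gceval{h}(x')$, so that rule, together with its side condition $h(\phi) \neq \bot$ and the termination rule, is literally the rule of Table~\ref{sos-ceval} with $\gceval{h}$ renamed to $\ceval{h}$. Hence for every closed term $p$ the systems $\sym{CTS}(\gceval{h}(p))$ and $\sym{CTS}(\ceval{h}(p))$ coincide up to isomorphism, so $\gceval{h}(p) \sbisim \ceval{h}(p)$; and at the level of operations the two re-labelings of any $T \in \CTSe_\kappa$ yield the very same system, so $\gceval{h}$ and $\ceval{h}$ denote one and the same operation on $\CTSe_\kappa / {\sbisim}$. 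Consequently the equation holds under every assignment to $x$, which is the claim.

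The proposition is essentially bookkeeping, so I do not expect a genuine obstacle. The only points requiring a little care are confirming that the projection choice of $\eff$ is admissible and compatible with the prescribed value on $\dead$, and checking that the coincidence of the two sets of rules is total --- in particular that the pruning governed by $h(\phi) \neq \bot$ and the recursive descent into $x'$ behave identically on both sides. Once $\eff(a,h) = h$ is in place, both are immediate.
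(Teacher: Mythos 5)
Your proposal is correct and matches the paper's proof exactly: the paper likewise fixes $\eff(a,h') = h'$ for all $a \in \Act$ and $h' \in \CondHom{\lambda}$ and observes that the equation then holds, leaving the verification implicit. Your additional bookkeeping (the axiom-by-axiom comparison and the coincidence of the transition rules in Tables~\ref{sos-ceval} and~\ref{sos-gceval}) simply spells out what the paper's one-line argument takes to be clear.
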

\begin{proof}
Clearly, if $\eff(a,h') = h'$ for all $a \in \Act$ and
$h' \in \CondHom{\lambda}$, then $\gceval{h}(x) = \ceval{h}(x)$ for all
$h \in \CondHom{\lambda}$.
\qed
\end{proof}
The $\lambda$-complete state operators that are added to \ACPec\ in
Section~\ref{sect-state-operator} are in their turn generalizations
of the generalized $\lambda$-complete condition evaluation operators.

We come back to the $\lambda$-complete condition evaluation operators
$\ceval{h}$ for $h \in \CondHom{\lambda}$.
The image of $\CondAlg{\lambda}$ under the $\lambda$-complete
endomorphism $h$ is a subalgebra of $\CondAlg{\lambda}$ that is
$\lambda$-complete too.
For that reason, we could have used $\lambda$-complete homomorphisms to
subalgebras that are $\lambda$-complete instead of $\lambda$-complete
endomorphisms.
It would go beyond the models of the theory developed so far to
generalize this in such a way that $\lambda$-complete homomorphisms to
$\lambda$-complete Boolean algebras different from subalgebras of
$\CondAlg{\lambda}$ are also included.

However, in the case where we consider $\lambda$-complete homomorphisms
between free $\lambda$-complete Boolean algebras over different sets of
generators, we can relate the models for different choices for $\ACond$.

Let $C$ and $C'$ be different choices for $\ACond$,%
\footnote
{The interesting cases are those where the cardinalities of $C$ and
 $C'$ are different. Otherwise, the homomorphisms are isomorphisms.}
and let $\CPrce_\kappa(C)$ and $\CPrce_\kappa(C')$, for
$\kappa \leq \lambda$, be the full splitting bisimulation models
$\CPrce_\kappa$ of \ACPec\ for the different choices for $\ACond$.
Moreover, let $h$ be a $\lambda$-complete homomorphism from the free
$\lambda$-complete Boolean algebra over $C$ to the free
$\lambda$-complete Boolean algebra over $C'$.
Then $h$ can be extended to a homomorphism $h^*$ from $\CPrce_\kappa(C)$
to $\CPrce_\kappa(C')$.
This homomorphism is defined by
\begin{ldispl}
h^*(\sbeqvc{\tup{S,{\stepf},{\stermf},s^0}}) =
\sbeqvc{\conn(S,{\stepf}',{\stermf}',s^0)}\;,
\end{ldispl}
where for every
$\tup{\alpha,a} \in \CondAlgm{\kappa} \x \Act$ and
$\alpha' \in \CondAlgm{\kappa}$:
\begin{ldispl}
\begin{aeqns}
{\step{\tup{\alpha,a}}}{}' & = &
\set{\tup{s,s'} \where
     \Exists{\beta}
      {\astep{s}{\gact{\beta}{a}}{s'} \And \alpha = h(\beta)}}\;,
\\
{\sterm{\alpha'}}{}' & = &
\set{s \where
     \Exists{\beta}{\isterm{s}{\beta} \And \alpha' = h(\beta)}}\;.
\end{aeqns}
\end{ldispl}
It is easy to see that $h^*$ is well-defined and a homomorphism indeed.

Thus, a $\lambda$-complete homomorphism between $\lambda$-complete
Boolean algebras over different sets of generators can be used to
translate conditions throughout a full splitting bisimulation model for
one choice of $\ACond$ in such a way that a full splitting bisimulation
model for a different choice of $\ACond$ is obtained.

\section{State Operators}
\label{sect-state-operator}

The state operators make it easy to represent the execution of a
process in a state.
The basic idea is that the execution of an action in a state has effect
on the state, i.e.\ it causes a change of state.
Besides, there is an action left when an action is executed in a state.
The operators introduced here generalize the state operators added to
\ACP\ in~\cite{BB88}.
The main difference with those operators is that guarded commands are
taken into account.
% The state operators introduced here are the one introduced
% in~\cite{BB92c} adapted to a setting without signals.
%
As in the case of the condition evaluation operators and the generalized
condition evaluation operators, these state operators require to fix an
infinite cardinal $\lambda$.
By doing so, full splitting bisimulation models with domain
$\CTSe_{\kappa} / {\sbisim}$ for $\kappa > \lambda$ are excluded.

It is assumed that a fixed but arbitrary set $S$ of \emph{states} has
been given, together with functions
$\funct{\act}{\Act \x S}{\Actd}$, $\funct{\eff}{\Act \x S}{S}$ and
$\funct{\eval}{\CondAlg{\lambda} \x S}{\CondAlg{\lambda}}$, where, for
each $s \in S$, the function
$\funct{h_s}{\CondAlg{\lambda}}{\CondAlg{\lambda}}$ defined by
$h_s(\alpha) = \eval(\alpha,s)$ is a $\lambda$-complete endomorphism of
$\CondAlg{\lambda}$.

There are unary $\lambda$-\emph{complete state operators}
$\funct{\state{s}}{\Proc}{\Proc}$ and $\funct{\state{s}}{\Cond}{\Cond}$
for each $s \in S$.%
\footnote
{Holding on to the usual conventions leads to the double use of the
 symbol $\lambda$: without subscript it stands for an infinite cardinal,
 and with subscript it stands for a state operator.}

The $\lambda$-complete state operator $\state{s}$ allows, given the
above-mentioned functions, processes to interact with a state.
Let $p$ be a process.
Then $\state{s}(p)$ is the process $p$ executed in state $s$.
The function $\act$ gives, for each action $a$ and state $s$,
the action that results from executing $a$ in state $s$.
The function $\eff$ gives, for each action $a$ and state $s$,
the state that results from executing $a$ in state $s$.
The function $\eval$ gives, for each condition $\alpha$ and state $s$,
the condition that results from evaluating $\alpha$ in state $s$.
The functions $\act$ and $\eff$ are extended to $\Actd$ such that
$\act(\dead,s) = \dead$ and $\eff(\dead,s) = s$ for all $s \in S$.

The additional axioms for $\state{s}$, where $s \in S$, are the axioms
given in Table~\ref{axioms-state-op}.
\begin{table}[!t]
\lcaption{Axioms for state operators}
  {$a \in \Actd$, $\eta \in \ACond$,
   $\eta' \in \ACond \union \set{\bot,\top}$}
\label{axioms-state-op}
\begin{eqntbl}
\begin{axcol}
\state{s}(\ep) = \ep                                   & \axiom{SO1T} \\
\state{s}(a \seqc x) = \act(a,s) \seqc \state{\eff(a,s)}(x)
                                                       & \axiom{SO2} \\
\state{s}(x \altc y) = \state{s}(x) \altc \state{s}(y) & \axiom{SO3} \\
\state{s}(\phi \gc x) = \state{s}(\phi) \gc \state{s}(x)
                                                       & \axiom{SO4}
\end{axcol}
\quad\quad
\begin{axcol}
\state{s}(\bot) = \bot                                 & \axiom{SO5} \\
\state{s}(\top) = \top                                 & \axiom{SO6} \\
\state{s}(\eta) = \eta'
                     \qquad \mif \eval(\eta,s) = \eta' & \axiom{SO7} \\
\state{s}(\bcompl \phi) = \bcompl \state{s}(\phi)      & \axiom{SO8} \\
\state{s}(\phi \join \psi) =
                 \state{s}(\phi) \join \state{s}(\psi) & \axiom{SO9} \\
\state{s}(\phi \meet \psi) =
                 \state{s}(\phi) \meet \state{s}(\psi) & \axiom{SO10}
\end{axcol}
\end{eqntbl}
\end{table}

The structural operational semantics of \ACPec\ extended with
state operators is described by the transition rules for \ACPec\
and the transition rules given in Table~\ref{sos-state-op}.
\begin{table}[!t]
\caption{Transition rules for state operators}
\label{sos-state-op}
\begin{ruletbl}
\RuleC
{\isterm{x}{\phi}}
{\isterm{\state{s}(x)}{\eval(\phi,s)}}
{\eval(\phi,s) \neq \bot}
\\
\RuleC
{\astep{x}{\gact{\phi}{a}}{x'}}
{\astep{\state{s}(x)}{\gact{\eval(\phi,s)}{\act(a,s)}}
                               {\state{\eff(a,s)}(x')}}
{\act(a,s) \neq \dead,\; \eval(\phi,s) \neq \bot}
\end{ruletbl}
\end{table}

The full splitting bisimulation models of \ACPec\ with state operators
are simply the expansions of the full splitting bisimulation models
$\CPrce_\kappa$ of \ACPec\ obtained by associating with each operator
$\state{s}$ the corresponding re-labeling operation on conditional
transition systems.

We can add, in addition to the $\lambda$-complete state operators, the
$\lambda$-complete condition evaluation operators and/or the generalized
$\lambda$-complete condition evaluation operators from
Section~\ref{sect-cond-eval} to \ACPec.

We write $\CPrceext_\kappa$ for the expansion of $\CPrce_\kappa$ for the
$\lambda$-complete condition evaluation operators, the generalized
$\lambda$-complete condition evaluation operators and the
$\lambda$-complete state operators.

The $\lambda$-complete state operators are generalizations of the
generalized $\lambda$-complete condition evaluation operators from
Section~\ref{sect-cond-eval}.
\begin{proposition}[Generalization]
\label{prop-generalization-so}
We can fix $S$, $\act$, $\eff$ and $\eval$ such that, for some
$\funct{f}{\CondHom{\lambda}}{S}$,\, $\state{f(h)}(x) = \gceval{h}(x)$
holds for all $h \in \CondHom{\lambda}$ in all full splitting
bisimulation models $\CPrceext_\kappa$ with $\kappa \leq \lambda$.
\end{proposition}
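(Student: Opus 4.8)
The plan is to realise the generalized condition evaluation operators as state operators by taking the states to be the endomorphisms themselves. Concretely, I would fix $S = \CondHom{\lambda}$ and take $f$ to be the identity on $\CondHom{\lambda}$, so that $f(h) = h$. For the accompanying functions I would set $\act(a,s) = a$ for all $a \in \Act$ and $s \in S$, let the state-operator function $\funct{\eff}{\Act \x S}{S}$ coincide with the function $\funct{\eff}{\Act \x \CondHom{\lambda}}{\CondHom{\lambda}}$ already fixed for the generalized condition evaluation operators (the two have the same type once $S = \CondHom{\lambda}$), and put $\eval(\alpha,s) = s(\alpha)$. With these choices the endomorphism $h_s$ associated with a state $s$ is $h_s = s$, which is indeed a $\lambda$-complete endomorphism of $\CondAlg{\lambda}$; hence all requirements on the parameters of the state operators are met.

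Next I would check that, under this instantiation, the transition rules for $\state{s}$ from Table~\ref{sos-state-op} turn into exactly the transition rules for $\gceval{h}$ from Table~\ref{sos-gceval}. Writing $s = h$, the termination rule for $\state{h}$ has conclusion $\isterm{\state{h}(x)}{h(\phi)}$ under side condition $h(\phi) \neq \bot$, which is precisely the termination rule for $\gceval{h}$. Likewise the action rule for $\state{h}$ has conclusion $\astep{\state{h}(x)}{\gact{h(\phi)}{a}}{\state{\eff(a,h)}(x')}$, matching the action rule for $\gceval{h}$ after reading $\state{h}$ as $\gceval{h}$. The one point that needs a moment's care is the extra side condition $\act(a,s) \neq \dead$ present in the state-operator rule but absent from the generalized-condition-evaluation rule: since $\act(a,h) = a$ with $a \in \Act$ while $\dead \not\in \Act$, this side condition is vacuously satisfied and so never blocks a transition. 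This is really the only place where the argument is more than a direct rewriting, and it is where I expect the sole (minor) obstacle to lie.

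Because the operations on $\CTSe_\kappa$, and hence on $\CTSe_\kappa / {\sbisim}$, associated with $\state{h}$ and with $\gceval{h}$ are defined by precisely these re-labelings, and the re-labelings coincide, the two operations are literally the same. Consequently $\state{f(h)}(x) = \gceval{h}(x)$ holds for all $h \in \CondHom{\lambda}$ in every full splitting bisimulation model $\CPrceext_\kappa$ with $\kappa \leq \lambda$, as required. Equivalently, one may verify at the level of axioms that under the above choices SO1T--SO4 become GCE1T--GCE4 and SO5--SO10 become CE6--CE11, with $\state{h}$ on processes read as $\gceval{h}$ and $\state{h}$ on conditions read as $\ceval{h}$, so that the defining equations of the two families of operators agree.
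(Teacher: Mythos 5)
Your proposal is correct and is essentially the paper's own proof: the paper fixes exactly the same data ($S = \CondHom{\lambda}$, $f$ the identity, $\act(a,s) = a$, the state-operator $\eff$ agreeing with the given $\eff$ via $\eff(a,s) = \eff(a,f^{-1}(s))$, and $\eval(\alpha,s) = f^{-1}(s)(\alpha)$, i.e.\ $s(\alpha)$) and then asserts the identity $\state{f(h)}(x) = \gceval{h}(x)$ as clear. Your extra verification that the transition rules and axioms coincide under this instantiation, including the observation that the side condition $\act(a,s) \neq \dead$ is vacuous since $\act(a,h) = a \in \Act$, merely spells out what the paper leaves implicit.
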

\begin{proof}
Clearly, if $S = \CondHom{\lambda}$, $f$ is the identity function on
$\CondHom{\lambda}$, and $\act(a,s) = a$,
$\eff(a,s) = \eff(a,f^{-1}(s))$ and
$\eval(\alpha,s) = f^{-1}(s)(\alpha)$ for all $a \in \Act$, $s \in S$
and $\alpha \in \CondAlg{\lambda}$, then
$\state{f(h)}(x) = \gceval{h}(x)$ holds for all
$h \in \CondHom{\lambda}$ in all full splitting bisimulation models
$\CPrceext_\kappa$ with $\kappa \leq \lambda$.
\qed
\end{proof}

\section{Signal Emission}
\label{sect-signal-emission}

In Section~\ref{sect-cond-eval}, we made the observation that, if
$\lambda$ is a regular infinite cardinal, condition evaluation by means
of the $\lambda$-complete condition evaluation operators $\ceval{h}$
from that section is always condition evaluation of which the result can
be determined from a set of propositions
(see Remark~\ref{remark-theory}).
A similar observation can be made about condition evaluation by means
of the generalized $\lambda$-complete condition evaluation operators
$\gceval{h}$ from that section.
In the case of condition evaluation by means of $\ceval{h}$, the set of
propositions determining the result of condition evaluation does not
change as a process proceeds.
In the case of condition evaluation by means of $\gceval{h}$, it may
happen that the set of propositions determining the result of condition
evaluation changes as a process proceeds.
That is, the sets of propositions relevant to a process and its
subprocesses may differ.
This suggest that condition evaluation can also be dealt with by
explicitly associating sets of propositions with processes.
The intuition is, then, that all propositions from the set of
propositions associated with a process holds at the start of the
process.

Clearly, if we restrict ourselves to sets of propositions of cardinality
less than a regular infinite cardinal $\lambda$, we can associate
elements of $\CondAlg{\lambda}$ with processes instead.
In line with~\cite{BB92c}, the element of $\CondAlg{\lambda}$ associated
with a process is called the signal emitted by the process.
Because $\bot$ represents the proposition $\False$, the proposition
that cannot hold at the start of any process, we regard a process with
which $\bot$ is associated as an inconsistency.
However, in an algebraic setting, we cannot exclude this inconsistency.
Therefore, we consider it to be a special process, which is called the
inaccessible process.%
\footnote
{In~\cite{BMU98a,BM03a}, this process is rather contradictory called
 the non-existent process. Its new name was prompted by the fact that
 after performing an action no process will ever proceed as this
 process.}

The idea to associate elements of $\CondAlg{\lambda}$ with processes
naturally suggests itself in the case where $\lambda$ is a regular
infinite cardinal.
However, there are no trammels to drop the restriction that $\lambda$ is
regular.

All this leads us to an extension of \ACPec, called \ACPecs, with the
following additional constants and operators:
\begin{iteml}
\item
the \emph{inaccessible process} constant $\const{\nex}{\Proc}$;
\item
the binary \emph{signal emission} operator
$\funct{\emi}{\Cond \x \Proc}{\Proc}$.
\end{iteml}

The axioms of \ACPecs\ are the axioms of \ACPec\ with axioms CM2, CM3
and GC8--GC10 replaced by axioms CM2ST, CM3S and GC8S--GC10S from
Table~\ref{axioms-ACPec-adapt-emi},%
\begin{table}[!t]
\lcaption{Axioms adapted to signal emission}{$a \in \Actd$}
\label{axioms-ACPec-adapt-emi}
\begin{eqntbl}
\begin{axcol}
\ep \leftm x = \encap{\Act}(x)                       & \axiom{CM2ST}\\
a \seqc x \leftm y =
           a \seqc (x \parc y) \altc \encap{\Act}(y) & \axiom{CM3S}
\eqnsep
(\phi \gc x) \leftm y =
         \phi \gc (x \leftm y) \altc \encap{\Act}(y) & \axiom{GC8S}\\
(\phi \gc x) \commm y =
         \phi \gc (x \commm y) \altc \encap{\Act}(y) & \axiom{GC9S}\\
x \commm (\phi \gc y) =
         \phi \gc (x \commm y) \altc \encap{\Act}(x) & \axiom{GC10S}
\end{axcol}
\end{eqntbl}
\end{table}
and the additional axioms given in Table~\ref{axioms-ACPec-emi}.%
\begin{table}[!t]
\lcaption{Additional axioms for signal emission}{$a \in \Actd$}
\label{axioms-ACPec-emi}
\begin{eqntbl}
\begin{axcol}
x \altc \nex = \nex                                   & \axiom{NE1}\\
\nex \seqc x = \nex                                   & \axiom{NE2}\\
a \seqc \nex = \dead                                  & \axiom{NE3}\\
\top \emi x = x                                       & \axiom{SE1}\\
\bot \emi x = \nex                                    & \axiom{SE2}\\
\phi \emi x \altc y = \phi \emi (x \altc y)           & \axiom{SE3}\\
(\phi \emi x) \seqc y = \phi \emi x \seqc y           & \axiom{SE4}
\end{axcol}
\qquad
\begin{axcol}
\phi \emi (\psi \emi x ) = (\phi \meet \psi) \emi x   & \axiom{SE5}\\
\phi \emi (\phi \gc x) = \phi \emi  x                 & \axiom{SE6}\\
\phi \gc (\psi \emi x) =
          (\bcompl \phi \join \psi) \emi (\phi \gc x) & \axiom{SE7}\\
(\phi \emi x) \leftm y = \phi \emi (x \leftm y)       & \axiom{SE8}\\
(\phi \emi x) \commm y = \phi \emi (x \commm y)       & \axiom{SE9}\\
x \commm (\phi \emi y) = \phi \emi (x \commm y)       & \axiom{SE10}\\
\encap{H}(\phi \emi x) = \phi \emi \encap{H}(x)       & \axiom{SE11}
\end{axcol}
\end{eqntbl}
\end{table}
Axioms NE1--NE3 and SE1--SE11 have been taken from~\cite{BB94b} and
axioms GC9S and GC10S have been taken from~\cite{BB94b} with subterms
of the form $\rsgn{x} \emi \dead$ replaced by $\encap{\Act}(x)$.
Axioms CM2ST, CM3S and GC8S differ really from the corresponding axioms
in~\cite{BB94b} due to the choice of having as the signal emitted by
the left merge of two processes, as in the case of the communication
merge, always the meet of the signals emitted by the two processes.

In the structural operational semantics of \ACPecs, unary relations
$\sgn{\alpha}$, one for each $\alpha \in \FinCondAlg \diff \set{\bot}$,
are used in addition to the relations $\sterm{\alpha}$ and $\step{\ell}$.
We write $\rsgn{p} = \alpha$ instead of $p \in \sgn{\alpha}$.
The relation $\sgn{\alpha}$ can be explained as follows:
\begin{iteml}
\item
$\rsgn{p} = \alpha$: $p$ emits the signal $\alpha$.
\end{iteml}
The structural operational semantics of \ACPecs\ is described by the
transition rules given in Table~\ref{sos-ACPec-emi}.%
\begin{table}[!t]
\caption{Transition rules for \ACPecs}
\label{sos-ACPec-emi}
\begin{ruletbl}
\Rule
{\phantom{\isterm{\ep}{\top}}}
{\isterm{\ep}{\top}}
\qquad
\Rule
{\phantom{\astep{a}{\gact{\top}{a}}{\ep}}}
{\astep{a}{\gact{\top}{a}}{\ep}}
\\
\Rule
{\isterm{x}{\phi},\; \rsgn{x \altc y} \neq \bot}
{\isterm{x \altc y}{\phi}}
\quad
\Rule
{\isterm{y}{\phi},\; \rsgn{x \altc y} \neq \bot}
{\isterm{x \altc y}{\phi}}
\\
\Rule
{\astep{x}{\gact{\phi}{a}}{x'},\; \rsgn{x \altc y} \neq \bot}
{\astep{x \altc y}{\gact{\phi}{a}}{x'}}
\quad
\Rule
{\astep{y}{\gact{\phi}{a}}{y'},\; \rsgn{x \altc y} \neq \bot}
{\astep{x \altc y}{\gact{\phi}{a}}{y'}}
\\
\Rule
{\isterm{x}{\phi},\; \isterm{y}{\phi}}
{\isterm{x \seqc y}{\phi \meet \psi}}
\quad
\RuleC
{\isterm{x}{\phi},\; \astep{y}{\gact{\psi}{a}}{y'}}
{\astep{x \seqc y}{\gact{\phi \meet \psi}{a}}{y'}}
{\phi \meet \psi \neq \bot}
\quad
\Rule
{\astep{x}{\gact{\phi}{a}}{x'}}
{\astep{x \seqc y}{\gact{\phi}{a}}{x' \seqc y}}
\\
\RuleC
{\isterm{x}{\phi}}
{\isterm{\psi \gc x}{\phi \meet \psi}}
{\phi \meet \psi \neq \bot}
\quad
\RuleC
{\astep{x}{\gact{\phi}{a}}{x'}}
{\astep{\psi \gc x}{\gact{\phi \meet \psi}{a}}{x'}}
{\phi \meet \psi \neq \bot}
\\
\Rule
{\isterm{x}{\phi},\; \rsgn{\psi \emi x} \neq \bot}
{\isterm{\psi \emi x}{\phi}}
\quad
\Rule
{\astep{x}{\gact{\phi}{a}}{x'},\; \rsgn{\psi \emi x} \neq \bot}
{\astep{\psi \emi x}{\gact{\phi}{a}}{x'}}
\\
\RuleC
{\isterm{x}{\phi},\; \isterm{y}{\psi},\; \rsgn{x \parc y} \neq \bot}
{\isterm{x \parc y}{\phi \meet \psi}}
{\phi \meet \psi \neq \bot}
\\
\Rule
{\astep{x}{\gact{\phi}{a}}{x'},\;
 \rsgn{x \parc y} \neq \bot,\; \rsgn{x' \parc y} \neq \bot}
{\astep{x \parc y}{\gact{\phi}{a}}{x' \parc y}}
\quad
\Rule
{\astep{y}{\gact{\phi}{a}}{y'},\;
 \rsgn{x \parc y} \neq \bot,\; \rsgn{x \parc y'} \neq \bot}
{\astep{x \parc y}{\gact{\phi}{a}}{x \parc y'}}
\\
\RuleC
{\astep{x}{\gact{\phi}{a}}{x'},\; \astep{y}{\gact{\psi}{b}}{y'},\;
 \rsgn{x \parc y} \neq \bot,\; \rsgn{x' \parc y'} \neq \bot}
{\astep{x \parc y}{\gact{\phi \meet \psi}{c}}{x' \parc y'}}
{a \commm b = c,\; \phi \meet \psi \neq \bot}
\\
\Rule
{\astep{x}{\gact{\phi}{a}}{x'},\;
 \rsgn{x \leftm y} \neq \bot,\; \rsgn{x' \parc y} \neq \bot}
{\astep{x \leftm y}{\gact{\phi}{a}}{x' \parc y}}
\\
\RuleC
{\astep{x}{\gact{\phi}{a}}{x'},\; \astep{y}{\gact{\psi}{b}}{y'},\;
 \rsgn{x \commm y} \neq \bot,\; \rsgn{x' \parc y'} \neq \bot}
{\astep{x \commm y}{\gact{\phi \meet \psi}{c}}{x' \parc y'}}
{a \commm b = c,\; \phi \meet \psi \neq \bot}
\\
\Rule
{\isterm{x}{\phi}}
{\isterm{\encap{H}(x)}{\phi}}
\quad
\RuleC
{\astep{x}{\gact{\phi}{a}}{x'}}
{\astep{\encap{H}(x)}{\gact{\phi}{a}}{\encap{H}(x')}}
{a \not\in H}
\eqnsep
\Rule
{\phantom{\rsgn{\nex} = \bot}}
{\rsgn{\nex} = \bot}
\quad
\Rule
{\phantom{\rsgn{\ep} = \top}}
{\rsgn{\ep} = \top}
\quad
\Rule
{\phantom{\rsgn{a} = \top}}
{\rsgn{a} = \top}
\quad
\Rule
{\rsgn{x} = \phi,\; \rsgn{y} = \psi}
{\rsgn{x \altc y} = \phi \meet \psi}
\\
\Rule
{\rsgn{x} = \phi}
{\rsgn{x \seqc y} = \phi}
\quad
\Rule
{\rsgn{x} = \phi}
{\rsgn{\psi \gc y} = \bcompl{\psi} \join \phi}\quad
\Rule
{\rsgn{x} = \phi}
{\rsgn{\psi \emi y} = \psi \meet \phi}
\\
\Rule
{\rsgn{x} = \phi,\; \rsgn{y} = \psi}
{\rsgn{x \parc y} = \phi \meet \psi}
\quad
\Rule
{\rsgn{x} = \phi,\; \rsgn{y} = \psi}
{\rsgn{x \leftm y} = \phi \meet \psi}
\quad
\Rule
{\rsgn{x} = \phi,\; \rsgn{y} = \psi}
{\rsgn{x \commm y} = \phi \meet \psi}
\quad
\Rule
{\rsgn{x} = \phi}
{\rsgn{\encap{H}(x)} = \phi}
\end{ruletbl}
\end{table}
These transition rules include all transition rules from
Table~\ref{sos-ACPec} with additional premises to exclude transitions
from or to processes that emit the signal $\bot$.
There are additional transition rules describing the signals emitted by
the processes.
The transition rules for signal emission are new as well.

The following gives a good picture of the nature of signals and
conditions.
\begin{proposition}[Signals and conditions]
\label{prop-sgn-cond}
If $\Der{\repr{\alpha}\;}{\;\repr{\beta} {\Iff} \repr{\beta'}}$, then
$\cterm{\alpha} \emi (\cterm{\beta} \gc x) =
 \cterm{\alpha} \emi (\cterm{\beta'} \gc x)$.
\end{proposition}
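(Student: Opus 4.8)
The plan is to reduce the claim to a single Boolean identity and then to push the guard under the emitted signal, where it is absorbed. First I would translate the hypothesis: $\Der{\repr{\alpha}}{\repr{\beta} \Iff \repr{\beta'}}$ holds exactly when $\beta$ and $\beta'$ agree below $\alpha$ in $\FinCondAlg$, i.e.\ when $\alpha \meet \beta = \alpha \meet \beta'$ (entailment of $\repr{\beta} \Iff \repr{\beta'}$ by $\repr{\alpha}$ says precisely that $\alpha \meet \beta = \alpha \meet \beta'$). Since the axioms BA1--BA8 axiomatise the free Boolean algebra and closed condition terms use only the finitary operations, equal values yield derivable equalities; hence $\cterm{\alpha} \meet \cterm{\beta} = \cterm{\alpha} \meet \cterm{\beta'}$ is provable from BA.

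Next I would carry out the process-level derivation. The point is that, underneath the signal $\cterm{\alpha}$, the guard $\cterm{\alpha}$ may be introduced and removed at will by SE6, and once present it merges with the existing guard by GC6 into exactly the meet controlled by the hypothesis. Concretely, $\cterm{\alpha} \emi (\cterm{\beta} \gc x) = \cterm{\alpha} \emi (\cterm{\alpha} \gc (\cterm{\beta} \gc x)) = \cterm{\alpha} \emi ((\cterm{\alpha} \meet \cterm{\beta}) \gc x)$, where the first equality is SE6 read from right to left (with $\cterm{\beta} \gc x$ in the role of $x$) and the second is GC6. By the Boolean identity from the first paragraph this equals $\cterm{\alpha} \emi ((\cterm{\alpha} \meet \cterm{\beta'}) \gc x)$, and applying GC6 and then SE6 in the reverse direction brings it to $\cterm{\alpha} \emi (\cterm{\beta'} \gc x)$, which is the desired conclusion.

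The only genuinely delicate step is the translation in the first paragraph: one must pin down what $\Der{\cdot}{\cdot}$ means for the propositional representatives and verify that entailment of $\repr{\beta} \Iff \repr{\beta'}$ from $\repr{\alpha}$ is faithfully recorded by the equation $\alpha \meet \beta = \alpha \meet \beta'$, after which completeness of BA does the lifting to an equality of condition terms. Everything after that is a routine three-line calculation using only SE6 and GC6; in particular it never touches $\ep$, so the presence of the empty process constant plays no role and the argument is identical to the one for \ACPcs.
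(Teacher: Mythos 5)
Your proof is correct and is essentially the paper's own argument: the paper proves this proposition simply by citing the corresponding proposition for \ACPcs\ in~\cite{BM05a}, whose proof is exactly your calculation --- translate the hypothesis into $\alpha \meet \beta = \alpha \meet \beta'$ via freeness of the condition algebra, then conclude with SE6 and GC6 (both of which survive unchanged in \ACPecs). Your closing observation that the derivation never involves $\ep$ is precisely the justification the paper relies on when it says the proof carries over verbatim from the setting without the empty process.
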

\begin{proof}
The proof is the same to the proof of the corresponding proposition in
the setting of \ACPcs\ given in~\cite{BM05a}.
\qed
\end{proof}
We have the following corollaries from Proposition~\ref{prop-sgn-cond}.
\begin{corollary}
\label{corollary-sgn-cond-1}
If $\Der{\repr{\alpha}\;}{\;\repr{\beta}}$, then
$\cterm{\alpha} \emi (\cterm{\beta} \gc x) = \cterm{\alpha} \emi x$.
If $\Der{\repr{\alpha}\;}{\;\Not \repr{\beta}}$, then
$\cterm{\alpha} \emi (\cterm{\beta} \gc x) = \cterm{\alpha} \emi \dead$.
\end{corollary}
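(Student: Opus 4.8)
The plan is to obtain both implications directly from Proposition~\ref{prop-sgn-cond} by instantiating its free witness $\beta'$ with $\top$ in the first case and with $\bot$ in the second, and then collapsing the resulting guarded command by means of the base axioms GC1 and GC2.

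For the first implication I would take $\beta' = \top$. Since $\repr{\top}$ is a tautology, the hypothesis $\Der{\repr{\alpha}}{\repr{\beta}}$ gives $\Der{\repr{\alpha}}{\repr{\beta} \Iff \repr{\top}}$. Proposition~\ref{prop-sgn-cond} then yields $\cterm{\alpha} \emi (\cterm{\beta} \gc x) = \cterm{\alpha} \emi (\cterm{\top} \gc x)$. The term $\cterm{\top}$ denotes $\top$ in $\FinCondAlg$ and is therefore provably equal to $\top$ (completeness of the BA axioms for the free Boolean algebra), so axiom GC1 gives $\cterm{\top} \gc x = \top \gc x = x$, and hence $\cterm{\alpha} \emi (\cterm{\beta} \gc x) = \cterm{\alpha} \emi x$, as required.

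For the second implication I would take $\beta' = \bot$. The hypothesis $\Der{\repr{\alpha}}{\Not \repr{\beta}}$ gives $\Der{\repr{\alpha}}{\repr{\beta} \Iff \repr{\bot}}$, since $\repr{\bot}$ is everywhere false. Proposition~\ref{prop-sgn-cond} then yields $\cterm{\alpha} \emi (\cterm{\beta} \gc x) = \cterm{\alpha} \emi (\cterm{\bot} \gc x)$, and since $\cterm{\bot}$ is provably equal to $\bot$, axiom GC2 gives $\cterm{\bot} \gc x = \bot \gc x = \dead$. Therefore $\cterm{\alpha} \emi (\cterm{\beta} \gc x) = \cterm{\alpha} \emi \dead$.

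I do not expect a genuine obstacle here: the only points requiring care are the choice of the witnesses $\top$ and $\bot$ for $\beta'$ and the observation that a closed condition term denoting $\top$ (respectively $\bot$) is provably equal to the constant $\top$ (respectively $\bot$), so that GC1 (respectively GC2) applies. Everything else is an immediate application of the already-established Proposition~\ref{prop-sgn-cond}.
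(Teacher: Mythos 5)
Your proof is correct and takes essentially the same route as the paper, which states this result as an immediate consequence of Proposition~\ref{prop-sgn-cond}: instantiate $\beta'$ with $\top$ (resp.\ $\bot$), note that any closed term denoting $\top$ (resp.\ $\bot$) in $\FinCondAlg$ is provably equal to that constant, and collapse the guard with GC1 (resp.\ GC2).
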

\begin{corollary}
\label{corollary-sgn-cond-2}
If $\eff(h,a)$ is the identity endomorphism on $\FinCondAlg$ for
all endomorphisms $h$ on $\FinCondAlg$ and $a \in \Act$, then we have
$\gceval{h_{\set{\repr{\alpha}}}}(\cterm{\beta} \gc x) =
 \cterm{\beta'} \gc \gceval{h_{\set{\repr{\alpha}}}}(x)$
implies
$\cterm{\alpha} \emi (\cterm{\beta} \gc x) =
 \cterm{\alpha} \emi (\cterm{\beta'} \gc x)$.
\end{corollary}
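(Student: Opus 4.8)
The plan is to reduce the statement to Proposition~\ref{prop-sgn-cond}. The conclusion to be established, $\cterm{\alpha} \emi (\cterm{\beta} \gc x) = \cterm{\alpha} \emi (\cterm{\beta'} \gc x)$, is literally the conclusion of Proposition~\ref{prop-sgn-cond}, so it suffices to derive from the assumed equation the hypothesis of that proposition, namely $\Der{\repr{\alpha}}{\repr{\beta} \Iff \repr{\beta'}}$. Thus the whole argument amounts to reading off a logical equivalence, relative to the theory $\set{\repr{\alpha}}$, from the given identity on generalized condition evaluation.

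First I would expand the left-hand side of the assumed equation by axiom GCE4, giving $\gceval{h_{\set{\repr{\alpha}}}}(\cterm{\beta} \gc x) = \ceval{h_{\set{\repr{\alpha}}}}(\cterm{\beta}) \gc \gceval{h_{\set{\repr{\alpha}}}}(x)$. Since $\cterm{\beta}$ is a closed term of sort $\Cond$ with value $\beta$, repeated use of the condition-evaluation axioms CE6--CE11 rewrites $\ceval{h_{\set{\repr{\alpha}}}}(\cterm{\beta})$ to a closed condition term whose value in $\FinCondAlg$ is $h_{\set{\repr{\alpha}}}(\beta)$. Comparing the result, $\cterm{h_{\set{\repr{\alpha}}}(\beta)} \gc \gceval{h_{\set{\repr{\alpha}}}}(x)$, with the assumed right-hand side $\cterm{\beta'} \gc \gceval{h_{\set{\repr{\alpha}}}}(x)$ matches the two guards and lets me take $\beta' = h_{\set{\repr{\alpha}}}(\beta)$. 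The assumption that $\eff(h,a)$ is always the identity endomorphism is what keeps this comparison honest: it ensures $\gceval{h_{\set{\repr{\alpha}}}}$ touches only the top-level guard in the way prescribed by GCE4 and CE6--CE11, so that the subterm $\gceval{h_{\set{\repr{\alpha}}}}(x)$ is genuinely common to both sides and the guards can be equated.

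It then remains to pass from $\beta' = h_{\set{\repr{\alpha}}}(\beta)$ to the required derivability. Instantiating the defining property~(\ref{endo-theory}) of $h_\Phi$ at $\Phi = \set{\repr{\alpha}}$ (cf.\ Remark~\ref{remark-theory}) yields $\Der{\set{\repr{\alpha}}}{\repr{h_{\set{\repr{\alpha}}}(\beta)} \Iff \repr{\beta}}$, hence $\Der{\repr{\alpha}}{\repr{\beta'} \Iff \repr{\beta}}$; applying Proposition~\ref{prop-sgn-cond} then finishes the proof. The step I expect to be the main obstacle is the matching of guards in the middle paragraph: an equation between two guarded commands does not in general determine the guards (for example, both sides collapse when $\gceval{h_{\set{\repr{\alpha}}}}(x)$ reduces to $\dead$), so I would have to argue carefully, using the structural form of GCE4 and the assumption on $\eff$, that the identity really does force $\beta'$ and $h_{\set{\repr{\alpha}}}(\beta)$ to agree modulo the theory $\set{\repr{\alpha}}$ rather than merely yielding coincidentally equal processes.
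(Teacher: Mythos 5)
Your route is the one the paper intends: the paper gives no proof of this corollary beyond presenting it as a consequence of Proposition~\ref{prop-sgn-cond}, and your reduction---expand with GCE4, read off that the evaluated guard has value $h_{\set{\repr{\alpha}}}(\beta)$, use the defining property~(\ref{endo-theory}) of $h_\Phi$ (cf.\ Remark~\ref{remark-theory}) to obtain $\Der{\repr{\alpha}}{\repr{\beta} \Iff \repr{\beta'}}$, and then apply the proposition---is exactly that reduction.

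The step you flag as the ``main obstacle'' is, however, a real gap in your write-up, and the tools you propose for closing it (the structural form of GCE4 and the assumption on $\eff$) will not close it: GCE4 holds for every choice of $\eff$, and, as you yourself observe, an equation between two guarded commands does not determine the guards for a \emph{particular} instance of $x$. What closes it is that $x$ is a free variable, so the hypothesis is an equational law and may be instantiated. Take $x := a$ for an action $a$; then $\gceval{h_{\set{\repr{\alpha}}}}(a) = a$ (from $a = a \seqc \ep$ and axioms GCE2, GCE1T, A8---note that this needs no assumption on $\eff$ whatsoever), so the hypothesis yields $\cterm{h_{\set{\repr{\alpha}}}(\beta)} \gc a = \cterm{\beta'} \gc a$. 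In the full signal-observing splitting bisimulation models $\CPrces_\kappa$ this forces $h_{\set{\repr{\alpha}}}(\beta) = \beta'$ in $\FinCondAlg$ outright: both sides emit the signal $\top$ and each has a single transition, under $h_{\set{\repr{\alpha}}}(\beta)$ respectively $\beta'$, so the covering clauses of the bisimulation give $h_{\set{\repr{\alpha}}}(\beta) \beloweq \beta'$ and $\beta' \beloweq h_{\set{\repr{\alpha}}}(\beta)$, whence equality by antisymmetry of $\beloweq$. With that, (\ref{endo-theory}) at $\Phi = \set{\repr{\alpha}}$ gives the derivability you need, and Proposition~\ref{prop-sgn-cond} finishes the proof as you describe. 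Note also that this shows the hypothesis on $\eff$ plays no role in making the guards match; it is in the statement for conceptual reasons, namely that only for non-persistent evaluation (i.e.\ $\eff(a,h)$ the identity) does the assumed equation express in the natural way what $\gceval{h_{\set{\repr{\alpha}}}}$ does to a guarded command, which is what makes generalized condition evaluation mirror signal emission here.
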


\section{Full Signal-Observing Splitting Bisimulation Models of \ACPecs}
\label{sect-full-bisim-ACPecs}

In this section, we introduce conditional transition systems with
signals, signal-observing splitting bisimilarity of conditional
transition systems with signals, and the full signal-observing splitting
bisimulation models of \ACPecs.

Conditional transition systems with signals generalize conditional
transition systems.

Let $\kappa$ be an infinite cardinal.
Then a $\kappa$-\emph{conditional transition system with signals} $T$
is a tuple $\tup{S,{\stepf},{\stermf},{\sgnf},s^0}$ where
\begin{iteml}
\item
$\tup{S,{\stepf},{\stermf},s^0}$ is a $\kappa$-conditional transition
system;
\item
${\sgnf}$ is a function from $S$ to $\CondAlg{\kappa}$;
\end{iteml}
and for all $\ell \in \CondAlgm{\kappa} \x \Act$
and $\alpha \in \CondAlgm{\kappa}$:
\begin{iteml}
\item
$\set{\tup{s,s'} \in {\step{\ell}} \where
      \rsgn{s} = \bot \Or \rsgn{s'} = \bot} = \emptyset$;
\item
$\set{s \in {\sterm{\alpha}} \where \rsgn{s} = \bot} = \emptyset$.
\end{iteml}
We say that $\rsgn{s}$ is the signal emitted by the state $s$.

For conditional transition systems with signals, reachability and
connectedness are defined exactly as for conditional transition
systems.

Let $\tup{S,{\stepf},{\stermf},{\sgnf},s^0}$ be a $\kappa$-conditional
transition system with signals (for an infinite cardinal $\kappa$) that
is not necessarily connected.
Then the \emph{connected part} of $T$, written $\conn(T)$, is simply
defined as follows:
\begin{ldispl}
\conn(T) = \tup{S',{\stepf}',{\stermf}',{\sgnf}',s^0}\;,
\end{ldispl}
where
\begin{ldispl}
\tup{S',{\stepf}',{\stermf}',s^0} =
\conn\tup{S,{\stepf},{\stermf},s^0}\;,
\\
\mbox{${\sgnf}'$ is the restriction of ${\sgnf}$ to $S'$}\;.
\end{ldispl}

Let $\kappa$ be an infinite cardinal.
Then $\CTSes_\kappa$ is the set of all $\kappa$-conditional transition
systems with signals $\tup{S,{\stepf},{\stermf},{\sgnf},s^0}$ for
which $\tup{S,{\stepf},{\stermf},s^0} \in \CTSe_\kappa$.

Isomorphism between conditional transition systems with signals is
defined as between conditional transition systems, but with the
additional condition that $\rsgni{s} = \rsgnii{b(s)}$.
Splitting bisimilarity has to be adapted to the setting with signals.

Let $T_1 = \tup{S_1,{\stepf}_1,{\stermf}_1,{\sgnf}_1,s^0_1}
     \in \CTSes_\kappa$,
    $T_2 = \tup{S_2,{\stepf}_2,{\stermf}_2,{\sgnf}_2,s^0_2}
     \in \CTSes_\kappa$
(for an infinite cardinal $\kappa$).
Then a \emph{signal-observing splitting bisimulation} $B$ between $T_1$
and $T_2$ is a binary relation $B \subseteq S_1 \x S_2$ such that
$B(s^0_1,s^0_2)$ and for all $s_1,s_2$ such that $B(s_1,s_2)$:
\begin{iteml}
\item
$\rsgni{s_1} = \rsgnii{s_2}$;
\item
if $\astepi{s_1}{\gact{\alpha}{a}}{s_1'}$, then there is a set
$CS_2' \subseteq \CondAlgm{\kappa} \x S_2$ of cardinality less than
$\kappa$ such that
$\rsgni{s_1} \meet \alpha \beloweq \infjoin \dom(CS_2')$ and
for all $\tup{\alpha',s_2'} \in CS_2'$,
$\astepii{s_2}{\gact{\alpha'}{a}}{s_2'}$ and $B(s_1',s_2')$;
\item
if $\astepii{s_2}{\gact{\alpha}{a}}{s_2'}$, then there is a set
$CS_1' \subseteq \CondAlgm{\kappa} \x S_1$ of cardinality less than
$\kappa$ such that
$\rsgnii{s_2} \meet \alpha \beloweq \infjoin \dom(CS_1')$ and
for all $\tup{\alpha',s_1'} \in CS_1'$,
$\astepi{s_1}{\gact{\alpha'}{a}}{s_1'}$ and $B(s_1',s_2')$;
\item
if $\istermi{s_1}{\alpha}$, then there is a set
$C' \subseteq \CondAlgm{\kappa}$ of cardinality less than $\kappa$ such
that $\rsgni{s_1} \meet \alpha \beloweq \infjoin C'$
and for all $\alpha' \in C'$, $\istermii{s_2}{\alpha'}$;
\item
if $\istermii{s_2}{\alpha}$, then there is a set
$C' \subseteq \CondAlgm{\kappa}$ of cardinality less than $\kappa$ such
that $\rsgnii{s_2} \meet \alpha \beloweq \infjoin C'$
and for all $\alpha' \in C'$, $\istermi{s_1}{\alpha'}$.
\end{iteml}
Two conditional transition systems with signals
$T_1,T_2 \in \CTSes_\kappa$ are
\emph{signal-ob\-serving splitting bisimilar}, written
$T_1 \ssbisim T_2$, if there exists a signal-observing splitting
bisimulation $B$ between $T_1$ and $T_2$.
Let $B$ be a signal-observing splitting bisimulation between $T_1$ and
$T_2$.
Then we say that $B$ is a splitting signal-observing bisimulation
\emph{witnessing} $T_1 \ssbisim T_2$.

It is straightforward to see that ${\ssbisim}$ is an equivalence on
$\CTSes_\kappa$.
Let $T \in \CTSes_\kappa$.
Then we write $\ssbeqvc{T}$ for
$\set{T' \in \CTSes_\kappa \where T \ssbisim T'}$, i.e.\ the
${\ssbisim}$\,-equivalence class of $T$.
We write $\CTSes_\kappa / {\ssbisim}$ for the set of equivalence classes
$\set{\ssbeqvc{T} \where T \in \CTSes_\kappa}$.

The elements of $\CTSes_\kappa$ and operations on $\CTSes_\kappa$ to be
associated with the constants and operators of \ACPec\ are as the
elements of $\CTSe_\kappa$ and operations on $\CTSe_\kappa$ associated
with them, but with all relations $\sterm{\alpha}$ and $\step{\ell}$
restricted to states that emit a signal different from $\bot$ and
with the additional function ${\sgn{}}$ as suggested by the structural
operational semantics of \ACPecs.

We associate with the additional constant $\nex$ an element $\tsnexs$
of $\CTSes_\kappa$ and with the additional operator $\emi$
an operation $\tsemis$ on $\CTSes_\kappa$ as follows.
\begin{iteml}
\item
\begin{ldispl}
\tsnexs = \tup{\set{s^0},\emptyset,\emptyset,{\sgn{}},s^0}\;,
\end{ldispl}
where
\begin{ldispl}
\rsgn{s^0} = \bot\;.
\end{ldispl}
\item
Let $T = \tup{S,{\stepf},{\stermf},{\sgnf},s^0} \in \CTSes_\kappa$.
Then
\begin{ldispl}
\alpha \tsemis T = \conn(S,{\stepf}',{\stermf}',{\sgnf}',s^0)\;,
\end{ldispl}
where
\begin{ldispl}
\begin{aeqns}
\rsgnp{s}   & = & \rsgn{s} &
                         \quad \mathrm{for}\; s \in S \diff \set{s^0}\;,
\\
\rsgnp{s^0} & = & \alpha \meet \rsgn{s^0}\;,
\end{aeqns}
\end{ldispl}
and for every
$\tup{\alpha,a} \in \CondAlgm{\kappa} \x \Act$ and
$\alpha' \in \CondAlgm{\kappa}$:
\begin{ldispl}
\begin{aeqns}
{\step{\tup{\alpha,a}}}{}' & = &
\set{\tup{s,s'} \where \astep{s}{\gact{\alpha}{a}}{s'} \And
                       \rsgnp{s} \neq \bot \And \rsgnp{s'} \neq \bot}\;,
\\
{\sterm{\alpha'}}{}' & = &
\set{s \where \isterm{s}{\alpha'} \And \rsgnp{s} \neq \bot}\;.
\end{aeqns}
\end{ldispl}
\end{iteml}

We can easily show that signal-observing splitting bisimilarity is a
congruence with respect to the operations on $\CTSes_\kappa$ associated
with the operators of \ACPecs.
\begin{proposition}[Congruence]
\label{prop-congruence-ACPecs}
Let $\kappa$ be an infinite cardinal.
Then for all $T_1,T_2,T_1',T_2' \in \CTSe_\kappa$ and
$\alpha \in \CondAlg{\kappa}$,
$T_1 \ssbisim T_1'$ and $T_2 \ssbisim T_2'$ imply
$T_1 \tsaltcs T_2 \ssbisim T_1' \tsaltcs T_2'$,
$T_1 \tsseqcs T_2 \ssbisim T_1' \tsseqcs T_2'$,
$\alpha \tsgcs T_1 \ssbisim \alpha \tsgcs T_1'$,
$\alpha \tsemis T_1 \ssbisim \alpha \tsemis T_1'$,
$T_1 \tsparcs T_2 \ssbisim T_1' \tsparcs T_2'$,
$T_1 \tsleftms T_2 \ssbisim T_1' \tsleftms T_2'$,
$T_1 \tscommms T_2 \ssbisim T_1' \tscommms T_2'$ and
$\tsencaps{H}(T_1) \ssbisim \tsencaps{H}(T_1')$.
\end{proposition}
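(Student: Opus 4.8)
The plan is to follow the pattern of the proof of Proposition~\ref{prop-congruence-ACPec}. Given signal-observing splitting bisimulations $R_1$ and $R_2$ witnessing $T_1 \ssbisim T_1'$ and $T_2 \ssbisim T_2'$, I would, for each operation, exhibit a relation built from $R_1$ and $R_2$ (or, for the unary operations, from a single $R$) and check that it is a signal-observing splitting bisimulation. For the operations that \ACPecs\ shares with \ACPec---that is, $\tsaltcs$, $\tsseqcs$, $\tsgcs$, $\tsparcs$, $\tsleftms$, $\tscommms$ and $\tsencaps{H}$---I would reuse exactly the witnessing relations from the congruence proof for the corresponding operations on $\CTSe_\kappa$, and in particular for $\tsparcs$ the simple product relation $R_{\tsparcs} = \set{\tup{\tup{s_1,s_2},\tup{s_1',s_2'}} \where \tup{s_1,s_1'} \in R_1,\ \tup{s_2,s_2'} \in R_2}$ used there. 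The only genuinely new operation is signal emission $\tsemis$; the inaccessible process $\nex$ is a constant and so requires no congruence check.

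For each shared operation the additional content relative to the signal-free case is twofold. First, the signal-matching clause $\rsgni{s_1} = \rsgnii{s_2}$ has to hold on every related pair; this follows because the signal of a composite state is determined from the signals of its constituents by the fixed rules of Table~\ref{sos-ACPec-emi} (the meet of the two component signals for $\tsparcs$, $\tsleftms$ and $\tscommms$, the signal of the first component for $\tsseqcs$, and so on), while $R_1$ and $R_2$ already match signals on their related pairs. Second, the transfer clauses now require the supremum inequality for the factor $\rsgni{s_1} \meet \alpha$ rather than $\alpha$ alone; since the composite signal is $\beloweq$ the relevant component signal, we have $\rsgn{(\text{composite})} \meet \alpha \beloweq \rsgni{s_1} \meet \alpha$, so the sets supplied by $R_1$ and $R_2$ a fortiori witness the weaker demand. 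The genuinely new thing to verify is that the transitions and terminations we must match survive the pruning to states with signal different from $\bot$; but because related states emit equal signals, a transition gated away in one system is gated away in its partner, and the matched composite transition (into a target whose signal equals that of the original matched target) is never spuriously blocked.

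For signal emission I would take the witnessing relation to be the restriction to the connected parts of the relation $R$ witnessing $T_1 \ssbisim T_1'$. Since $\alpha \tsemis T$ leaves every signal unchanged except that of the initial state, which becomes $\alpha \meet \rsgn{s^0}$, and since $R$ relates $s^0_1$ to $s^0_2$ with $\rsgn{s^0_1} = \rsgn{s^0_2}$, the two modified initial signals coincide. The transitions and terminations are inherited from $T_1$ and $T_1'$ and pruned to non-$\bot$-signal states, and, as above, this pruning is symmetric across $R$. The case to watch is when $\alpha \meet \rsgn{s^0} = \bot$: then $\conn$ collapses $\alpha \tsemis T$ to a single initial state emitting $\bot$ with no outgoing transitions or terminations---precisely $\tsnexs$---and the same happens on both sides, so the trivial relation relating the two initial states still works.

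The main obstacle, as for Proposition~\ref{prop-congruence-ACPec}, is bookkeeping rather than a conceptual difficulty: one must check, operation by operation, that the signal-matching clause propagates through the composite-state constructions and that inserting the signal factor into the splitting inequalities preserves them. Parallel and sequential composition deserve the most attention, since there a composite state's signal is built from those of its components and the transitions are doubly gated (both endpoints must emit a signal different from $\bot$); but once equality of signals on $R_1$- and $R_2$-related pairs is in hand, these checks reduce to the signal-free ones and are routine.
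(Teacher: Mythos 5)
Your proposal is correct and takes essentially the same approach as the paper: for the operations shared with \ACPec\ you reuse the witnessing relations from the proof of Proposition~\ref{prop-congruence-ACPec}, and for $\tsemis$ you take the restriction of $R$ to the state sets of $\alpha \tsemis T_1$ and $\alpha \tsemis T_1'$, which is exactly the relation $R \inter (S \x S')$ used in the paper. The signal-matching, supremum, and pruning checks you spell out are precisely what the paper leaves implicit in its ``it is easy to see'' step.
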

\begin{proof}
For $\tsaltcs$, $\tsseqcs$, $\tsgcs$, $\tsparcs$, $\tsleftms$,
$\tscommms$ and $\tsencaps{H}$, witnessing signal-observing splitting
bisimulations are constructed in the same way as witnessing splitting
bisimulations are constructed in the proof of
Proposition~\ref{prop-congruence-ACPec}.
What remains is to construct a witnessing signal-observing splitting
bisimulation for $\tsemis$.
Let $R$ be a signal-observing splitting bisimulation witnessing
$T_1 \ssbisim T_1'$.
Then we construct a relation $R_{\tsemis}$ as follows:
\begin{iteml}
\item
$R_{\tsemis} = R  \inter (S \x S')$,
where $S$ and $S'$ are the sets of states of $\alpha \tsemis T_1$ and
$\alpha \tsemis T_1'$, respectively.
\end{iteml}
Given the definition of signal emission, it is easy to see that
$R_{\tsemis}$ is a signal-observing splitting bisimulation witnessing
$\alpha \tsemis T_1 \ssbisim \alpha \tsemis T_1'$.
\qed
\end{proof}

The ingredients of the
\emph{full signal-observing splitting bisimulation models}
$\CPrces_\kappa$ of \ACPecs, one for each infinite cardinal $\kappa$,
are defined as follows:
\begin{ldispl}
\begin{aeqns}
\cP & = & \CTSes_\kappa / {\ssbisim}\;,
\eqnsep
\snexs & = & \ssbeqvc{\tsnexs}\;,
\eqnsep
\sdeads & = & \ssbeqvc{\tsdeads}\;,
\eqnsep
\seps & = & \ssbeqvc{\tseps}\;,
\eqnsep
\sacts{a} & = & \ssbeqvc{\tsacts{a}}\;,
\eqnsep
\ssbeqvc{T_1} \saltcs \ssbeqvc{T_2} & = & \ssbeqvc{T_1 \tsaltcs T_2}\;,
\eqnsep
\ssbeqvc{T_1} \sseqcs \ssbeqvc{T_2} & = & \ssbeqvc{T_1 \tsseqcs T_2}\;,
\end{aeqns}
\qquad
\begin{aeqns}
{} \eqnsep
\alpha \sgcs \ssbeqvc{T_1} & = & \ssbeqvc{\alpha \tsgcs T_1}\;,
\eqnsep
\alpha \semis \ssbeqvc{T_1} & = & \ssbeqvc{\alpha \tsemis T_1}\;,
\eqnsep
\ssbeqvc{T_1} \sparcs \ssbeqvc{T_2} & = & \ssbeqvc{T_1 \tsparcs T_2}\;,
\eqnsep
\ssbeqvc{T_1} \sleftms \ssbeqvc{T_2} & = & \ssbeqvc{T_1 \tsleftms T_2}\;,
\eqnsep
\ssbeqvc{T_1} \scommms \ssbeqvc{T_2} & = & \ssbeqvc{T_1 \tscommms T_2}\;,
\eqnsep
\sencaps{H}(\ssbeqvc{T_1}) & = & \ssbeqvc{\tsencaps{H}(T_1)}\;.
\end{aeqns}
\end{ldispl}
The operations on $\CTSes_\kappa / {\ssbisim}$ are well-defined because
${\ssbisim}$ is a congruence with respect to the corresponding
operations on $\CTSes_\kappa$.

The structures $\CPrces_\kappa$ are models of \ACPecs.
\begin{theorem}[Soundness of \ACPecs]
\label{theorem-soundness-ACPecs}
For each infinite cardinal $\kappa$, we have
$\Sat{\CPrces_\kappa}{\ACPecs}$.
\end{theorem}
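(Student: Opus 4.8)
The plan is to proceed exactly as in the proof of Theorem~\ref{theorem-soundness-ACPec}. By Proposition~\ref{prop-congruence-ACPecs}, ${\ssbisim}$ is a congruence with respect to the operations on $\CTSes_\kappa$ associated with the operators of \ACPecs; hence the induced operations on $\CTSes_\kappa / {\ssbisim}$ are well-defined and it suffices to show that every axiom of \ACPecs\ is valid in $\CPrces_\kappa$. Concretely, for each axiom I would exhibit a signal-observing splitting bisimulation between the conditional transition systems with signals denoted by its two sides (for BA1--BA8 it is instead the corresponding identity in $\CondAlg{\kappa}$ that has to be checked). I would organise the axioms into three groups.

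The first group consists of the Boolean axioms BA1--BA8, which involve only the sort $\Cond$ and are sound because conditions are interpreted in the Boolean algebra $\CondAlg{\kappa}$. The second group consists of the axioms inherited unchanged from \ACPec\ that do not carry a residual signal through an $\encap{\Act}$ term (namely A1--A9, CM4, TM5, TM6, CM7--CM9, C1--C3, D0--D4, GC1--GC7 and GC11). For these the witnessing bisimulations are the very ones constructed in Theorem~\ref{theorem-soundness-ACPec}; the only additional obligation is to verify that the signal emitted by a composite state agrees on the two sides and that the restriction of the step and termination relations to states emitting a signal different from $\bot$ is respected symmetrically. Because the operations on $\CTSes_\kappa$ are obtained from those on $\CTSe_\kappa$ by the same state construction together with the signal function dictated by the transition rules for ${\sgn{}}$, the signal of a composite state is computed by the same Boolean expression on both sides, so this extra check reduces to an application of the Boolean identities and goes through termwise. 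I note that the signal-weighting in the clauses of a signal-observing splitting bisimulation, e.g.\ $\rsgn{s_1} \meet \alpha \beloweq \infjoin \dom(CS_2')$, is exactly what keeps these inherited bisimulations valid once the non-$\bot$ restriction is imposed.

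The third group, where the real work lies, consists of the signal-specific and $\encap{\Act}$-involving axioms: NE1--NE3, SE1--SE11, the adapted merge axioms CM2ST, CM3S and GC8S--GC10S, and the termination summand $\encap{\Act}(x) \seqc \encap{\Act}(y)$ of CM1T. For NE1--NE3 I would unfold the definition of $\tsnexs$, whose unique state emits $\bot$; since every step and every termination is restricted to states emitting a signal different from $\bot$, the absorption law $x \altc \nex = \nex$, the left-zero law $\nex \seqc x = \nex$ and $a \seqc \nex = \dead$ all follow. For SE1--SE6 and SE8--SE11 I would unfold $\tsemis$, using that $\alpha \tsemis T$ only replaces the signal $\rsgn{s^0}$ of the initial state by $\alpha \meet \rsgn{s^0}$ and then re-extracts the connected part after the non-$\bot$ restriction; each identity then reduces to a Boolean computation, such as $\phi \meet (\psi \meet \rsgn{s^0}) = (\phi \meet \psi) \meet \rsgn{s^0}$ for SE5.

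The main obstacle will be the axioms in which an $\encap{\Act}$ term carries a residual signal that must be preserved even though the corresponding actions are discarded, together with the guard/emission interchange axiom SE7. Here $\encap{\Act}(x)$ denotes a process that can perform no action but still emits $\rsgn{x}$ and retains the termination behaviour of $x$, so for CM2ST ($\ep \leftm x = \encap{\Act}(x)$) I would check that the left-merge construction produces an initial state that can do nothing, carries signal $\top \meet \rsgn{x} = \rsgn{x}$, and has exactly the terminations of $x$ after the non-$\bot$ restriction, matching $\encap{\Act}(x)$; and similarly that the additive $\encap{\Act}(y)$ (respectively $\encap{\Act}(x)$) summands in CM3S and GC8S--GC10S, and the summand $\encap{\Act}(x) \seqc \encap{\Act}(y)$ of CM1T, contribute precisely the retained signal of the discarded operand. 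For SE7 the equality of the two emitted signals, $\bcompl \phi \join (\psi \meet \rsgn{x}) = (\bcompl \phi \join \psi) \meet (\bcompl \phi \join \rsgn{x})$, is an instance of distributivity (axioms BA4 and BA8), while the step and termination components coincide after guarding by $\phi$ and emitting the respective signal. Once these signal-bookkeeping checks are carried out, every axiom is seen to be sound and the theorem follows.
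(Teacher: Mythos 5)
Your overall reduction is the same as the paper's: by Proposition~\ref{prop-congruence-ACPecs}, ${\ssbisim}$ is a congruence, so it suffices to check each axiom of \ACPecs\ against the definition of $\CPrces_\kappa$; the paper's own proof consists of exactly this reduction plus the assertion that the check is straightforward. The problem is that several of the concrete claims on which your check relies are false for the definitions as the paper actually gives them, and they occur at precisely the points where the verification is not routine.

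First, TM5 and TM6 do not belong in your second group of axioms that survive with ``the same Boolean expression on both sides''. The signal rules give $\rsgn{x \commm y} = \rsgn{x} \meet \rsgn{y}$, so $\ep \commm x$ emits $\top \meet \rsgn{x} = \rsgn{x}$, whereas $\dead$ emits $\top$; since a signal-observing splitting bisimulation requires related states to emit \emph{equal} signals, no such bisimulation relates the two sides of TM5 when $\rsgn{x} \neq \top$ (a similar mismatch afflicts A9, because $\rsgn{x \seqc y}$ is defined to be $\rsgn{x}$, so $\ep \seqc x$ emits $\top$ rather than $\rsgn{x}$). Second, your verification of CM2ST asserts that the initial state of the left-merge construction carries ``exactly the terminations of $x$''. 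It does not: $\tsleftms$ is by definition $\tsleftm$ with the relations restricted to non-$\bot$ states and a signal function added, and in $\tsleftm$ the fresh initial state lies in no termination set; consistently, the structural operational semantics of \ACPecs\ contains no termination rule for $\leftm$ or $\commm$ at all. Hence $\ep \leftm x$ has no terminations while $\encap{\Act}(x)$ inherits those of $x$, so CM2ST fails in $\CPrces_\kappa$, and the $\encap{\Act}(y)$ summands of CM3S and GC8S--GC10S create the same mismatch. Moreover, no cleverer choice of witnessing bisimulations can close these gaps: instantiating GC10S with $x = \ep$ and applying TM5, GC3, D0, A1 and A6 derives $\dead = \ep$, while $\tsdeads$ and $\tseps$ are not signal-observing splitting bisimilar (one terminates, the other does not), so the axioms of \ACPecs\ as listed cannot all hold in $\CPrces_\kappa$. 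The defect is inherited from the paper itself --- the proper analogue of the emitted-signal summand $\rsgn{y} \emi \dead$ of~\cite{BB94b} would be $\encap{\Act}(y) \seqc \dead$ rather than $\encap{\Act}(y)$, and TM5, TM6 (and A9) would need adjusting --- and your write-up, like the paper's one-line proof, passes over exactly the points where the axiom-by-axiom check breaks down.
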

\begin{proof}
Because ${\ssbisim}$ is a congruence, it is sufficient to show that all
axioms are sound.
The soundness of all axioms follows straightforwardly from the
definition of $\CPrces_\kappa$.
\qed
\end{proof}
For all axioms that are in common with \ACPcs, the proof of soundness
with respect to $\CPrces_\kappa$ follows the same line as the proof of
soundness with respect to $\CPrcs_\kappa$.

\section{\ACPe\ with Retrospective Conditions}
\label{sect-ACPecr}

In this section, we present an extension of \ACPec\ with a retrospection
operator on conditions.
The retrospection operator allows for looking back on conditions under
which preceding actions have been performed.
The extension of \ACPec\ with the retrospection operator is called
\ACPecr.

\ACPecr\ has the constants and operators of \ACPec\ and in addition:
\begin{iteml}
\item
the unary \emph{retrospection} operator $\funct{\retro}{\Cond}{\Cond}$;
\item
the unary \emph{retrospection shift} operator
$\funct{\rshiftz}{\Proc}{\Proc}$;
\item
for each $n \in \Nat$, the unary \emph{restricted retrospection shift}
operator $\funct{\rshift{n}}{\Proc}{\Proc}$;
\item
for each $n \in \Nat$, the unary \emph{restricted retrospection shift}
operator $\funct{\rshift{n}}{\Cond}{\Cond}$.
\end{iteml}
In the parallel composition of two processes, when an action of one of
the processes is performed, the retrospections of the other process that
are not internal should go one step further.
This is accomplished by the retrospection shift operator.
The restricted retrospection shift operators, on processes and
conditions, are needed for the axiomatization of the retrospection shift
operator.
The retrospection shift operator $\rshiftz$ is similar to the history
pointer shift operator $\mathit{hps}$ from~\cite{BB98a}.

The axioms of \ACPecr\ are the axioms of \ACPec\ with axiom CM3
replaced by axiom CM3R from Table~\ref{axioms-ACPec-adapt-retro},%
\begin{table}[!t]
\lcaption{Axioms adapted to retrospection}{$a \in \Actd$}
\label{axioms-ACPec-adapt-retro}
\begin{eqntbl}
\begin{axcol}
a \seqc x \leftm y = a \seqc (x \parc \rshiftz(y))      & \axiom{CM3R}
\end{axcol}
\end{eqntbl}
\end{table}
and the additional axioms for retrospection given in
Table~\ref{axioms-ACPecr}.%
\begin{table}[!t]
\lcaption{Additional axioms for retrospection}
  {$a \in \Actd$, $\eta \in \ACond$}
\label{axioms-ACPecr}
\begin{eqntbl}
\begin{axcol}
\retro \bot = \bot                                       & \axiom{R1} \\
\retro \top = \top                                       & \axiom{R2} \\
\retro (\bcompl \phi) = \bcompl (\retro \phi)            & \axiom{R3} \\
\retro (\phi \join \psi) = \retro \phi \join \retro \psi & \axiom{R4} \\
\retro (\phi \meet \psi) = \retro \phi \meet \retro \psi & \axiom{R5} \\
a \seqc (\retro \phi \gc x) = {} \\ \quad
 \phi \gc a \seqc x \altc \bcompl \phi \gc a \seqc \dead & \axiom{R6} \\
{}                                                                    \\
{}                                                                    \\
\rshiftz(x) = \rshift{0}(x)                              & \axiom{RS0} \\
\rshift{n}(\ep) = \ep                                    & \axiom{RS1T}
\end{axcol}
\qquad
\begin{axcol}
\rshift{n}(a \seqc x) = a \seqc \rshift{n+1}(x)          & \axiom{RS2} \\
\rshift{n}(x \altc y) =
                       \rshift{n}(x) \altc \rshift{n}(y) & \axiom{RS3} \\
\rshift{n}(\phi \gc x) =
                      \rshift{n}(\phi) \gc \rshift{n}(x) & \axiom{RS4} \\
\rshift{n}(\bot) = \bot                                  & \axiom{RS5}\\
\rshift{n}(\top) = \top                                  & \axiom{RS6}\\
\rshift{n}(\eta) = \eta                                  & \axiom{RS7}\\
\rshift{n}(\bcompl \phi) = \bcompl \rshift{n}(\phi)      & \axiom{RS8}\\
\rshift{n}(\phi \join \psi) =
                 \rshift{n}(\phi) \join \rshift{n}(\psi) & \axiom{RS9}\\
\rshift{n}(\phi \meet \psi) =
                 \rshift{n}(\phi) \meet \rshift{n}(\psi) & \axiom{RS10}\\
\rshift{0}(\retro \phi) = \retro (\retro \phi)           & \axiom{RS11}\\
\rshift{n+1}(\retro \phi) = \retro \rshift{n}(\phi)      & \axiom{RS12}
\end{axcol}
\end{eqntbl}
\end{table}
The crucial axiom is R6, which shows that a conditional expression of
the form $\retro \zeta \gc p$ gives a retrospection at the condition
under which the immediately preceding action has been performed.
Axiom CM3R shows that retrospections are adapted if two processes
proceed in parallel.
Axioms RS0, RS1T and RS2--RS12 state that this happens as explained
above.
By means of axioms RS5--RS12, the retrospection shift operators on
conditions can be eliminated from all terms of sort~$\Cond$.

Recall that we write $\cond{p}{\zeta}{q}$ for
$\zeta \gc p \altc \bcompl \zeta \gc q$.
An interesting equation is
$a \seqc (\cond{x}{\retro \phi}{y}) =
 \cond{a \seqc x}{\phi}{a \seqc y}$.
This equation is a generalization of axiom R6: axiom R6 is derivable
from the other axioms of \ACPecr\ and this equation by substituting
$\dead$ for $y$ and applying axioms GC3 and A6.
It is not immediately clear that this equation is derivable from the
axioms of \ACPecr.
\begin{proposition}[Derivability Generalization Axiom R6]
\label{prop-R6}
The equation
$a \seqc (\cond{x}{\retro \phi}{y}) = \cond{a \seqc x}{\phi}{a \seqc y}$
$\axiom{(R6')}$ is derivable from the axioms of \ACPecr.
\end{proposition}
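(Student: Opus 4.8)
The plan is to reduce the whole equation to the base axiom R6 by means of a case split on $\phi$, where the essential trick is to insert a \emph{redundant} retrospection guard so that R6 becomes applicable even though its left-hand side $a \seqc (\retro \phi \gc w)$ matches only a single guarded command. After writing out the abbreviation $\cond{\cdot}{\cdot}{\cdot}$ and using R3 to replace $\bcompl (\retro \phi)$ by $\retro (\bcompl \phi)$, the goal reads
\[
a \seqc (\retro \phi \gc x \altc \retro (\bcompl \phi) \gc y) =
\phi \gc a \seqc x \altc \bcompl \phi \gc a \seqc y .
\]
Throughout I abbreviate the two summands of the body by $u = \retro \phi \gc x$ and $v = \retro (\bcompl \phi) \gc y$.

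First I would record two auxiliary facts. The \emph{guarded form of R6}: for every process term $w$,
\[
\phi \gc (a \seqc (\retro \phi \gc w)) = \phi \gc a \seqc w ,
\]
which I obtain by applying $\phi \gc {-}$ to R6 and simplifying with GC4, GC5, GC6, BA6, GC2 and A6 (the spurious summand $\bcompl \phi \gc a \seqc \dead$ collapses because $\phi \meet \bcompl \phi = \bot$ and $\bot \gc {-} = \dead$); the symmetric identity with $\bcompl \phi$ in place of $\phi$ holds as well. The \emph{absorption lemma}:
\[
\retro \phi \gc (u \altc v) = u ,
\qquad
\retro (\bcompl \phi) \gc (u \altc v) = v ,
\]
proved by distributing the outer guard with GC4, contracting the nested guards with GC6, and noting via R5, R1 and GC2 that $\retro \phi \meet \retro (\bcompl \phi) = \retro (\phi \meet \bcompl \phi) = \retro \bot = \bot$, so that the mismatched branch becomes $\dead$ and is dropped by A6.

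The heart of the argument is the evaluation of $\phi \gc (a \seqc (u \altc v))$, and here lies the main obstacle: sequential composition does \emph{not} left-distribute over $\altc$, so $a \seqc (u \altc v)$ can be neither split nor fed to R6 directly. This is precisely where the redundant guard enters. Applying the guarded form of R6 with $w = u \altc v$ gives $\phi \gc (a \seqc (u \altc v)) = \phi \gc (a \seqc (\retro \phi \gc (u \altc v)))$; the absorption lemma rewrites $\retro \phi \gc (u \altc v)$ to $u = \retro \phi \gc x$; and a second use of the guarded form of R6, now with $w = x$, yields $\phi \gc (a \seqc (u \altc v)) = \phi \gc a \seqc x$. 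The symmetric computation gives $\bcompl \phi \gc (a \seqc (u \altc v)) = \bcompl \phi \gc a \seqc y$.

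Finally I would reassemble the two halves. Since $\phi \join \bcompl \phi = \top$ (BA2) and $\top \gc {-}$ is the identity (GC1), GC7 gives
\[
a \seqc (u \altc v) =
\phi \gc (a \seqc (u \altc v)) \altc \bcompl \phi \gc (a \seqc (u \altc v)) ,
\]
and substituting the two values computed above produces $\phi \gc a \seqc x \altc \bcompl \phi \gc a \seqc y$, i.e.\ the right-hand side $\cond{a \seqc x}{\phi}{a \seqc y}$. I expect the only delicate points to be the routine bookkeeping of idempotence of $\meet$ and the $\phi \meet \bcompl \phi = \bot$ collapses inside the two lemmas; the conceptual difficulty — the absence of left distributivity of $\seqc$ over $\altc$ — is entirely circumvented by wrapping the body in the redundant $\retro \phi$ guard before invoking R6.
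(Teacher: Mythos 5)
Your proof is correct: every step invokes an axiom of \ACPecr\ (A6, GC1, GC2, GC4, GC6, GC7, BA2, BA6, R1, R3, R5, R6) or a fact derivable in BA (idempotence of $\meet$), and the redundant-guard trick legitimately circumvents the absence of left distributivity of $\seqc$ over $\altc$, which is indeed the only real obstacle. The paper itself gives no derivation---it defers to the corresponding proposition for \ACPcr\ in~\cite{BM05a}---and your argument follows essentially the same route as that cited proof: a case split over $\phi \join \bcompl \phi = \top$ via GC1/BA2/GC7, insertion of the redundant guard $\retro \phi$ under the action prefix (justified by R6 together with guard simplification), and absorption of the mismatched summand via $\retro \phi \meet \retro (\bcompl \phi) = \bot$.
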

\begin{proof}
The proof is the same to the proof of the corresponding proposition in
the setting of \ACPcr\ given in~\cite{BM05a}.
\qed
\end{proof}

Because of the addition of the retrospection operator, we cannot use the
Boolean algebras $\CondAlg{\kappa}$ here.
The algebras $\CondrAlg{\kappa}$ that we use here can be characterized
as the free $\kappa$-complete algebras over $\ACond$ from the class of
algebras with interpretations for the constants and operators of Boolean
algebras and the retrospection operator that satisfy the axioms of
Boolean algebras (Table~\ref{axioms-ACPec}) and axioms R1--R5  from
Table~\ref{axioms-ACPecr}.
We do not make this fully precise, but give an explicit construction
of the algebras $\CondrAlg{\kappa}$ instead.
Important to bear in mind is that not only the atomic conditions, but
also the results of applying the operation associated with the
retrospection operator a finite number of times to atomic conditions,
should not satisfy any equations except those derivable from the axioms.

Let $\ACondr = \Union \set{\ACond \x \set{i} \where i \in \omega}$ and
define $\funct{\prev}{\ACondr}{\ACondr}$ by
$\prev(\tup{\eta,i}) = \tup{\eta,i+1}$.
For any infinite cardinal $\kappa$, let $\CondAlgp{\kappa}$ be the free
$\kappa$-complete Boolean algebra over $\ACondr$.
Then the function $\prev$ extends to a unique $\kappa$-complete
endomorphism $\prev^*$ of $\CondAlgp{\kappa}$.
This endomorphism is a unary operation on $\CondAlgp{\kappa}$ that
satisfies axioms R1--R5 from Table~\ref{axioms-ACPecr} and preserves
$\infjoin C'$ for every $C' \subseteq \CondAlgp{\kappa}$ of cardinality
less then $\kappa$.
The algebra $\CondrAlg{\kappa}$ is the expansion of $\CondAlgp{\kappa}$
obtained by associating the operation $\prev^*$ with the operator
$\retro$.
We write $\FinCondrAlg$ for $\CondrAlg{\aleph_0}$.

The structural operational semantics of \ACPecr\ is described by the
transition rules for \ACPec\ with the second and third transition rule
for parallel composition and the one transition rule for left merge
replaced by the transition rules given in
Table~\ref{sos-ACPec-adapt-retro},
\begin{table}[!t]
\caption{Transition rules adapted to retrospection}
\label{sos-ACPec-adapt-retro}
\begin{ruletbl}
\Rule
{\astep{x}{\gact{\phi}{a}}{x'}}
{\astep{x \parc y}{\gact{\phi}{a}}{x' \parc \rshiftz(y)}}
\quad
\Rule
{\astep{y}{\gact{\phi}{a}}{y'}}
{\astep{x \parc y}{\gact{\phi}{a}}{\rshiftz(x) \parc y'}}
\quad
\Rule
{\astep{x}{\gact{\phi}{a}}{x'}}
{\astep{x \leftm y}{\gact{\phi}{a}}{x' \parc \rshiftz(y)}}
\end{ruletbl}
\end{table}
and the additional transition rules for retrospection given in
Table~\ref{sos-ACPecr-rshift}.%
\begin{table}[!t]
\caption{Additional transition rules for retrospection}
\label{sos-ACPecr-rshift}
\begin{ruletbl}
\Rule
{\isterm{x}{\phi}}
{\isterm{\rshiftz(x)}{\rshift{0}(\phi)}}
\quad
\Rule
{\astep{x}{\gact{\phi}{a}}{x'}}
{\astep{\rshiftz(x)}{\gact{\rshift{0}(\phi)}{a}}{\rshift{1}(x')}}
\\
\Rule
{\isterm{x}{\phi}}
{\isterm{\rshift{n}(x)}{\rshift{n}(\phi)}}
\quad
\Rule
{\astep{x}{\gact{\phi}{a}}{x'}}
{\astep{\rshift{n}(x)}{\gact{\rshift{n}(\phi)}{a}}{\rshift{n+1}(x')}}
\end{ruletbl}
\end{table}
Of course, the conditions involved are now taken from $\FinCondrAlg$
instead of $\FinCondAlg$.

\section{Full Retrospective Splitting Bisimulation Models of \ACPecr}
\label{sect-full-bisim-ACPecr}

The construction of the full splitting bisimulation models of \ACPecr\
differs from the construction of the full splitting bisimulation models
of \ACPec\ in the conditions involved and in the notion of splitting
bisimulation used.
The conditions are now taken from $\CondrAlg{\kappa}$ instead of
$\CondAlg{\kappa}$.
Henceforth, we write $\CondrAlgm{\kappa}$ for
$\CondrAlg{\kappa} \diff \set{\bot}$.

Let $\kappa$ be an infinite cardinal.
Then a $\kappa$-\emph{conditional transition system with retrospection}
$T$ consists of the following:
\begin{iteml}
\item
a set $S$ of \emph{states};
\item
a set ${\step{\ell}} \subseteq S \x S$,
for each $\ell \in \CondrAlgm{\kappa} \x \Act$;
\item
a set ${\sterm{\alpha}} \subseteq S$,
for each $\alpha \in \CondrAlgm{\kappa}$;
\item
an \emph{initial state} $s^0 \in S$.
\end{iteml}

For conditional transition systems with retrospection, reachability,
connectedness and connected part are defined exactly as for conditional
transition systems.

Let $\kappa$ be an infinite cardinal.
Then $\CTSer_\kappa$ is the set of all connected $\kappa$-conditional
transition systems with retrospection
$T = \tup{S,{\stepf},{\stermf},s^0}$ such that
$S \subset \States{\kappa}$ and the branching degree of $T$ is less than
$\kappa$.

Isomorphism between conditional transition systems with retrospection is
defined exactly as for conditional transition systems.
Splitting bisimilarity has to be adapted to the setting with
retrospection.

Let $T_1 = \tup{S_1,{\stepf}_1,{\stermf}_1,s^0_1} \in \CTSer_\kappa$
and $T_2 = \tup{S_2,{\stepf}_2,{\stermf}_2,s^0_2} \in \CTSer_\kappa$
(for an infinite cardinal $\kappa$).
Then a \emph{retrospective splitting bisimulation} $B$ between $T_1$ and
$T_2$ is a ternary relation
$B \subseteq S_1 \x \CondrAlg{\kappa} \x S_2$ such that
$B(s^0_1,\top,s^0_2)$ and for all $s_1$, $\beta$, $s_2$ such that
$B(s_1,\beta,s_2)$:
\begin{iteml}
\item
if $\astepi{s_1}{\gact{\alpha}{a}}{s_1'}$, then there is a set
$CS_2' \subseteq \CondrAlgm{\kappa} \x S_2$ of cardinality less than
$\kappa$ such that $\alpha \meet \beta \beloweq \infjoin \dom(CS_2')$
and for all $\tup{\alpha',s_2'} \in CS_2'$,
$\astepii{s_2}{\gact{\alpha'}{a}}{s_2'}$ and
$B(s_1',\retro \alpha',s_2')$;
\item
if $\astepii{s_2}{\gact{\alpha}{a}}{s_2'}$, then there is a set
$CS_1' \subseteq \CondrAlgm{\kappa} \x S_1$ of cardinality less than
$\kappa$ such that $\alpha \meet \beta \beloweq \infjoin \dom(CS_1')$
and for all $\tup{\alpha',s_1'} \in CS_1'$,
$\astepi{s_1}{\gact{\alpha'}{a}}{s_1'}$ and
$B(s_1',\retro \alpha',s_2')$;
\item
if $\istermi{s_1}{\alpha}$, then there is a set
$C' \subseteq \CondrAlgm{\kappa}$ of cardinality less than $\kappa$ such
that $\alpha \meet \beta \beloweq \infjoin C'$ and for all
$\alpha' \in C'$, $\istermii{s_2}{\alpha'}$;
\item
if $\istermii{s_2}{\alpha}$, then there is a set
$C' \subseteq \CondrAlgm{\kappa}$ of cardinality less than $\kappa$ such
that $\alpha \meet \beta \beloweq \infjoin C'$ and for all
$\alpha' \in C'$, $\istermi{s_1}{\alpha'}$.
\end{iteml}
Two conditional transition systems with retrospection
$T_1,T_2 \in \CTSer_\kappa$ are \emph{retrospective splitting bisimilar},
written $T_1 \rsbisim T_2$, if there exists a retrospective splitting
bisimulation $B$ between $T_1$ and $T_2$.
Let $B$ be a retrospective splitting bisimulation between $T_1$ and
$T_2$.
Then we say that $B$ is a retrospective splitting bisimulation
\emph{witnessing} $T_1 \rsbisim T_2$.

It is straightforward to see that ${\rsbisim}$ is an equivalence on
$\CTSer_\kappa$.
Let $T \in \CTSer_\kappa$.
Then we write $\rsbeqvc{T}$ for
$\set{T' \in \CTSer_\kappa \where T \rsbisim T'}$, i.e.\ the
${\rsbisim}$\,-equivalence class of $T$.
We write $\CTSer_\kappa / {\rsbisim}$ for the set of equivalence classes
$\set{\rsbeqvc{T} \where T \in \CTSer_\kappa}$.

The elements of $\CTSer_\kappa$ and operations on $\CTSer_\kappa$ to be
associated with the constants and operators of \ACPec\ are defined
exactly as the elements of $\CTSe_\kappa$ and operations on $\CTSe_\kappa$
associated with them, except for $\parc$, $\leftm$ and $\commm$.
The operations on $\CTSer_\kappa$ that we associate with $\parc$,
$\leftm$, $\commm$, $\rshiftz$ and $\rshift{n}$ call for unfolding of
transition systems from $\CTSer_\kappa$.

For the sake of unfolding, it is assumed that, for each infinite
cardinal $\kappa$, $\States{\kappa}$ has the following closure
property:%
\footnote
{We write $\emptyseq$ for the empty sequence,
 $\seq{e}$ for the sequence having $e$ as sole element and
 $\sigma \conc \sigma'$ for the concatenation of sequences $\sigma$
 and $\sigma'$; and
 we use $\seq{e_1,\ldots,e_n}$ as a shorthand for
 $\seq{e_1} \conc \ldots \conc \seq{e_n}$.
}
\begin{ldispl}
\mathrm{for}\; \mathrm{all}\; S \subseteq \States{\kappa},\;
\set{\pi \conc \seq{s} \where
     \pi \in \seqof{(S \x (\CondrAlg{\kappa} \x \Act))}, s \in S}
 \subseteq
\States{\kappa}\;.
\end{ldispl}
We write $\paths'(S)$ for the set
$\set{\pi \conc \seq{s} \where
      \pi \in \seqof{(S \x (\CondrAlg{\kappa} \x \Act))}, s \in S}$.
The function $\funct{\nr}{\paths'(S)}{\Nat}$ is defined by
\begin{ldispl}
\begin{aeqns}
\nr(\seq{s}) & = & 0\;,
\\
\nr(\pi \conc \seq{s,\ell,s'}) & = & \nr(\pi \conc \seq{s}) + 1\;.
\end{aeqns}
\end{ldispl}

The elements of $\paths'(S)$, for an $S \subseteq \States{\kappa}$,
can be looked upon as potential paths of a $\kappa$-conditional
transition system with $S$ as set of states.
A path of a transition system
$\tup{S,{\stepf},{\stermf},s^0} \in \CTSer_\kappa$ is a finite
alternating sequence $\seq{s_0,\ell_1,s_1,\ldots,\ell_n,s_n}$ of states
from $S$ and labels from $\CondrAlg{\kappa} \x \Act$ such that
$s_0 =s^0$ and $\astep{s_i}{\ell_{i+1}}{s_{i+1}}$ for all $i < n$.
The state $s_n$ is called the state in which the path ends.

Let $T = \tup{S,{\stepf},{\stermf},s^0} \in \CTSer_\kappa$.
Then the set of \emph{paths} of $T$, written $\paths(T)$, is the
smallest subset of $\paths'(S)$ such that:
\begin{iteml}
\item
$\seq{s^0} \in \paths(T)$,
\item
if $\pi \conc \seq{s} \in \paths(T)$ and $\astep{s}{\ell}{s'}$,
then $\pi \conc \seq{s,\ell,s'} \in \paths(T)$.
\end{iteml}
In order to unfold a transition system, we need for each state $s$ of
the original transition system, for each different path that ends in
state $s$, a different state in the unfolded transition system.
The obvious choice is to take the paths concerned as states.

Let $T = \tup{S,{\stepf},{\stermf},s^0} \in \CTSer_\kappa$.
Then the \emph{unfolding} of $T$, written $\unf(T)$, is defined as
follows:
\begin{ldispl}
\unf(T) = \tup{S',{\stepf}',{\stermf}',s^0{}'}\;,
\end{ldispl}
where
\begin{ldispl}
\begin{aeqns}
S' & = & \paths(T)\;,
\end{aeqns}
\end{ldispl}
and for every
$\ell \in \CondrAlgm{\kappa} \x \Act$ and
$\alpha \in  \CondrAlgm{\kappa}$:
\begin{ldispl}
\begin{aeqns}
{\stepp{\ell}} & = &
\set{\tup{\pi \conc \seq{s}, \pi \conc \seq{s,\ell,s'}} \where
     \pi \conc \seq{s} \in \paths(T), \astep{s}{\ell}{s'}}\;,
\\
{\stermp{\alpha}} & = &
\set{\pi \conc \seq{s} \where
     \pi \conc \seq{s} \in \paths(T), \isterm{s}{\alpha}}\;,
\\
s^0{}' & = & \seq{s^0}\;.
\end{aeqns}
\end{ldispl}

The functions $\updl$ and $\updr$ defined next will be used in the
definition of parallel composition on $\CTSer_\kappa$ to adapt the
retrospection in steps originating from the first operand and the
second operand, respectively.

Let $S_1,S_2 \subseteq \States{\kappa}$.
Then the functions
$\funct{\upd_i}
 {\CondrAlgm{\kappa} \x \paths'(S_1 \x S_2)}{\CondrAlgm{\kappa}}$,
for $i = 1,2$, are defined by
\begin{ldispl}
\begin{aeqns}
\upd_i(\alpha,\seq{\tup{s_1,s_2}}) & = & \alpha\;,
\\
\upd_i(\alpha,\seq{\tup{s_1,s_2},\ell,\tup{s_1',s_2'}} \conc \pi') & = &
\upd_i(\alpha,\seq{\tup{s_1',s_2'}} \conc \pi') &
\;\mathrm{if}\; s_i \neq s_i'\;,
\\
\upd_i(\alpha,\seq{\tup{s_1,s_2},\ell,\tup{s_1',s_2'}} \conc \pi') & = &
\\
\multicolumn{3}{r}
{\upd_i(\rshift{\nr_i(\seq{\tup{s_1',s_2'}} \conc \pi')}(\alpha),
       \seq{\tup{s_1',s_2'}} \conc \pi')} &
\;\mathrm{if}\; s_i = s_i'\;.
\end{aeqns}
\end{ldispl}
where
\begin{ldispl}
\begin{aeqns}
\nr_i(\seq{\tup{s_1,s_2}}) & = & 0\;,
\\
\nr_i(\seq{\tup{s_1,s_2},\ell,\tup{s_1',s_2'}} \conc \pi') & = &
\nr_i(\seq{\tup{s_1',s_2'}} \conc \pi') + 1 &
\;\mathrm{if}\; s_i \neq s_i'\;,
\\
\nr_i(\seq{\tup{s_1,s_2},\ell,\tup{s_1',s_2'}} \conc \pi') & = &
\nr_i(\seq{\tup{s_1',s_2'}} \conc \pi') &
\;\mathrm{if}\; s_i = s_i'\;.
\end{aeqns}
\end{ldispl}
Henceforth, we write $\upd(\alpha_1,\alpha_2,\pi)$ for
$\updl(\alpha_1,\pi) \meet \updr(\alpha_2,\pi)$.

We proceed with associating operations on $\CTSer_\kappa$ with the
operators $\parc$, $\leftm$, $\commm$, $\rshiftz$ and $\rshift{n}$.

We associate with the additional operator $\parc$ an operation
$\tsparcr$ on $\CTSer_\kappa$ as follows.
\begin{iteml}
\item
Let $T_1, T_2 \in \CTSer_\kappa$.
Suppose that
$\unf(T_i) = \tup{S_i,{\stepf}_i,{\stermf}_i,s^0_i}$ for $i = 1,2$, and
$\unf(\unf(T_1) \tsparc \unf(T_2)) = \tup{S,{\stepf},{\stermf},s^0}$.
Then
\begin{ldispl}
T_1 \tsparcr T_2 = \tup{S,{\stepf}',{\stermf}',s^0}\;,
\end{ldispl}
where for every $\tup{\alpha,a} \in \CondrAlgm{\kappa} \x \Act$ and
$\alpha'' \in \CondrAlgm{\kappa}$:
\pagebreak[2]
\begin{ldispl}
\begin{aeqns}
{\step{\tup{\alpha,a}}}{}' & = &
\set{\tup{\pi \conc \seq{\tup{s_1,s_2}},
          \pi' \conc \seq{\tup{s_1',s_2'}}} \where
     s_1 \neq s_1' \And s_2 = s_2' \And {}
\\ & & \phantom{\{\,}
     \smash{\OR{\alpha' \in \CondrAlgm{\kappa}}}
      (\astep{\pi \conc \seq{\tup{s_1,s_2}}}
             {\gact{\alpha'}{a}}{\pi' \conc \seq{\tup{s_1',s_2'}}}
        \And {}
\\ & & \hfill
       \updl(\alpha',\pi \conc \seq{\tup{s_1,s_2}}) = \alpha)}%
\phantom{\;,}
\\ & \union &
\set{\tup{\pi \conc \seq{\tup{s_1,s_2}},
          \pi' \conc \seq{\tup{s_1',s_2'}}} \where
     s_1 = s_1' \And s_2 \neq s_2' \And {}
\\ & & \phantom{\{\,}
     \smash{\OR{\alpha' \in \CondrAlgm{\kappa}}}
      (\astep{\pi \conc \seq{\tup{s_1,s_2}}}
             {\gact{\alpha'}{a}}{\pi' \conc \seq{\tup{s_1',s_2'}}}
        \And {}
\\ & & \hfill
       \updr(\alpha',\pi \conc \seq{\tup{s_1,s_2}}) = \alpha)}%
\phantom{\;,}
\phantom{{\step{\tup{\alpha,a}}}{}'}
\\ & \union &
\set{\tup{\pi \conc \seq{\tup{s_1,s_2}},
          \pi' \conc \seq{\tup{s_1',s_2'}}} \where
\\ & & \phantom{\{\,}
     \smash{\OR{\alpha',\beta' \in \CondrAlgm{\kappa}, a',b' \in \Act}}
      (\astep{\pi \conc \seq{\tup{s_1,s_2}}}
             {\gact{\alpha' \meet \beta'}{a}}
             {\pi' \conc \seq{\tup{s_1',s_2'}}}
        \And {}\phantom{)\;.}
\\ & & \phantom{\{\,}
\phantom
{\smash{\OR{\alpha',\beta' \in \CondrAlgm{\kappa}, a',b' \in \Act}}(}
       \astepi{s_1}{\gact{\alpha'}{a'}}{s_1'} \And
       \astepii{s_2}{\gact{\beta'}{b'}}{s_2'} \And {}
\\ & & \phantom{\{\,}
\phantom
{\smash{\OR{\alpha',\beta' \in \CondrAlgm{\kappa}, a',b' \in \Act}}(}
       \upd(\alpha',\beta',\pi \conc \seq{\tup{s_1,s_2}}) = \alpha
        \And {}
\\ & & \hfill
       a' \commm b' = a)}\;,
\eqnsep
{\sterm{\alpha''}}{}' & = &
\set{\pi \conc \seq{\tup{s_1,s_2}} \where
\\ & & \phantom{\{\,}
     \smash{\OR{\alpha',\beta' \in \CondrAlgm{\kappa}}}
      (\isterm{\pi \conc \seq{\tup{s_1,s_2}}}{\alpha' \meet \beta'} \And
       \istermi{s_1}{\alpha'} \And
       \istermii{s_2}{\beta'} \And {}
\\ & & \hfill
       \upd
        (\alpha',\beta',\pi \conc \seq{\tup{s_1,s_2}}) = \alpha'')}\;.
\end{aeqns}
\end{ldispl}
\end{iteml}
\begin{remark}
The operation $\tsparcr$ on $\CTSer_\kappa$ is defined above in a
step-by-step way.
The basic idea behind this definition is twofold:
\begin{iteml}
\item
$T_1 \tsparcr T_2$ can be obtained by first composing $T_1$ and $T_2$
to $T_1 \tsparc T_2$ and then adapting the retrospections in steps of
$T_1 \tsparc T_2$;
\item
unfolding of $T_1 \tsparc T_2$ is needed before the actual adaptations
can take place because the adaptation of the retrospection in a step
may be different for the different paths that end in the state from
which the step starts.
\end{iteml}
Somewhat surprisingly, in addition, $T_1$ and $T_2$ must be unfolded
before the actual composition takes place.
In a step where an action of $T_1$ and an action of $T_2$ are performed
synchronously, the condition under which the action of $T_1$ can be
performed and the condition under which the action of $T_2$ can be
performed are needed to adapt the retrospection in that step correctly.
If $T_1$ and $T_2$ are not unfolded before the actual composition takes
place, in general, those conditions cannot be determined uniquely.
\end{remark}
The operations on $\CTSer_\kappa$ to be associated with the additional
operators $\leftm$ and $\commm$ are defined analogously.
The operations on $\CTSer_\kappa$ to be associated with the additional
operators $\encap{H}$ are defined exactly as the operations on
$\CTSe_\kappa$ associated with them.
We associate with the additional operators $\rshift{n}$ operations
$\tsrshiftr{n}$ on $\CTSer_\kappa$ as follows.
\begin{iteml}
\item
Let $T \in \CTSer_\kappa$.
Suppose that
$\unf(T) = \tup{S,{\stepf},{\stermf},s^0}$.
Then
\begin{ldispl}
\tsrshiftr{n}(T) = \tup{S,{\stepf}',{\stermf}',s^0}\;,
\end{ldispl}
where for every $\tup{\alpha,a} \in \CondrAlgm{\kappa} \x \Act$ and
$\alpha'' \in \CondrAlgm{\kappa}$:
\begin{ldispl}
\begin{aeqns}
{\step{\tup{\alpha,a}}}{}' & = &
\set{\tup{\pi \conc \seq{s},\pi' \conc \seq{s'}} \where
\\ & & \phantom{\{\,}
     \OR{\alpha' \in \CondrAlgm{\kappa}}
      (\astep{\pi \conc \seq{s}}{\gact{\alpha'}{a}}{\pi' \conc \seq{s'}}
        \And
       \rshift{\nr(\pi) + n}(\alpha') = \alpha)}\;,
\eqnsep
{\sterm{\alpha''}}{}' & = &
\set{\pi \conc \seq{s} \where
     \OR{\alpha' \in \CondrAlgm{\kappa}}
      (\isterm{\pi \conc \seq{s}}{\alpha'} \And
       \rshift{\nr(\pi) + n}(\alpha') = \alpha'')}\;.
\end{aeqns}
\end{ldispl}
\end{iteml}
The operation on $\CTSer_\kappa$ to be associated with the additional
operator $\rshiftz$ is the same as the operation on $\CTSer_\kappa$
associated with $\rshift{0}$.

We can show that retrospective splitting bisimilarity is a congruence
with respect to the operations on $\CTSer_\kappa$ associated with the
operators of \ACPecr.
\begin{proposition}[Congruence]
\label{prop-congruence-ACPecr}
Let $\kappa$ be an infinite cardinal.
Then for all $T_1,T_2,T_1',T_2' \in \CTSer_\kappa$ and
$\alpha \in \CondAlg{\kappa}$,
$T_1 \rsbisim T_1'$ and $T_2 \rsbisim T_2'$ imply
$T_1 \tsaltcr T_2 \rsbisim T_1' \tsaltcr T_2'$,
$T_1 \tsseqcr T_2 \rsbisim T_1' \tsseqcr T_2'$,
$\alpha \tsgcr T_1 \rsbisim \alpha \tsgcr T_1'$,
$T_1 \tsparcr T_2 \rsbisim T_1' \tsparcr T_2'$,
$T_1 \tsleftmr T_2 \rsbisim T_1' \tsleftmr T_2'$,
$T_1 \tscommmr T_2 \rsbisim T_1' \tscommmr T_2'$,
$\tsencapr{H}(T_1) \rsbisim \tsencapr{H}(T_1')$,
$\tsrshiftzr(T_1) \rsbisim \tsrshiftzr(T_1')$ and
$\tsrshiftr{n}(T_1) \rsbisim \tsrshiftr{n}(T_1')$.
\end{proposition}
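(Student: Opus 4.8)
The plan is to follow the two-tier strategy already used for Propositions~\ref{prop-congruence-ACPec} and~\ref{prop-congruence-ACPecs}: first dispatch the operators whose associated operations on $\CTSer_\kappa$ coincide with the ones on $\CTSe_\kappa$, and then treat the genuinely new cases. Throughout, let $R_1$ and $R_2$ be retrospective splitting bisimulations witnessing $T_1 \rsbisim T_1'$ and $T_2 \rsbisim T_2'$.

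For $\tsaltcr$, $\tsseqcr$, $\tsgcr$ and $\tsencapr{H}$ the operations are defined exactly as the corresponding operations on $\CTSe_\kappa$, and none of them unfolds its arguments or rewrites retrospections. Hence the witnessing relations can be constructed exactly as in the proof of Proposition~\ref{prop-congruence-ACPec}, the only adaptation being that a retrospective splitting bisimulation is a ternary relation: wherever that proof places a pair $\tup{s_1,s_2}$ in the witness, I carry along the retrospection condition prescribed by $R_1$ or $R_2$, and I check the four clauses in the same manner, using that $\beloweq$, $\meet$, $\join$ and $\infjoin$ are unchanged. The only point needing a separate remark is that the initial triple is $\tup{s^0_1,\top,s^0_2}$, which is immediate.

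The substantive work concerns $\tsparcr$ (and, analogously, $\tsleftmr$ and $\tscommmr$) together with the shift operators $\tsrshiftzr$ and $\tsrshiftr{n}$. Here the states of the composed systems are paths, since the defining clauses pass through $\unf$. I would therefore lift $R_1$ and $R_2$ to paths and declare a path of $\unf(\unf(T_1) \tsparc \unf(T_2))$ related to a path of $\unf(\unf(T_1') \tsparc \unf(T_2'))$ precisely when their projections onto the first factor are related by the path-lifting of $R_1$ and their projections onto the second by that of $R_2$; the carried retrospection condition is taken to be the $\upd$-adapted meet of the two component conditions. The transition and termination clauses then reduce step by step to the corresponding clauses for $R_1$ and $R_2$ on the unfolded factors. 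For $\tsrshiftzr$ and $\tsrshiftr{n}$ the construction is the same but with a single bisimulation $R$ lifted to $\paths(T)$.

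The main obstacle I anticipate is showing that the path-dependent rewriting of conditions is compatible with the splitting clauses. When a transition with label $\gact{\alpha}{a}$ of one composed system is matched by a set $CS'$ of transitions of the other with $\alpha \meet \beta \beloweq \infjoin \dom(CS')$, every condition on both sides has first been passed through $\updl$ or $\updr$ (for $\tsparcr$), respectively through $\rshift{\nr(\pi)+n}$ (for the shift operators), before landing in $\CTSer_\kappa$. I must verify that applying these maps to each condition occurring in $CS'$ preserves the inequality. This holds because every $\rshift{m}$, and hence each component of $\upd$, acts as a $\kappa$-complete endomorphism of $\CondrAlg{\kappa}$ preserving $\meet$, $\join$ and the supremum $\infjoin$ of sets of cardinality less than $\kappa$ --- exactly the properties recorded in the construction of $\CondrAlg{\kappa}$. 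The delicate bookkeeping is that the shift amount depends on $\nr(\pi)$ and on which factor moved, so related paths must induce equal shift amounts; this follows because $R_1$ and $R_2$, being bisimulations, match the step structure of the two factors, whence the counting functions $\nr_i$ take equal values on related paths. Once this compatibility is established, the four clauses for the composite witness follow directly from those for $R_1$ and $R_2$.
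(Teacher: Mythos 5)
Your overall division of labour is the one the paper intends (its own proof is a one-line deferral to the congruence proofs for the corresponding operations on $\CTSr_\kappa$ in~\cite{BM05a}), and your treatment of $\tsaltcr$, $\tsseqcr$, $\tsgcr$, $\tsencapr{H}$ and of $\tsrshiftzr$, $\tsrshiftr{n}$ is sound. The genuine gap is in $\tsparcr$ (and hence $\tsleftmr$, $\tscommmr$): relating two composite paths whenever their projections are related is too generous, and your justification --- that the totals $\nr_i$ agree on related paths, so the shift amounts agree --- is incorrect, because $\updl$ and $\updr$ depend not on those totals but on the \emph{order} in which the two components moved along the path. Concretely, let $T_1 = T_1'$ have the transitions $\astep{u_0}{\gact{\eta}{a}}{u_1}$ and $\astep{u_1}{\gact{\retro \eta}{a}}{u_2}$, let $T_2 = T_2'$ have the single transition $\astep{v_0}{\gact{\top}{b}}{v_1}$, and let $R_1$, $R_2$ be identity bisimulations. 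The composite path $\pi$ in which component~1 moves before component~2 and the path $\pi'$ in which they move in the opposite order have identical projections, so your relation contains $(\pi,\delta,\pi')$ for whichever $\delta$ your recipe produces. But the unique outgoing $a$-transition has condition $\updl(\retro \eta, \pi) = \rshift{0}(\retro \eta) = \retro \retro \eta$ at the end of $\pi$, and condition $\updl(\retro \eta, \pi') = \rshift{1}(\retro \eta) = \retro \eta$ at the end of $\pi'$, and these are \emph{distinct free generators} of $\CondrAlg{\kappa}$. Your recipe gives $\delta = \retro\retro\eta$ (adapting along $\pi$) or $\delta = \retro\eta$ (adapting along $\pi'$); in the first case the clause matching the transition out of $\pi$ requires $\retro\retro\eta \meet \delta \beloweq \retro\eta$, in the second the clause matching the transition out of $\pi'$ requires $\retro\eta \meet \delta \beloweq \retro\retro\eta$, and both fail in the free algebra. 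So the relation you define is not a retrospective splitting bisimulation.

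The repair is to shrink the witness rather than strengthen the argument: relate $(\pi,\delta,\pi')$ only when $\pi$ and $\pi'$ have the \emph{same interleaving pattern} --- at each position both record a component-1 step, both a component-2 step, or both a synchronisation --- with the corresponding component steps related through $R_1$ and $R_2$, and with $\delta$ equal to $\retro \alpha'$ for $\alpha'$ the condition of the matching composite transition, as the definition of retrospective splitting bisimulation prescribes. On such pairs $\updl$ and $\updr$ are applied with literally the same shift indices on both sides, and your correct observation that each $\rshift{m}$ acts as a $\kappa$-complete endomorphism of $\CondrAlg{\kappa}$, preserving $\meet$ and $\infjoin$ of sets of cardinality less than $\kappa$, then does close the splitting clauses. (No such restriction is needed for $\tsrshiftzr$ and $\tsrshiftr{n}$: related paths of $\unf(T_1)$ and $\unf(T_1')$ automatically have the same length, since a single step is always matched by a set of single steps, so the index $\nr(\pi)+n$ agrees on both sides; that part of your argument stands.)
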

\begin{proof}
For all operations, witnessing splitting bisimulations are constructed
in the same way as in the congruence proofs for the corresponding
operations on $\CTSr_\kappa$ given in~\cite{BM05a}.
\qed
\end{proof}

The ingredients of the
\emph{full retrospective splitting bisimulation models} $\CPrcer_\kappa$
of \ACPecr, one for each infinite cardinal $\kappa$, are defined as
follows:
\begin{ldispl}
\begin{aeqns}
\cP & = & \CTSer_\kappa / {\rsbisim}\;,
\eqnsep
\sdeadr & = & \rsbeqvc{\tsdeadr}\;,
\eqnsep
\sepr & = & \rsbeqvc{\tsepr}\;,
\eqnsep
\sactr{a} & = & \rsbeqvc{\tsactr{a}}\;,
\eqnsep
\rsbeqvc{T_1} \saltcr \rsbeqvc{T_2} & = & \rsbeqvc{T_1 \tsaltcr T_2}\;,
\eqnsep
\rsbeqvc{T_1} \sseqcr \rsbeqvc{T_2} & = & \rsbeqvc{T_1 \tsseqcr T_2}\;,
\eqnsep
\alpha \sgcr \rsbeqvc{T_1} & = & \rsbeqvc{\alpha \tsgcr T_1}\;,
\end{aeqns}
\quad\;\;
\begin{aeqns}
{} \eqnsep
\rsbeqvc{T_1} \sparcr \rsbeqvc{T_2} & = & \rsbeqvc{T_1 \tsparcr T_2}\;,
\eqnsep
\rsbeqvc{T_1} \sleftmr \rsbeqvc{T_2} & = & \rsbeqvc{T_1 \tsleftmr T_2}\;,
\eqnsep
\rsbeqvc{T_1} \scommmr \rsbeqvc{T_2} & = & \rsbeqvc{T_1 \tscommmr T_2}\;,
\eqnsep
\sencapr{H}(\rsbeqvc{T_1}) & = & \rsbeqvc{\tsencapr{H}(T_1)}\;,
\eqnsep
\srshiftzr(\rsbeqvc{T_1}) & = & \rsbeqvc{\tsrshiftzr(T_1)}\;,
\eqnsep
\srshiftr{n}(\rsbeqvc{T_1}) & = & \rsbeqvc{\tsrshiftr{n}(T_1)}\;.
\end{aeqns}
\end{ldispl}
The operations on $\CTSer_\kappa / {\rsbisim}$ are well-defined because
${\rsbisim}$ is a congruence with respect to the corresponding
operations on $\CTSer_\kappa$.

The structures $\CPrcer_\kappa$ are models of \ACPecr.
\begin{theorem}[Soundness of \ACPecr]
\label{theorem-soundness-ACPecr}
For each infinite cardinal $\kappa$, we have
$\Sat{\CPrcer_\kappa}{\ACPecr}$.
\end{theorem}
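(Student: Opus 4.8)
The plan is to follow the same route as in the soundness proofs for $\CPrce_\kappa$ (Theorem~\ref{theorem-soundness-ACPec}) and $\CPrces_\kappa$ (Theorem~\ref{theorem-soundness-ACPecs}). By Proposition~\ref{prop-congruence-ACPecr}, retrospective splitting bisimilarity ${\rsbisim}$ is a congruence with respect to the operations on $\CTSer_\kappa$ associated with the operators of \ACPecr; hence the operations on $\CTSer_\kappa / {\rsbisim}$ are well-defined, and it suffices to verify, for each axiom, that its two sides denote conditional transition systems with retrospection that are related by ${\rsbisim}$. First I would dispose of the inherited axioms whose operations are unchanged, that is, those built only from the constants and from $\altc$, $\seqc$, $\gc$ and $\encap{H}$ (in particular A1--A9, GC1--GC11, D0--D4 and the Boolean axioms BA1--BA8). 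Since the elements of $\CTSer_\kappa$ and these operations are defined exactly as for $\CTSe_\kappa$, the witnessing relations are those of the corresponding soundness argument for \ACPcr\ in~\cite{BM05a}, merely carrying along the history condition $\beta$ that ${\rsbisim}$ threads through; the extra clause on $\beta$ is met automatically because such an axiom never alters the leading step under which the history is retrospected.

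Next I would treat the axioms speaking only about conditions, namely R1--R5 and RS5--RS12. These hold already in the algebra $\CondrAlg{\kappa}$: R1--R5 are precisely the properties of the operation $\prev^*$ associated with $\retro$, and RS5--RS12 are read off from the fact that each condition-level $\rshift{n}$ is a $\kappa$-complete endomorphism that fixes the atomic conditions and shifts $\retro$ as prescribed by RS11 and RS12. Their soundness in $\CPrcer_\kappa$ therefore reduces to validity in $\CondrAlg{\kappa}$, independently of any transition system. The remaining obligations are those in which retrospection genuinely meets process behaviour: the axioms for $\parc$, $\leftm$ and $\commm$ (with CM3 replaced by CM3R), the process-level shift axioms RS0, RS1T and RS2--RS4, and the crucial axiom R6. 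For the merge and shift axioms I would unfold the relevant operands and appeal to the step-by-step definitions of $\tsparcr$, $\tsleftmr$ and $\tsrshiftr{n}$ via $\unf$ and the functions $\updl$ and $\updr$; soundness of CM3R then amounts to checking that the one step of history introduced by performing a leading action of $\leftm$ is exactly the shift that $\rshiftz$ applies to the second operand, which is what $\upd$ records.

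The hard part will be the soundness of R6, namely $a \seqc (\retro \phi \gc x) = \phi \gc a \seqc x \altc \bcompl \phi \gc a \seqc \dead$, which is the axiom that gives retrospection its meaning. Here I would build the witnessing retrospective splitting bisimulation by hand and exploit the way ${\rsbisim}$ retrospects the history. On the left the leading action is performed under $\top$ and leads to the state $\retro \phi \gc x$; on the right it is performed under $\phi$, reaching $x$, and under $\bcompl \phi$, reaching $\dead$, and since $\phi \join \bcompl \phi = \top$ the splitting requirement at the initial states is met. Matching the left step against the right step under $\phi$ should retrospect the history to $\retro \phi$, and with this history the guard $\retro \phi$ carried by the state $\retro \phi \gc x$ is absorbed, so that $\retro \phi \gc x$ simulates $x$ and conversely; matching against $\bcompl \phi$ should retrospect the history to $\bcompl \retro \phi$, under which $\retro \phi \gc x$ offers nothing and so matches $\dead$. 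The delicate point, and the main obstacle, is to exhibit a single ternary relation that simultaneously satisfies all four clauses of the definition together with the suprema conditions while keeping the retrospected history consistent along every branch, since the backward matching at the initial states must not force an inconsistent history value. This is exactly the verification carried out for the analogous axiom in the setting of \ACPcr\ in~\cite{BM05a}, and the argument transfers unchanged because R6 involves only $\altc$, $\seqc$ and $\gc$, whose constructions are inherited. Once R6 is settled, the remaining axioms follow straightforwardly from the definition of $\CPrcer_\kappa$.
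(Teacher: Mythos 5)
Your proposal is correct and takes essentially the same route as the paper's own (much terser) proof: Proposition~\ref{prop-congruence-ACPecr} reduces soundness to an axiom-by-axiom check against the definition of $\CPrcer_\kappa$, with the axioms in common with \ACPcr\ (including R6) handled exactly as in the corresponding soundness argument of~\cite{BM05a}. Your grouping of the axioms and the explicit witnessing relation for R6 simply flesh out what the paper dismisses as following straightforwardly from the definitions.
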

\begin{proof}
Because ${\rsbisim}$ is a congruence, it is sufficient to show that all
axioms are sound.
The soundness of all axioms follows straightforwardly from the
definition of $\CPrcer_\kappa$.
\qed
\end{proof}
For all axioms that are in common with \ACPcr, the proof of soundness
with respect to $\CPrcer_\kappa$ follows the same line as the proof of
soundness with respect to $\CPrcr_\kappa$.

In the full retrospective splitting bisimulation models of \ACPecr,
guarded recursive specifications over \ACPecr\ have unique solutions.
\begin{theorem}[Unique solutions in $\CPrcer_\kappa$]
\label{theorem-uniqueness-retro}
\sloppy
For each infinite cardinal $\kappa$, guarded recursive specifications
over \ACPecr\ have unique solutions in $\CPrcer_\kappa$.
\end{theorem}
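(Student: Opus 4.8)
The plan is to follow the line of the corresponding result for \ACPc\ and \ACPcr\ in~\cite{BM05a}, adapting it to the presence of $\ep$ exactly as in the proof of Theorem~\ref{theorem-uniqueness}. As there, it suffices to treat existence and uniqueness of solutions separately.

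For existence, I would associate with each guarded recursive specification $E$ and each $X \in \vars(E)$ a connected $\kappa$-conditional transition system with retrospection built from the structural operational semantics of \ACPecr\ together with the transition rules for recursion (the retrospective analogues of Table~\ref{sos-ACPec-rec}), taking as states the closed terms $\rec{t}{E}$. Verifying that the resulting family of $\rsbisim$-classes satisfies the equations of $E$ amounts to checking that, for each $X = t_X \in E$, the transition system of $\rec{X}{E}$ and that of $\rec{t_X}{E}$ are retrospective splitting bisimilar, which is immediate from the recursion rules.

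For uniqueness, suppose $\set{P_X}$ and $\set{Q_X}$ are two solutions of $E$ in $\CPrcer_\kappa$. Using the axioms of \ACPecr\ together with the equations of $E$, I would first rewrite $E$ to an equivalent guarded recursive specification in which every occurrence of a variable in a right-hand side is guarded. The key step is then to exhibit a single retrospective splitting bisimulation witnessing $P_X \rsbisim Q_X$ for all $X$ simultaneously. The natural candidate relates, for each $X$ and each finite sequence $w$ of guarded actions, the state reached from (the transition system of) $P_X$ by $w$ to the state reached from $Q_X$ by the same $w$, carrying as its ternary condition component the accumulated retrospection context determined by $w$. Guardedness guarantees that the initial guarded actions available to $P_X$ are exactly those prescribed by $t_X$, and hence, by the solution property applied to both families, coincide with those available to $Q_X$; an induction on the length of $w$ then propagates the match to all reachable states.

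The main obstacle I expect is the bookkeeping of the retrospection context in the third argument of the bisimulation. Because the operations on $\CTSer_\kappa$ associated with $\parc$, $\leftm$, $\commm$ and the shift operators $\rshift{n}$ are defined only after unfolding, and transform the labels of steps through $\upd$, $\updl$, $\updr$ and $\rshift{n}$ along paths, I must check that the component attached to related states is transformed consistently on both sides when a step is taken --- in particular for synchronous steps, where the conditions of the two synchronising actions are combined. This is precisely the technical point that is already handled in the congruence proof (Proposition~\ref{prop-congruence-ACPecr}) and in the uniqueness proof for \ACPcr\ in~\cite{BM05a}, so once the witnessing relation is set up the remaining verifications are routine adaptations of those arguments.
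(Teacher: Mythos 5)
Your overall strategy is the one the paper itself uses: the paper's entire proof consists of the remark that the argument is analogous to the one for \ACPcr\ in~\cite{BM05a}, exactly as Theorem~\ref{theorem-uniqueness} defers to the \ACPc\ case, and your existence half (interpret $\rec{X}{E}$ by the transition system generated by the structural operational semantics together with the recursion rules, then invoke the RDP-style rules) is the standard instantiation of that. One point you gloss over there: to conclude that the equations of $E$ hold \emph{in the model} $\CPrcer_\kappa$ you also need that the SOS interpretation of a composite term agrees up to $\rsbisim$ with the result of applying the model operations --- which for $\parc$, $\leftm$, $\commm$ and $\rshift{n}$ are defined via unfolding and the maps $\updl$ and $\updr$ --- to the SOS interpretations of its arguments; this compatibility is part of what the citation hides, though it is a routine (if lengthy) verification.

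The genuine gap is in your uniqueness argument. The candidate relation you propose --- relate ``the state reached from $P_X$ by a finite sequence $w$ of guarded actions'' to ``the state reached from $Q_X$ by the same $w$'' --- is not well defined, since conditional transition systems are nondeterministic and several states may be reached by the same $w$; and its natural repair, relating every $w$-reachable state on one side to every $w$-reachable state on the other, is not a retrospective splitting bisimulation. Take $E = \set{X = a \seqc b \altc a \seqc c}$: in representatives of any two solutions, after the single label $\gact{\top}{a}$ the relation would pair one side's $b$-continuation with the other side's $c$-continuation, and that pair violates the transfer conditions outright. Traces cannot separate the two $a$-branches; what separates them is the syntactic position in the unfolding of $E$. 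The correct witnessing relation is therefore built from the specification itself: since $P_X$ equals the interpretation of $t_X$ with each variable $Y$ instantiated by $P_Y$ (and likewise for $Q_X$), relate states of the two transition systems that occupy the same position in the skeleton of $t_X$, recurse at each guarded occurrence of a variable, and take the union over all finite unfolding depths, composing along the way with the bisimulations given by the solution property. Guardedness is what makes this union exhaust all reachable state pairs, because each unfolding step lies strictly behind a further action prefix. Your concern about the retrospection component in the third argument of the relation is real but secondary: it attaches to whichever relation one builds and is handled as in Proposition~\ref{prop-congruence-ACPecr}; the missing idea is the position-based (rather than trace-based) construction of the relation itself.
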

\begin{proof}
The proof is analogous to the proof of the corresponding property for
the full retrospective splitting bisimulation models of \ACPcr\ given
in~\cite{BM05a}.
\qed
\end{proof}
Thus, the full retrospective splitting bisimulation models
$\CPrcer_\kappa{}'$ of \ACPecr\ with guarded recursion are simply the
expansions of the full retrospective splitting bisimulation models
$\CPrcer_\kappa$ of \ACPecr\ obtained by associating with each constant
$\rec{X}{E}$ the unique solution of $E$ for $X$ in the full
retrospective splitting bisimulation model concerned.

\section{Evaluation of Retrospective Conditions}
\label{sect-cond-eval-retro}

In this section, we add condition evaluation operators and generalized
condition evaluation operators to \ACPecr.
As in the case of \ACPec, these operators require to fix an infinite
cardinal $\lambda$.
By doing so, full retrospective splitting bisimulation models with
domain $\CTSer_{\kappa} / {\rsbisim}$ for $\kappa > \lambda$ are
excluded.

Henceforth, we write $\CondrHom{\lambda}$ for the set of all
$\lambda$-complete endomorphisms of $\CondrAlg{\lambda}$.

In the case of \ACPecr, there are $\lambda$-complete condition
evaluation operators $\funct{\ceval{h}}{\Proc}{\Proc}$ and
$\funct{\ceval{h}}{\Cond}{\Cond}$, and generalized $\lambda$-complete
condition evaluation operators $\funct{\gceval{h}}{\Proc}{\Proc}$ and
$\funct{\gceval{h}}{\Cond}{\Cond}$, for each $h \in \CondrHom{\lambda}$.
We also need the following auxiliary operators:
\begin{iteml}
\item
for each $h \in \CondrHom{\lambda}$, $n \in \Nat$,\,
the unary \emph{retrospection update} operator
$\funct{\rupd{h}{n}}{\Proc}{\Proc}$;
\item
for each $h \in \CondrHom{\lambda}$, $n \in \Nat$,\,
the unary \emph{retrospection update} operator
$\funct{\rupd{h}{n}}{\Cond}{\Cond}$.
\end{iteml}

In the case of \ACPecr, it is assumed that a fixed but arbitrary function
$\funct{\eff}{\Act \x \CondrHom{\lambda}}{\CondrHom{\lambda}}$ has been
given.
The function $\eff$ is extended to $\Actd$ such that $\eff(\dead,h) = h$
for all $h \in \CondrHom{\lambda}$.

The condition evaluation operators and generalized condition evaluation
operators cannot be added to \ACPecr\ in the same way as they are added
to \ACPec.
First of all, retrospective conditions may refer back too far to be
evaluated.
The effect is that, in condition evaluation or generalized condition
evaluation of a process according to some endomorphism, the
retrospective conditions that refer back further than the beginning of
the process have to be left unevaluated.
This is accomplished by the retrospection update operators mentioned
above.
In the case of generalized condition evaluation, there is another
complication.
Recall that generalized condition evaluation allows the results of
condition evaluation to change by performing an action.
In the presence of retrospection, different parts of a condition may
have to be evaluated differently because of such changes.
The effect is that, in generalized condition evaluation of a process
according to some endomorphism, after an action of the process is
performed, the subsequent retrospective conditions that refer back to
the beginning of the process have to be evaluated according to that
endomorphism as well.
This is also accomplished by the retrospection update operators
mentioned above.

In the case of \ACPecr, the additional axioms for $\ceval{h}$ and
$\gceval{h}$, where $h \in \CondrHom{\lambda}$, are the axioms given in
Tables~\ref{axioms-ceval-gceval-retro} and~\ref{axioms-retro-upd}.%
\begin{table}[!t]
\lcaption{New axioms for (generalized) condition evaluation}
  {$a \in \Actd$}
\label{axioms-ceval-gceval-retro}
\begin{eqntbl}
\begin{axcol}
\ceval{h}(\ep) = \ep                                   & \axiom{CE1T} \\
\ceval{h}(a \seqc x) =
                    a \seqc \ceval{h}(\rupd{h}{1}(x))  & \axiom{CE2R} \\
\ceval{h}(x \altc y) = \ceval{h}(x) \altc \ceval{h}(y) & \axiom{CE3}  \\
\ceval{h}(\phi \gc x) =
                    \rupd{h}{0}(\phi) \gc \ceval{h}(x) & \axiom{CE4R}
\eqnsep
\gceval{h}(\ep) = \ep                                  & \axiom{GCE1T}\\
\gceval{h}(a \seqc x) =
            a \seqc \gceval{\eff(a,h)}(\rupd{h}{1}(x)) & \axiom{GCE2R}\\
\gceval{h}(x \altc y) =
                     \gceval{h}(x) \altc \gceval{h}(y) & \axiom{GCE3} \\
\gceval{h}(\phi \gc x) =
                   \rupd{h}{0}(\phi) \gc \gceval{h}(x) & \axiom{GCE4R}
\end{axcol}
\end{eqntbl}
\end{table}
\begin{table}[!t]
\lcaption{Axioms for retrospection update}
  {$a \in \Actd$, $\eta \in \ACond$,
   $\eta' \in \ACond \union \set{\bot,\top}$}
\label{axioms-retro-upd}
\begin{eqntbl}
\begin{axcol}
\rupd{h}{n}(\ep) = \ep                                  & \axiom{RU1T}\\
\rupd{h}{n}(a \seqc x) = a \seqc \rupd{h}{n+1}(x)       & \axiom{RU2} \\
\rupd{h}{n}(x \altc y) =
                    \rupd{h}{n}(x) \altc \rupd{h}{n}(y) & \axiom{RU3} \\
\rupd{h}{n}(\phi \gc x) =
                   \rupd{h}{n}(\phi) \gc \rupd{h}{n}(x) & \axiom{RU4}
\\ {} \\
\rupd{h}{n}(\bot) = \bot                                & \axiom{RU5} \\
\rupd{h}{n}(\top) = \top                                & \axiom{RU6}
\end{axcol}
\quad\quad
\begin{axcol}
\rupd{h}{0}(\eta) = \eta'   \hfill \mif h(\eta) = \eta' & \axiom{RU7} \\
\rupd{h}{n+1}(\eta) = \eta                              & \axiom{RU8} \\
\rupd{h}{n}(\bcompl \phi) = \bcompl \rupd{h}{n}(\phi)   & \axiom{RU9} \\
\rupd{h}{n}(\phi \join \psi) =
              \rupd{h}{n}(\phi) \join \rupd{h}{n}(\psi) & \axiom{RU10}\\
\rupd{h}{n}(\phi \meet \psi) =
              \rupd{h}{n}(\phi) \meet \rupd{h}{n}(\psi) & \axiom{RU11}\\
\rupd{h}{0}(\retro \phi) = \retro \phi                  & \axiom{RU12}\\
\rupd{h}{n+1}(\retro \phi) =
                             \retro \rupd{h}{n}(\phi)   & \axiom{RU13}
\end{axcol}
\end{eqntbl}
\end{table}
These additional axioms differ from the additional axioms in the absence
of retrospection (Tables~\ref{axioms-ceval} and~\ref{axioms-gceval}) in
that axioms CE2, CE4, GCE2 and GCE4 have been replaced by axioms CE2R,
CE4R, GCE2R and GCE4R, and axioms CE6--CE11 by axioms RU1T and
RU2--RU13.
Axioms CE2R and CE4R, together with axioms RU1T and RU2--RU13, show
that, in condition evaluation of a process, retrospective conditions
that refer back further than the beginning of the process are not at all
evaluated.
Similarly, axioms GCE2R and GCE4R, together with axioms RU1T and
RU2--RU13, show that, in generalized condition evaluation of a process,
retrospective conditions that refer back further than the beginning of
the process are not at all evaluated.
Moreover, axiom GCE2R, together with axioms RU1T and RU2--RU13, shows
that, in generalized condition evaluation of a process according to some
endomorphism, after an action of the process is performed, the
subsequent retrospective conditions that refer back to the beginning of
the process are evaluated according to that endomorphism as well.

The structural operational semantics of \ACPecr\ extended with condition
evaluation and generalized condition evaluation is described by the
transition rules for \ACPecr\ and the transition rules given in
Table~\ref{sos-ceval-gceval-retro}.%
\begin{table}[!t]
\caption{New transition rules for (generalized) condition evaluation}
\label{sos-ceval-gceval-retro}
\begin{ruletbl}
\RuleC
{\isterm{x}{\phi}}
{\isterm{\ceval{h}(x)}{\rupd{h}{0}(\phi)}}
{\rupd{h}{0}(\phi) \neq \bot}
\quad
\RuleC
{\astep{x}{\gact{\phi}{a}}{x'}}
{\astep{\ceval{h}(x)}{\gact{\rupd{h}{0}(\phi)}{a}}
       {\ceval{h}(\rupd{h}{1}(x'))}}
{\rupd{h}{0}(\phi) \neq \bot}
\\
\RuleC
{\isterm{x}{\phi}}
{\isterm{\gceval{h}(x)}{\rupd{h}{0}(\phi)}}
{\rupd{h}{0}(\phi) \neq \bot}
\quad
\RuleC
{\astep{x}{\gact{\phi}{a}}{x'}}
{\astep{\gceval{h}(x)}{\gact{\rupd{h}{0}(\phi)}{a}}
       {\gceval{\eff(a,h)}(\rupd{h}{1}(x'))}}
{\rupd{h}{0}(\phi) \neq \bot}
\\
\RuleC
{\isterm{x}{\phi}}
{\isterm{\rupd{h}{n}(x)}{\rupd{h}{n}(\phi)}}
{\rupd{h}{n}(\phi) \neq \bot}
\quad
\RuleC
{\astep{x}{\gact{\phi}{a}}{x'}}
{\astep{\rupd{h}{n}(x)}{\gact{\rupd{h}{n}(\phi)}{a}}{\rupd{h}{n+1}(x')}}
{\rupd{h}{n}(\phi) \neq \bot}
\end{ruletbl}
\end{table}

The full retrospective splitting bisimulation models of \ACPecr\ with
condition evaluation and/or generalized condition evaluation are not
simply the expansions of the full retrospective splitting bisimulation
models $\CPrcer_\kappa$ of \ACPecr, for infinite cardinals
$\kappa \leq \lambda$, obtained by associating with each operator
$\ceval{h}$ and/or $\gceval{h}$ the corresponding re-labeling operation
on conditional transition systems with retrospection.
As suggested by the structural operational semantics of \ACPecr\ extended
with condition evaluation and generalized condition evaluation, these
re-labeling operations have to be adapted in a way similar to the way in
which parallel composition had to be adapted to the case with
retrospection in Section~\ref{sect-full-bisim-ACPecr}.
As mentioned before, full retrospective splitting bisimulation models
with domain $\CTSer_{\kappa} / {\rsbisim}$ for $\kappa > \lambda$ are
excluded.

Proposition~\ref{prop-generalization-gceval}, stating that the
generalized $\lambda$-complete condition evaluation operators supersede
the $\lambda$-complete condition evaluation operators in the setting of
\ACPec, goes through in the setting of \ACPecr.

Adding state operators to \ACPecr\ can be done on the same lines as
adding generalized evaluation operators to \ACPecr, but is more
complicated.
Roughly speaking, signal emission can be added to \ACPecr\ in the same
way as it is added to \ACPec\ provided that signals are taken from
$\FinCondAlg$.
No adaptations like for generalized condition evaluation are needed
because signal emission corresponds to condition evaluation that does
not persist over performing an action.
This property also points at one of the differences between the
signal-emission approach to condition evaluation and the other
approaches treated in this paper: retrospection has to be resolved in
the signal-emission approach before condition evaluation can take place.
The case where signals are taken from $\FinCondrAlg$ is expected to be
too complicated to handle.

\section{An Application of \ACPecr}
\label{sect-appl-retro}

The ultimate applications of a process algebra that includes conditional
expressions of some form are the ones that remain entirely within the
domain of process algebra.
Such applications are by their nature extensions as well.
We outline one interesting application of this kind in the setting of
\ACPecr.

We take the set $\set{\just{a} \where a \in \Act}$ of
\emph{last action conditions} as the set of atomic conditions $\ACond$.
The intuition is that $\just{a}$ indicates that action $a$ is performed
just now.
The retrospection operator now allows for using conditions which express
that a certain number of steps ago a certain action must have been
performed.

Because we remain entirely within the domain of process algebra
some additional axioms are needed.
They are given in Table~\ref{axioms-ACPecr-just}.%
\begin{table}[!t]
\lcaption{Additional axioms for last action conditions}{$a \in \Act$}
\label{axioms-ACPecr-just}
\begin{eqntbl}
\begin{axcol}
a \seqc x = a \seqc (\just{a} \gc x)                   & \axiom{J}
\end{axcol}
\end{eqntbl}
\end{table}
Moreover, axioms CM7 (Table~\ref{axioms-ACPec}) and RS7
(Table~\ref{axioms-ACPecr}) must be replaced by axioms CM7J and
RS7Ja--RS7Jb from Table~\ref{axioms-ACPecr-adapt-just}.
\begin{table}[!t]
\lcaption{Axioms adapted to last action conditions}
  {$a,b \in \Actd$, $c \in \Act$}
\label{axioms-ACPecr-adapt-just}
\begin{eqntbl}
\begin{axcol}
a \seqc x \commm b \seqc y =
(a \commm b) \seqc (\laupd{a}{0}(x) \parc \laupd{b}{0}(y))
                                                       & \axiom{CM7J}
\\ {} \\
\rshift{0}(\just{c}) = \retro \just{c}                 & \axiom{RS7Ja}\\
\rshift{n+1}(\just{c}) = \just{c}                      & \axiom{RS7Jb}
\\ {} \\ {} \\
\laupd{a}{n}(\ep) = \ep                                & \axiom{LAU1T}\\
\laupd{a}{n}(b \seqc x) = b \seqc \laupd{a}{n+1}(x)    & \axiom{LAU2}\\
\laupd{a}{n}(x \altc y) =
                 \laupd{a}{n}(x) \altc \laupd{a}{n}(y) & \axiom{LAU3}\\
\laupd{a}{n}(\phi \gc x) =
                \laupd{a}{n}(\phi) \gc \laupd{a}{n}(x) & \axiom{LAU4}
\end{axcol}
\quad
\begin{axcol}
\laupd{a}{n}(\bot) = \bot                              & \axiom{LAU5}\\
\laupd{a}{n}(\top) = \top                              & \axiom{LAU6}\\
\laupd{a}{0}(\just{c}) = \bot     \hfill \mif a \neq c & \axiom{LAU7}\\
\laupd{a}{0}(\just{c}) = \top     \hfill \mif a = c    & \axiom{LAU8}\\
\laupd{a}{n+1}(\just{c}) = \just{c}                    & \axiom{LAU9}\\
\laupd{a}{n}(\bcompl \phi) =
                            \bcompl \laupd{a}{n}(\phi) & \axiom{LAU10}\\
\laupd{a}{n}(\phi \join \psi) =
           \laupd{a}{n}(\phi) \join \laupd{a}{n}(\psi) & \axiom{LAU11}\\
\laupd{a}{n}(\phi \meet \psi) =
           \laupd{a}{n}(\phi) \meet \laupd{a}{n}(\psi) & \axiom{LAU12}\\
\laupd{a}{0}(\retro \phi) = \retro \phi                & \axiom{LAU13}\\
\laupd{a}{n+1}(\retro \phi) =
                             \retro \laupd{a}{n}(\phi) & \axiom{LAU14}
\end{axcol}
\end{eqntbl}
\end{table}
Axiom CM7 must be replaced by axiom CM7J because, after performing
$a \commm b$, it makes no sense to refer back to the actions performed
just now by the processes originally following $a$ and $b$ in the
process following $a \commm b$.
Retrospective conditions in the process originally following $a$ that
indicate that $a$ is performed just now should be evaluated to $\top$
and the ones that indicate that another action is performed just now
should be evaluated to $\bot$.
Retrospective conditions in the process originally following $b$ should
be evaluated analogously.
This is accomplished by the auxiliary operators
$\funct{\laupd{a}{n}}{\Proc}{\Proc}$ and
$\funct{\laupd{a}{n}}{\Cond}{\Cond}$
(for each $a \in \Actd$ and $n \in \Nat$) of which the defining axioms
are LAU1T and LAU2--LAU14 from Table~\ref{axioms-ACPecr-adapt-just}.
Axiom RS7 must be replaced by axioms RS7Ja and RS7Jb because of the
retrospective nature of last action conditions.
We mean by this that $\just{a}$ can be viewed as a condition of the form
$\retro \eta$, where $\eta$ indicates that action $a$ is performed next.
We have not introduced corresponding atomic conditions because their
use without restrictions would be problematic in alternative
composition.

From the axioms of \BPAdecr\ and the additional axiom J, we can
derive the equation
$a \seqc x \altc b \seqc y =
(a \altc b) \seqc (\just{a} \gc x \altc \just{b} \gc y)$.
It can be used to reduce the number of subprocesses of a
process.
For example, %the equation
$a \seqc (a_1 \seqc a_1' \altc a_2 \seqc a_2') \altc
 b \seqc (b_1 \seqc b_1' \altc b_2 \seqc b_2') =
(a \altc b) \seqc
(\just{a} \gc
 (a_1 \altc a_2) \seqc (\just{a_1} \gc a_1' \altc \just{a_2} \gc a_2')
  \altc
 \just{b} \gc
 (b_1 \altc b_2) \seqc (\just{b_1} \gc b_1' \altc \just{b_2} \gc b_2'))$
shows a reduction from 7 subprocesses to 4 subprocesses.

In order to obtain the full retrospective splitting bisimulation models
of the extension of \ACPecr\ with last action conditions, retrospective
splitting bisimilarity has to be adapted:
in the definition of retrospective splitting bisimulation
(see Section~\ref{sect-full-bisim-ACPecr}), the two occurrences of
$B(s_1',\retro \alpha',s_2')$ must be replaced by
$B(s_1',\retro \alpha' \meet \just{a},s_2')$.

The operators $\laupd{a}{n}$ are reminiscent of the operators
$\rupd{h}{n}$.
In fact, if we would exclude full retrospective splitting bisimulation
models with domain $\CTSer_{\kappa} / {\rsbisim}$ for $\kappa$ greater
than some infinite cardinal $\lambda$,\, $\laupd{a}{n}$ could have been
replaced by $\rupd{{h_a}}{n}$, where $h_a \in \CondrHom{\lambda}$ for
$a \in \Act$ is defined by $h_a(\just{a}) = \top$ and
$h_a(\just{b}) = \bot$ if $a \neq b$ and
$h_\dead \in \CondrHom{\lambda}$ is defined by
$h_\dead(\just{a}) = \bot$.
% Clearly, only one action can be the last action performed.
% Therefore, we are in this case only interested in endomorphisms $h$
% such that $h(\just{a}) \meet h(\just{b}) = \bot$ if $a \neq b$.

We conclude with an example of the use of the retrospection operator
together with last action conditions.
\begin{example}
\label{example-service}
The example concerns a service that resembles the services considered
in~\cite{BM04d,BM05c}.
For any command $m$ from some set $M$, the service can be requested to
process command $m$ and it can be requested to report back what the
reply would be to the request to process command $m$.
We suppose that the service can be described by a function
$\funct{F}{\neseqof{M}}{\set{\True,\False,\Blocked}}$ with the property
that $F(\alpha) = \Blocked \Then F(\alpha \conc \seq{m}) = \Blocked$.
This function is called the \emph{reply} function of the service.
Given a reply function $F$ and a command $m$, the derived reply function
of $F$ after processing $m$, written $\derive{m}F$, is defined by
$\derive{m}F(\alpha) = F(\seq{m} \conc \alpha)$.
The connection between a reply function $F$ and the service described
by it can be understood as follows:
\begin{itemize}
\item
if $F(\seq{m}) \neq \Blocked$, the request to process command $m$ is
accepted by the service, the reply is $F(\seq{m})$ and the service
proceeds as described by $\derive{m}F$;
\item
if $F(\seq{m}) = \Blocked$, the request to process command $m$ is not
accepted by the service, the reply is $F(\seq{m})$ and the service
proceeds as described by $F$;
\item
the request to report back what the reply would be to the request to
process command $m$ is always accepted by the service, the reply is
$F(\seq{m})$ and the service proceeds as described by $F$.
\end{itemize}
Hence, the service can be viewed as the process defined by the guarded
recursive specification that consists of an equation
\begin{ldispl}
P_{G} =
\Altc{m \in M}
  (r(m) \altc r(?m)) \seqc s(G(\seq{m})) \seqc
   (\cond{P_{\derive{m} G}}
         {\retro \just{r(m)} \meet
          \bcompl{\just{s(\Blocked)}}}
         {P_{G}})
% \\ \phantom{P_{G} = {}} {} \altc \ep
\end{ldispl}
for each reply function $G$.
Here, we write $r(m)$ for the action of receiving a request to process
command $m$, $r(?m)$ for the action of receiving a request to report
back what the reply would be to the request to process command $m$, and
$s(v)$ for the action of sending reply $v$.
\end{example}

\section{Concluding Remarks}
\label{sect-conclusions}

We have added the empty process constant to the different extensions of
\ACP\ with conditional expressions presented in~\cite{BM05a}.
In the past, the addition of the empty process constant to \ACP\ was
rather problematic.
Its current addition to the different extensions of \ACP\ with
conditional expressions presented in~\cite{BM05a} turns out to present
no additional complications.

The addition of the empty process constant to different extensions of
\ACP\ in this paper is based on the treatment of the empty process
constant in the setting of \ACP\ that is chosen in~\cite{BG87c}.
If it was based on the treatment of the empty process constant chosen
in~\cite{Vra97a} instead, the addition of the empty process constant to
different extensions of \ACP\ in this paper would have been slightly
different.
For example, with the treatment from~\cite{BG87c}, no special additional
axioms concerning conditional expressions are needed when adding the
empty process constant, whereas with the treatment from~\cite{Vra97a},
the special additional axiom $\ep \leftm (\phi \gc \ep) = \phi \gc \ep$
is needed.

In~\cite{BM05c}, we showed that threads, as found in programming
languages such as Java and C\#, and services used by them can be viewed
as processes that are definable over \ACPc, and that thread-service
composition on those processes can be expressed in terms of operators of
\ACPc\ extended with action renaming.
In fact, the termination behaviour of the composition of a thread with
the services used by it can be dealt with more directly, and without
action renaming, in \ACPec.

\bibliographystyle{splncs03}
\bibliography{PA}

\begin{thebibliography}{10}
\providecommand{\url}[1]{\texttt{#1}}
\providecommand{\urlprefix}{URL }

\bibitem{BB88}
Baeten, J.C.M., Bergstra, J.A.: Global renaming operators in concrete process
  algebra. Information and Control  78(3),  205--245 (1988)

\bibitem{BB92c}
Baeten, J.C.M., Bergstra, J.A.: Process algebra with signals and conditions.
  In: Broy, M. (ed.) Programming and Mathematical Methods. NATO ASI Series,
  vol. F88, pp. 273--323. Springer-Verlag (1992)

\bibitem{BB94b}
Baeten, J.C.M., Bergstra, J.A.: Process algebra with propositional signals.
  Theoretical Computer Science  177,  381--405 (1997)

\bibitem{BB98a}
Baeten, J.C.M., Bergstra, J.A.: Deadlock behaviour in split and {ST}
  bisimulation. In: Castellani, I., Palamidessi, C. (eds.) EXPRESS'98.
  Electronic Notes in Theoretical Computer Science, vol.~16, pp. 101--114.
  Elsevier (1998)

\bibitem{BG87c}
Baeten, J.C.M., van Glabbeek, R.J.: Merge and termination in process algebra.
  In: Nori, K.V. (ed.) Proceedings 7th Conference on Foundations of Software
  Technology and Theoretical Computer Science. Lecture Notes in Computer
  Science, vol. 287, pp. 153--172. Springer-Verlag (1987)

\bibitem{BW90}
Baeten, J.C.M., Weijland, W.P.: Process Algebra, Cambridge Tracts in
  Theoretical Computer Science, vol.~18. Cambridge University Press, Cambridge
  (1990)

\bibitem{BK84b}
Bergstra, J.A., Klop, J.W.: Process algebra for synchronous communication.
  Information and Control  60(1--3),  109--137 (1984)

\bibitem{BM03a}
Bergstra, J.A., Middelburg, C.A.: Process algebra for hybrid systems.
  Theoretical Computer Science  335(2--3),  215--280 (2005)

\bibitem{BM04d}
Bergstra, J.A., Middelburg, C.A.: A thread algebra with multi-level strategic
  interleaving. In: Cooper, S.B., L{\"{o}}we, B., Torenvliet, L. (eds.) CiE
  2005. Lecture Notes in Computer Science, vol. 3526, pp. 35--48.
  Springer-Verlag (2005)

\bibitem{BM05a}
Bergstra, J.A., Middelburg, C.A.: Splitting bisimulations and retrospective
  conditions. Information and Computation  204(7),  1083--1138 (2006)

\bibitem{BM05c}
Bergstra, J.A., Middelburg, C.A.: Thread algebra with multi-level strategies.
  Fundamenta Informaticae  71(2--3),  153--182 (2006)

\bibitem{BMU98a}
Bergstra, J.A., Middelburg, C.A., Usenko, Y.S.: Discrete time process algebra
  and the semantics of {SDL}. In: Bergstra, J.A., Ponse, A., Smolka, S.A.
  (eds.) Handbook of Process Algebra, pp. 1209--1268. Elsevier, Amsterdam
  (2001)

\bibitem{CK90a}
Chang, C.C., Keisler, H.J.: Model Theory, Studies in Logic and the Foundations
  of Mathematics, vol.~73. Elsevier, Amsterdam, third edn. (1990)

\bibitem{GP94a}
Groote, J.F., Ponse, A.: Proof theory for {$\mu$CRL}: A language for processes
  with data. In: Andrews, D.J., Groote, J.F., Middelburg, C.A. (eds.) Semantics
  of Specification Languages. pp. 232--251. Workshops in Computing Series,
  Springer-Verlag (1994)

\bibitem{Hal63a}
Halmos, P.R.: Lectures on Boolean Algebras. Mathematical Studies, Van Nostrand,
  Princeton, NJ (1963)

\bibitem{KV85a}
Koymans, C.P.J., Vrancken, J.L.M.: Extending process algebra with the empty
  process $\epsilon$. Logic Group Preprint Series~1, Department of Philosophy,
  Utrecht University, Utrecht (1985)

\bibitem{MB89a}
Monk, J.D., Bonnet, R. (eds.): Handbook of Boolean Algebras, vol.~1. Elsevier,
  Amsterdam (1989)

\bibitem{Sho67a}
Shoenfield, J.R.: Mathematical Logic. Addison-Wesley Series in Logic,
  Addison-Wesley, Reading, MA (1967)

\bibitem{Vra97a}
Vrancken, J.L.M.: The algebra of communicating processes with empty process.
  Theoretical Computer Science  177(2),  287--328 (1997)

\end{thebibliography}

% \par \vfill \par \noindent DRAFT of \today

\end{document}